\newtheorem{theorem}{Theorem}
\newtheorem{corollary}{Corollary}
\newtheorem{definition}{Definition}
\newtheorem{observation}{Observation}
\newtheorem{lemma}{Lemma}
\newtheorem{notation}{Notation}
\newtheorem{remark}{Remark}
\newenvironment{proof}[1][Proof]{\noindent\textbf{#1.} }{\ \rule{0.5em}{0.5em}}
\begin{document}

\title{A Simple Polynomial Algorithm for the \\
Longest Path Problem on Cocomparability Graphs}
\author{George B. Mertzios\thanks{%
Department of Computer Science, RWTH Aachen University, Aachen, Germany. 
Email:~\texttt{mertzios@cs.rwth-aachen.de}} 
\and Derek G. Corneil\thanks{%
Department of Computer Science, University of Toronto, Toronto, Canada.
Email: \texttt{dgc@cs.utoronto.ca}} \thanks{%
The second author wishes to thank the Natural Sciences and Engineering
Research Council of Canada for financial assistance.\vspace{-0.5cm}}}
\date{\vspace{-0.7cm}}
\maketitle

\begin{abstract}
Given a graph $G$, the longest path problem asks to compute a simple path of 
$G$ with the largest number of vertices. This problem is the most natural
optimization version of the well known and well studied Hamiltonian path
problem, and thus it is NP-hard on general graphs. However, in contrast to
the Hamiltonian path problem, there are only few restricted graph families
such as trees and some small graph classes where polynomial algorithms for
the longest path problem have been found. Recently it has been shown that
this problem can be solved in polynomial time on interval graphs by applying
dynamic programming to a characterizing ordering of the vertices of the
given graph~\cite{longest-int-algo}, thus answering an open question. In the
present paper, we provide the first polynomial algorithm for the longest
path problem on a much greater class, namely on cocomparability graphs. Our
algorithm uses a similar --~but essentially simpler~-- dynamic programming
approach, which is applied to a \emph{Lexicographic Depth First Search (LDFS)%
} characterizing ordering of the vertices of a cocomparability graph.
Therefore, our results provide evidence that this general dynamic
programming approach can be used in a more general setting, leading to
efficient algorithms for the longest path problem on greater classes of
graphs. LDFS has recently been introduced in~\cite{Corneil-LDFS08}. Since
then, a similar phenomenon of extending an existing interval graph algorithm
to cocomparability graphs by using an LDFS preprocessing step has also been
observed for the minimum path cover problem~\cite{Corneil-MPC}. Therefore,
more interestingly, our results also provide evidence that cocomparability
graphs present an interval graph structure when they are considered using 
an LDFS ordering of their vertices, which may lead to other new and more
efficient combinatorial algorithms.\newline

\noindent \textbf{Keywords:} Cocomparability graphs, longest path problem,
lexicographic depth first search, dynamic programming.
\end{abstract}

\section{Introduction\label{sec:intro}}

The Hamiltonian path problem, i.e.~the problem of deciding whether a graph $G
$ contains a simple path that visits each vertex of $G$ exactly once, is one
of the most well known NP-complete problems, with numerous applications. The
most natural optimization version of this problem is the \emph{longest path}
problem, that is, to compute a simple path of maximum length, or
equivalently, to find a maximum induced subgraph which is Hamiltonian. Even
if a graph itself is not Hamiltonian, it makes sense in several applications
to search for a longest path. However, computing a longest path seems to be
more difficult than deciding whether or not a graph admits a Hamiltonian
path. Indeed, it has been proved that even if a graph is Hamiltonian, the
problem of computing a path of length $n-n^{\varepsilon}$ for any $%
\varepsilon <1$ is NP-hard, where $n$ is the number of vertices of the input
graph~\cite{Karger97}. Moreover, there is no polynomial-time constant-factor
approximation algorithm for the longest path problem unless P=NP~\cite{Karger97}.

The Hamiltonian path problem (as well as many of its variants, e.g.~the Hamiltonian cycle problem) 
is known to be NP-complete on general graphs~\cite{GaJoTarj79}; 
furthermore, it remains NP-complete even when the input is restricted to
some small classes of graphs, such as split graphs~\cite{Golumbic04},
chordal bipartite graphs, split strongly chordal graphs~\cite{Muller96},
directed path graphs~\cite{Naras89}, circle graphs~\cite{Damasc89}, planar
graphs~\cite{GaJoTarj79}, and grid graphs~\cite{ItaiPap82}. 
On other restricted families of graphs, however, considerable success has been 
achieved in finding polynomial time algorithms for the Hamiltonian path problem. 
In particular, this problem can be solved polynomially on proper interval
graphs~\cite{Bertossi83}, interval graphs~\cite{Arikati90,Keil85}, 
and cocomparability graphs~\cite{Damasc92} 
(see~\cite{Brandstaedt99,Golumbic04} for definitions of these and other graph classes mentioned in the paper).

In contrast to the Hamiltonian path problem, there are only a few known
polynomial algorithms for the longest path problem, and, until recently, 
these were restricted to trees~\cite{Bult02}, weighted trees and block 
graphs~\cite{Uehara04}, bipartite permutation graphs~\cite{UeharaValiente07}, 
and ptolemaic graphs~\cite{TakaharaUehara08}. 
In~\cite{Uehara04} the question was raised whether the problem could be
solved in polynomial time for a much larger class, namely interval graphs.
Very recently such an algorithm has been discovered~\cite{longest-int-algo}. 
This algorithm applies dynamic programming to the vertex 
ordering of the given interval graph that is obtained after sorting the
intervals according to their right endpoints. 
Another natural generalization of the Hamiltonian path problem is 
the \emph{minimum path cover} problem, where the goal is to cover each vertex 
of the graph exactly once using the smallest number of simple paths. 
Clearly a solution to either the longest path or the minimum path cover problems 
immediately yields a solution to the Hamiltonian path problem.
Unlike the situation for the longest path problem, polynomial time
algorithms for the Hamiltonian cycle~\cite{Keil85}, Hamiltonian path, and
minimum path cover~\cite{Arikati90} problems on interval graphs have been
available since the 1980s and early 1990s.

Cocomparability graphs (i.e.~graphs, the complements of which can be
transitively oriented) strictly contain interval and permutation graphs~\cite{Brandstaedt99} 
and have also been studied with respect to various Hamiltonian problems. 
In particular, it is well known that the Hamiltonian path and Hamiltonian cycle
problems~\cite{Deogun94}, as well as the minimum path cover problem
(referred to as the Hamiltonian path completion problem~\cite{Damasc92}) are
polynomially solvable on cocomparability graphs. On the other hand, the
complexity status of the longest path problem on cocomparability graphs
--~and even on the smaller class of permutation graphs~-- has long been open.

Until recently, the only polynomial algorithms for the Hamiltonian path,
Hamiltonian cycle~\cite{Deogun94}, and minimum path cover~\cite{Damasc92}
problems on cocomparability graphs exploited the relationship between these
problems and the bump number of a poset representing the transitive
orientation of the complement graph. 
Furthermore, it had long been an open question whether there are algorithms
for these problems that, as with the interval graph algorithms, are based on
the structure of cocomparability graphs. This question has been recently
answered by the algorithm in~\cite{Corneil-MPC},
which solves the minimum path cover problem on cocomparability graphs by
building off the corresponding algorithm for interval graphs~\cite{Arikati90}
and using a preprocessing step based on the recently discovered \emph{%
Lexicographic Depth First Search (LDFS)}~\cite{Corneil-LDFS08}.

In the present paper we provide the first polynomial algorithm for the
longest path problem on cocomparability graphs (and thus also on permutation
graphs). 
Our algorithm develops a similar --~but much simpler~-- dynamic programming
approach to that of~\cite{longest-int-algo}, which is applied to a \emph{%
Lexicographic Depth First Search (LDFS)} characterizing ordering of the
vertices of a cocomparability graph (see~\cite{Corneil-LDFS08,Corneil-MPC}).
As a byproduct, this algorithm solves also the longest path problem on
interval graphs in a much simpler way than that of~\cite{longest-int-algo}
(the algorithm of~\cite{longest-int-algo} consists of three phases, during
which it introduces several dummy vertices to construct a second auxiliary
graph). Furthermore, these results provide evidence that this general
dynamic programming approach can be used in a more general setting,
providing efficient algorithms for the longest path problem on greater
classes of graphs. As already mentioned above, a similar phenomenon of
extending an existing interval graph algorithm to cocomparability graphs by
using an LDFS preprocessing step has also been observed for the minimum path
cover problem~\cite{Corneil-MPC}. Therefore, more interestingly, our results
also provide evidence that cocomparability graphs present an interval graph
structure when they are considered using an LDFS ordering of their vertices,
which may lead to other new and more efficient combinatorial algorithms.

\vspace{-0.2cm}
\paragraph{Organization of the paper}

In Section~\ref{prelim} we provide the necessary preliminaries and notation,
including vertex ordering characterizations of interval graphs,
cocomparability graphs, and LDFS orderings. In Section~\ref{sec:normal} we
study the effect of an LDFS preprocessing step on the vertex ordering
characterization of cocomparability graphs. This section provides much of
the structural foundation for our longest path algorithm that is presented
in Section~\ref{sec:longest}. Finally, we discuss the presented results and
further research in Section~\ref{conclusions}.

\section{Preliminaries and notation\label{prelim}}

In this article we follow standard notation and terminology, see for
instance~\cite{Golumbic04}. We consider finite, simple, and undirected graphs.
Given a graph $G=(V,E)$, we denote by $n$ the cardinality of $V$. An edge
between vertices $u$ and $v$ is denoted by $uv$, and in this case vertices $u
$ and $v$ are said to be \emph{adjacent}. $\overline{G}$ denotes the \emph{%
complement} of $G$, i.e.~$\overline{G}=(V,\overline{E})$, where $uv\in 
\overline{E}$ if and only if $uv\notin E$. Let $S\subseteq V$ be a set of
vertices of~$G$. Then, the cardinality of $S$ is denoted by~$|S|$ and the
subgraph of $G$ \emph{induced} by $S$ is denoted by $G[S]$, i.e.~$G[S]=(S,F)$%
, where for any two vertices $u,v\in S$, $uv\in F$ if and only if $uv\in E$.
The set ${N(v)=\{u \in V \ | \ uv \in E\}}$ is called the \emph{neighborhood}
of the vertex $v \in V$ in $G$.

A \emph{simple path} $P$ of a graph $G$ is a sequence of distinct vertices $%
v_{1}, v_{2}, \ldots, v_{k}$ such that~${v_{i}v_{i+1} \in E}$, for each $%
i\in\{1,2,\ldots,k-1\}$, and is denoted by $P=(v_{1}, v_{2}, \ldots, v_{k})$%
; throughout the paper all paths considered are simple. Furthermore, $v_{1}$
(resp.~$v_{k}$) is called the \emph{first} (resp.~\emph{last}) vertex of $P$. 
We denote by $V(P)$ the set of vertices of the path $P$, and define the 
\emph{length} $|P|$ of $P$ to be the number of vertices in $P$, i.e.~${%
|P|=|V(P)|}$. Additionally, if~${P=(v_{1},v_{2},\ldots,v_{i-1},v_{i},%
\ldots,v_{j},v_{j+1},\ldots,v_{k})}$ is a path of a graph and ${%
P_{0}=(v_{i},\ldots,v_{j})}$ is a subpath of $P$, we sometimes equivalently
use the notation~${P=(v_{1},v_{2},\ldots,v_{i-1},P_{0},v_{j+1},\ldots,v_{k})}$.

Recall that \emph{interval} graphs are the intersection graphs of closed
intervals on the real line. Furthermore, a \emph{comparability} graph is a
graph whose edges can be transitively oriented (i.e.~if~$x \rightarrow y$ and~$y
\rightarrow z$, then~$x \rightarrow z$); a \emph{cocomparability} graph $G$
is a graph whose complement~$\overline{G}$ is a comparability graph. \emph{%
Permutation} graphs are exactly the intersection of comparability and
cocomparability graphs. Moreover, cocomparability graphs strictly contain
interval graphs and permutation graphs, as well as other families of graphs
such as trapezoid graphs and cographs~\cite{Brandstaedt99}. 

\subsection{Vertex ordering characterizations\label{orderings-subsection}}

We now state vertex ordering characterizations of interval graphs, of
cocomparability graphs and of any ordering of the vertex set $V$ that can
result from an LDFS search of an arbitrary graph $G=(V,E)$. The following
ordering characterizes interval graphs and has appeared in a number of
papers including~\cite{Olariu91}.

\begin{lemma}[\hspace{-0.0005cm}\protect\cite{Olariu91}]
\label{int-order} $G=(V,E)$ is an interval graph if and only if there is an
ordering (called an~\emph{I-ordering}) of $V$ such that for all $x < y < z$,
if $xz \in E$ then also $xy \in E$.
\end{lemma}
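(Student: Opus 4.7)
The plan is to prove both directions directly, using an interval model in one direction and constructing one in the other.

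For the forward direction ($\Rightarrow$), I would start from any interval representation $\{I_v = [l_v, r_v]\}_{v \in V}$ of $G$ and take the vertex ordering given by non-decreasing left endpoints (breaking ties arbitrarily). Given $x < y < z$ with $xz \in E$, the inequalities $l_x \le l_y \le l_z$ together with $I_x \cap I_z \neq \emptyset$ force $l_z \le r_x$, so $l_y \le l_z \le r_x$ and simultaneously $l_x \le l_y \le r_y$. Hence the point $l_y$ lies in both $I_x$ and $I_y$, giving $xy \in E$. This part is essentially a one-line observation once the right ordering is chosen; right-endpoint orderings do not work in general, so choosing the left-endpoint ordering is the key move.

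For the backward direction ($\Leftarrow$), I would construct an interval representation from any ordering $v_1 < v_2 < \ldots < v_n$ satisfying the condition. To each $v_i$ I would associate the interval $I_i = [i, r(i)]$, where $r(i) = \max(\{i\} \cup \{j : j > i \text{ and } v_i v_j \in E\})$. It then remains to check that $v_i v_j \in E$ if and only if $I_i \cap I_j \neq \emptyset$. The ``only if'' direction is immediate: for $i < j$ with $v_i v_j \in E$ we have $r(i) \ge j$, so $j \in I_i \cap I_j$. For the ``if'' direction, suppose $i < j$ and $I_i \cap I_j \neq \emptyset$; since $I_j$ starts at $j$, this forces $r(i) \ge j$, so there exists $k \ge j$ with $v_i v_k \in E$. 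If $k = j$ we are done, and otherwise $i < j < k$ with $v_i v_k \in E$, whence the I-ordering property produces $v_i v_j \in E$.

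The only real obstacle lies in the backward direction: one has to pick the right endpoints carefully enough that spurious intersections cannot occur, and then invoke the ordering property exactly once to turn a ``long-range'' adjacency into the desired short one. The rest is bookkeeping on endpoints, so the proof boils down to committing to the left-endpoint sort on one side and to the ``last neighbour to the right'' interval construction on the other.
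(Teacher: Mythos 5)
Your proof is correct. The paper itself gives no proof of this lemma (it is quoted from~\cite{Olariu91}), but your forward direction is exactly the construction the paper alludes to in the remark following the lemma (sorting an interval representation by left endpoints), and your backward direction, building intervals $[i,r(i)]$ with $r(i)$ the rightmost neighbour to the right, is a correct and standard way to realize an I-ordering as an interval model; both verifications go through as you describe.
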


Note that the characterization of Lemma~\ref{int-order} can also result after sorting 
the intervals of an interval representation of $G$ according to their left endpoints. 
Furthermore, note that some papers on interval graphs 
(for instance~\cite{Arikati90,Damasc93,longest-int-algo}) 
used the equivalent ``reverse'' vertex ordering, which results after sorting 
the intervals of an interval representation according to their right endpoints. 

A similar characterization of \emph{unit interval} graphs 
(also known as \emph{proper interval} graphs) requires that if $xz\in E$ then both $xy,yz \in E$. 
It was observed in~\cite{KratschStewart93} that the following generalization of the interval order
characterization captures cocomparability graphs.

\begin{lemma}[\hspace{-0.0005cm}\protect\cite{KratschStewart93}]
\label{cocomp-order} $G=(V,E)$ is a cocomparability graph if and only if
there is an ordering (called an \emph{umbrella-free ordering}, or a \emph{CO-ordering}) 
of $V$ such that for all $x < y < z$, if~$xz \in E$ then $xy \in E$
or $yz \in E$ (or both).
\end{lemma}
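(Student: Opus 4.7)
The plan is a direct two-way translation between the transitive orientation of $\overline{G}$ and the umbrella-free ordering of $V$, going through a linear extension in one direction and a sort-by-index orientation in the other.

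For the forward direction, I would start with a transitive orientation $F$ of $\overline{G}$, which exists by hypothesis. Think of $F$ as a partial order on $V$ (with incomparable pairs corresponding to edges of $G$), and take any linear extension $<$ of that partial order. To verify the umbrella-free property, take $x<y<z$ with $xz \in E$, and argue by contradiction: if both $xy \notin E$ and $yz \notin E$, then $xy$ and $yz$ are edges of $\overline{G}$, and since the linear extension respects $F$, they must be oriented as $x \to y$ and $y \to z$ in $F$. Transitivity of $F$ then forces $x \to z$ in $\overline{G}$, contradicting $xz \in E$.

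For the backward direction, I would take an umbrella-free ordering $<$ and orient every edge of $\overline{G}$ from the smaller to the larger endpoint with respect to $<$; call this orientation $F$. This is acyclic since it is consistent with a linear order, so it only remains to check transitivity. Suppose $x \to y$ and $y \to z$ in $F$; by construction $x<y<z$, and both $xy$ and $yz$ are non-edges of $G$. If $xz$ were an edge of $G$, the umbrella-free property would force $xy \in E$ or $yz \in E$, a contradiction. Hence $xz$ is an edge of $\overline{G}$, and because $x<z$ it is oriented $x \to z$ in $F$, giving transitivity and therefore showing $\overline{G}$ is a comparability graph.

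There is no real obstacle in this argument: the only subtle point is remembering that when passing from a transitive orientation $F$ to a linear extension $<$, the linear extension only needs to respect the orientation of the edges that actually lie in $\overline{G}$, and conversely that the sort-by-index orientation only assigns a direction to the edges of $\overline{G}$ and leaves edges of $G$ unoriented. Once that is kept straight, both implications reduce to a single application of the umbrella-free condition (or of transitivity) to the triple $x<y<z$.
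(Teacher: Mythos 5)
Your proof is correct: both directions are the standard translation between a transitive orientation of $\overline{G}$ (via a linear extension) and an umbrella-free ordering (via the sort-by-index orientation), and each step checks out. The paper itself states this lemma as a known result cited from the literature and gives no proof, so there is nothing to compare against; your argument is essentially the canonical one from the cited source.
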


\begin{observation}
\label{interval-umbrella-free-obs} An I-ordering of an interval graph $G$ is
also an umbrella-free ordering.
\end{observation}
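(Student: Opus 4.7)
The plan is to observe that the defining condition of an I-ordering (Lemma~\ref{int-order}) is strictly stronger than the defining condition of an umbrella-free ordering (Lemma~\ref{cocomp-order}), so the implication is immediate on the level of logical formulas. Concretely, I would fix an arbitrary triple $x<y<z$ in the given I-ordering of $G$ with $xz\in E$, and then invoke the I-ordering property to conclude $xy\in E$.

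From $xy\in E$ alone, the disjunction ``$xy\in E$ or $yz\in E$'' holds trivially, which is exactly what is required by the umbrella-free characterization of Lemma~\ref{cocomp-order}. Since the triple $x<y<z$ was arbitrary, the same ordering of $V$ witnesses the umbrella-free property, so it is a CO-ordering of $G$ (and in particular, this re-confirms the known containment of interval graphs in cocomparability graphs).

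There is no real obstacle here: the observation is a one-line consequence of the fact that the universally quantified implication characterizing interval graphs has a conclusion ($xy\in E$) that logically entails the weaker conclusion ($xy\in E\lor yz\in E$) characterizing cocomparability graphs. The only thing worth being careful about is that both lemmas use the \emph{same} direction of the ordering (ascending $x<y<z$), which is indeed the case as stated in Lemmas~\ref{int-order} and~\ref{cocomp-order}; no reversal of the vertex ordering is needed.
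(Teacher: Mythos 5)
Your proposal is correct and is exactly the intended argument: the paper states this as an Observation with no written proof precisely because the I-ordering condition of Lemma~\ref{int-order} ($xy\in E$) trivially implies the weaker disjunctive condition of Lemma~\ref{cocomp-order} ($xy\in E$ or $yz\in E$) for every triple $x<y<z$ with $xz\in E$. Your remark that both characterizations use the same direction of the ordering is also the right point of caution, and it checks out.
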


In the following, we present the notion of the recently introduced 
Lexicographic Depth First Search (LDFS) ordering (see~\cite{Corneil-LDFS08}).

\begin{definition}
\label{good-def} Let $G=(V,E)$ be a graph and $\sigma$ be any ordering of $V$. 
Let $(a,b,c)$ be a triple of vertices of $G$ such that $a <_{\sigma} b
<_{\sigma} c$, $ac \in E$, and $ab \notin E$. If there exists a vertex $d\in
V$ such that $a <_{\sigma} d <_{\sigma} b$, $db \in E$, and $dc \notin E$,
then $(a,b,c)$ is a \emph{good triple}; otherwise it is a \emph{bad triple}. 
Furthermore, if the triple $(a,b,c)$ is good, then vertex $d$ is called a 
\emph{$d$-vertex} of this triple.
\end{definition}

\begin{definition}
\label{ldfs-order} Let $G=(V,E)$ be a graph. An ordering $\sigma$ of $V$ is
a \emph{Lexicographic Depth First Search (LDFS) ordering} if and only if $%
\sigma$ has no bad triple.
\end{definition}

An example of a good triple $(a,b,c)$ and a $d$-vertex of it is depicted
in Figure~\ref{good-triple-fig}. In this example, the edges $ac$ and $db$
are indicated with solid lines, while the non-edges $ab$ and $dc$ are
indicated with dashed lines. Furthermore, the $d$-vertex is drawn gray for
better visibility.

\begin{figure}[h]
\centering
\includegraphics[scale=0.9]{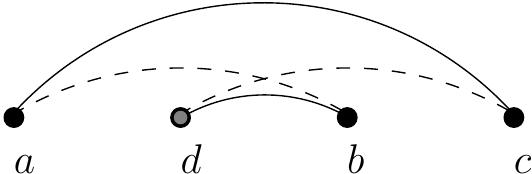}
\caption{A good triple $(a,b,c)$ and a $d$-vertex of this triple, in the
vertex ordering $\protect\sigma=(a,d,b,c)$.}
\label{good-triple-fig}
\end{figure}

\subsection{Algorithms}
\label{prelim-algorithms}

In the following we present
 the generic LDFS algorithm (Algorithm~\ref{ldfs-alg}) 
that starts at a distinguished vertex $u$. This algorithm has
been recently introduced in~\cite{Corneil-LDFS08}. It looks superficially
similar to the well known and well studied \emph{Lexicographic Breadth First
Search (LBFS)}~\cite{Rose-LBFS} (for a survey, see~\cite{Corneil-LBFS-survey}); 
nevertheless, it appears that vertex orderings computed by the LDFS and
by the LBFS have inherent structural differences. Briefly, the
generic LDFS algorithm proceeds as follows. Initially, the label $\varepsilon
$ is assigned to all vertices. Then, iteratively, an unvisited vertex~$v$
with lexicographically maximum label is chosen and removed from the graph.
If $v$ is chosen as the $i$th vertex, then each of its neighbors that are
still unnumbered have their label updated by having the digit $i$ \emph{prepended} 
to their label. The digits in the label of any vertex are always
in decreasing order, which ensures that all neighbors of the last chosen
vertex have a lexicographically greater label than its non-neighbors. By
extension, this argument ensures that all vertices are visited in a
depth-first search order. When applied to a graph with $n$ vertices and~$m$~edges, 
Algorithm~\ref{ldfs-alg} can be implemented to run in $O(\min\{n^{2}, n + m \log n\})$ 
time~\cite{Krueger05}; however, the current fastest implementation runs in 
$O(\min\{n^2, n+m \log\log n\})$~\cite{Spinrad-submitted}.

\begin{algorithm}[htb]
\caption{LDFS($G,u$)~\cite{Corneil-LDFS08}} \label{ldfs-alg}
\begin{algorithmic}[1]
\REQUIRE{A connected graph $G=(V,E)$ with $n$ vertices and a distinguished vertex $u$ of $G$}
\ENSURE{An LDFS ordering $\sigma_u$ of the vertices of $G$}

\medskip

\STATE{Assign the label $\varepsilon$ to all vertices}
\STATE{$label(u) \leftarrow \{0\}$}

\FOR{$i=1$ to $n$}
     \STATE{Pick an unnumbered vertex $v$ with the lexicographically largest label} \label{ldfs-alg-star}
     \STATE{$\sigma_u(i) \leftarrow v$} \COMMENT{assign to $v$ the number $i$}
     \FOR{each unnumbered vertex $w\in N(v)$}
          \STATE{prepend $i$ to $label(w)$}
     \ENDFOR
\ENDFOR

\medskip

\RETURN{the ordering $\sigma_{u} = (\sigma_{u}(1),\sigma_{u}(2),\ldots,\sigma_{u}(n))$} 
\end{algorithmic}
\end{algorithm}

The execution of the LDFS algorithm is captured in the example shown in
Figure~\ref{ldfs-fig}. In this example, suppose that the LDFS algorithm
starts at vertex $e$. Suppose that LDFS chooses vertex $d$ next. Now,
ordinary DFS could choose either $a$ or $c$ next, but LDFS has to choose $c$, 
since it has a greater label ($c$ was a neighbor of the previously visited
vertex $e$). The vertex following $c$ in the LDFS ordering $\sigma_{e}$ must
be $a$ rather than $b$, since $a$ has a greater label than $b$ ($a$~was a
neighbor of the previously visited vertex $d$). The LDFS then backtracks to $%
b$ completing the LDFS ordering as $\sigma_{e} = (e,d,c,a,b)$.

\begin{figure}[h]
\centering
\includegraphics[scale=0.9]{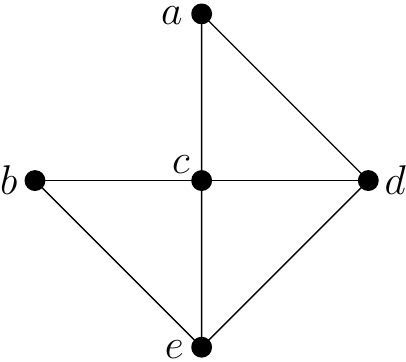}
\caption{Illustrating LDFS.}
\label{ldfs-fig}
\end{figure}

It is important here to connect the vertex ordering $\sigma_u$ that is
returned by the LDFS algorithm (i.e.~Algorithm~\ref{ldfs-alg}) with the
notion of an LDFS ordering, as defined in Definition~\ref{ldfs-order}. The
following theorem shows that a vertex ordering $\sigma$ of an arbitrary
graph $G$ can be returned by an application of the LDFS algorithm to $G$
(starting at some vertex $u$ of $G$) if and only if $\sigma$ is an LDFS
ordering.

\vspace{-0.1cm}
\begin{theorem}[\hspace{-0.0005cm}\protect\cite{Corneil-LDFS08}]
For an arbitrary graph $G=(V,E)$, an ordering $\sigma$ of $V$ can be
returned by an application of Algorithm~\ref{ldfs-alg} to $G$ if and only if 
$\sigma$ is an LDFS ordering.
\end{theorem}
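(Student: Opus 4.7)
My plan is to reduce the theorem to a single symmetric-difference computation on the labels maintained by Algorithm~\ref{ldfs-alg}. The first step is to observe that, at the moment step $\beta$ is about to execute in a run producing $\sigma=(v_1,\ldots,v_n)$, the label of any still-unnumbered vertex $x$ encodes exactly the set $L(x,\beta) := \{i < \beta : v_i x \in E\}$ written in decreasing order; this holds because the digit $i$ prepended at step $i$ is larger than every digit prepended earlier. Consequently, the lex comparison of two such labels is decided by the largest index in the symmetric difference $L(x,\beta)\triangle L(y,\beta)$: whichever label contains that index is the lex-larger one.

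With this correspondence in place, the two directions become dual cases of the same argument. For the forward direction I assume $\sigma$ is returned by some run of LDFS and suppose, for contradiction, that $(a,b,c)$ is a bad triple with positions $\alpha<\beta<\gamma$. The edge $ac$ and non-edge $ab$ yield $\alpha\in L(c,\beta)\setminus L(b,\beta)$; because the algorithm picked $b$ over $c$ at step $\beta$, the largest index $\delta$ of $L(b,\beta)\triangle L(c,\beta)$ must lie in $L(b,\beta)\setminus L(c,\beta)$, and therefore $\delta>\alpha$. Then $v_\delta$ satisfies $\alpha<\delta<\beta$, $v_\delta b\in E$ and $v_\delta c\notin E$, i.e.~$v_\delta$ is a $d$-vertex of $(a,b,c)$, contradicting badness.

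For the converse I start LDFS at $u = v_1$ (the seed $\{0\}$ forces $v_1$ to be picked first) and argue inductively that at each step $\beta$ the vertex $b = v_\beta$ is a lex-maximum candidate among the unnumbered vertices, so the algorithm's tie-breaking rule is free to select it. If not, some later vertex $c$ would have a strictly larger label, and the symmetric-difference rule would provide an $\alpha < \beta$ in $L(c,\beta)\setminus L(b,\beta)$ that is the largest index of the symmetric difference. Setting $a = v_\alpha$ gives a triple $(a,b,c)$ with $ac\in E$ and $ab\notin E$; by hypothesis this triple is good, producing a $d$-vertex $v_\delta$ with $\alpha<\delta<\beta$ and $\delta\in L(b,\beta)\setminus L(c,\beta)$, which contradicts the maximality of $\alpha$ in the symmetric difference.

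The one place where I expect to be careful is pinning down the label-to-set correspondence cleanly — in particular, checking that the initialization $label(v_1)=\{0\}$ only governs step $1$ and does not propagate into later labels, and that comparisons involving labels of different lengths (including the empty label $\varepsilon$) are correctly handled by the ``largest index in the symmetric difference'' rule. Once these points are in place, both directions collapse to the identical three-line computation shown above.
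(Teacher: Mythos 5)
The paper does not prove this theorem at all: it is imported verbatim from \cite{Corneil-LDFS08} (note the citation in the theorem header), so there is no in-paper argument to compare yours against. Judged on its own, your proof is correct and self-contained. The central observation --- that for two still-unnumbered vertices at step $\beta$ the labels are the decreasing listings of $L(x,\beta)=\{i<\beta : v_ix\in E\}$, and that lexicographic comparison of two such decreasing sequences is decided by which set contains the largest element of the symmetric difference --- is sound, including the edge cases where one label is a prefix of the other or is $\varepsilon$, and your handling of the seed label $\{0\}$ is right (it only forces $v_1$ to be chosen first and never enters a later comparison, since $u$ is numbered at step~$1$). The forward direction correctly extracts a $d$-vertex $v_\delta$ with $\alpha<\delta<\beta$, $v_\delta b\in E$, $v_\delta c\notin E$ from the fact that $b$ was preferred to $c$ at step $\beta$; the converse correctly turns the goodness of the triple $(v_\alpha,v_\beta,v_\gamma)$ into a contradiction with the maximality of $\alpha$ in $L(v_\beta,\beta)\triangle L(v_\gamma,\beta)$, and the induction on $\beta$ legitimately exploits the algorithm's freedom in breaking ties. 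The only cosmetic mismatch is that Algorithm~\ref{ldfs-alg} is specified for a connected graph with a distinguished start vertex, whereas the theorem speaks of an arbitrary graph; your choice $u=v_1$ handles the start vertex, and connectivity plays no role in either direction of your argument, so nothing breaks.
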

\vspace{-0.1cm}

In the generic LDFS, there could be some choices to be made at line~\ref%
{ldfs-alg-star} of Algorithm~\ref{ldfs-alg}; in particular, at some
iteration there may be a set $S$ of vertices that have the same label and
the algorithm must choose one vertex from $S$. 
Generic LDFS (i.e.~Algorithm~\ref{ldfs-alg}) allows an arbitrary choice here. 
We present in the following a special kind of an LDFS algorithm, called
LDFS$^{+}$ (cf.~Algorithm~\ref{ldfs+-alg} below), which makes a specific
choice of a vertex in such a case of equal labels, as follows. Along with
the graph $G=(V,E)$, an ordering $\pi$ of $V$ is also given as input. 
The algorithm LDFS$^{+}$ (see Algorithm~\ref{ldfs+-alg} for a formal description) 
operates exactly as a generic LDFS that starts at the \emph{rightmost} vertex 
of $V$ in the ordering $\pi$, with the only difference that, in the case where 
at some iteration at least two unvisited vertices have the same label, it chooses the 
\emph{rightmost} vertex among them in the ordering $\pi$.

\begin{algorithm}[thb]
\caption{LDFS$^+$ ($G,\pi$)} \label{ldfs+-alg}
\begin{algorithmic}[1]
\REQUIRE{A connected graph $G=(V,E)$ with $n$ vertices and an ordering $\pi$ of $V$}
\ENSURE{An LDFS ordering $\sigma$ of the vertices of $G$}

\medskip

\STATE{Assign the label $\varepsilon$ to all vertices}

\FOR{$i=1$ to $n$}
     \STATE{Pick the rightmost vertex $v$ in $\pi$ among the unnumbered vertices with the  lexicographically largest label} \label{ldfs+-alg-star}
     \STATE{$\sigma(i) \leftarrow v$} \COMMENT{assign to $v$ the number $i$}
     \FOR{each unnumbered vertex $w\in N(v)$}
          \STATE{prepend $i$ to $label(w)$}
     \ENDFOR
\ENDFOR

\medskip

\RETURN{the ordering $\sigma = (\sigma(1),\sigma(2),\ldots,\sigma(n))$}
\end{algorithmic}
\end{algorithm}

In the following, we present the \emph{RightMost-Neighbor (RMN)} algorithm.
Although the name RMN has not been used, this algorithm essentially has been introduced in~\cite{Arikati90}
in order to find a minimum path cover in a given interval graph\footnote{Actually, 
the algorithm of~\cite{Arikati90} uses the ``reverse'' vertex 
ordering of an I-ordering
(as defined in Lemma~\ref{int-order}), which results after sorting the intervals of an interval
representation according to their right endpoints, and thus they presented
an equivalent \emph{LeftMost-Neighbor (LMN)} algorithm for the case of interval graphs. 
A similar observation applies to the algorithms in~\cite{Damasc93} and~\cite{longest-int-algo}.\vspace{-0.51cm}}.
The RMN algorithm is a very simple ``greedy'' algorithm that starts at the
rightmost vertex of a given ordering $\sigma$ of $V$ and traces each path
by repeatedly proceeding to the rightmost unvisited neighbor of the current vertex.
If the current vertex has no unvisited neighbors, then the rightmost unvisited vertex
is chosen as the first vertex in the next path.


\begin{algorithm}[htb]
\caption{RMN($\sigma$)} \label{rmn-alg}
\begin{algorithmic}[1]
\REQUIRE{A graph $G=(V,E)$ with $n$ vertices and an ordering $\sigma$ of $V$}
\ENSURE{An ordering $\widehat{\sigma}$ of the vertices of $G$}

\medskip

\STATE{Label all vertices as ``unvisited''; $i \leftarrow 1$}

\WHILE{there are unvisited vertices}
     \STATE{Pick the rightmost unvisited vertex $x$ in $\sigma$}
     \STATE{$\widehat{\sigma}(i) \leftarrow x$} \COMMENT{add vertex $x$ to the ordering $\widehat{\sigma}$}
     \STATE{Mark $x$ as ``visited''; $i \leftarrow i+1$}
     
     \WHILE{$x$ has at least one unvisited neighbor}
          \STATE{Pick the $x$'s rightmost unvisited neighbor $y$ in $\sigma$}
          \STATE{$\widehat{\sigma}(i) \leftarrow y$} \COMMENT{add vertex $y$ to the ordering $\widehat{\sigma}$}
          \STATE{Mark $y$ as ``visited''; $i \leftarrow i+1$}
          \STATE{$x \leftarrow y$}
     \ENDWHILE
     
\ENDWHILE

\medskip

\RETURN{$\widehat{\sigma}=(\widehat{\sigma}(1),\widehat{\sigma}(2),\ldots,\widehat{\sigma}(n))$} 
\end{algorithmic}
\end{algorithm}

Note that in Algorithm~\ref{ldfs+-alg} we denote the input vertex ordering by $\pi$
and the output ordering by~$\sigma$, 
while in Algorithm~\ref{rmn-alg}, $\sigma$ denotes the input vertex ordering. 
The reason for this notation is that we will often 
consider an arbitrary umbrella-free vertex ordering $\pi$ of a
cocomparability graph $G$, apply Algorithm~\ref{ldfs+-alg} (i.e.~LDFS$^{+}$)
to $\pi$ to compute the ordering $\sigma$, and then apply Algorithm~\ref{rmn-alg} 
(i.e.~RMN) to $\sigma$ to compute the ordering $\widehat{\sigma}$.
Then, as proved in~\cite{Corneil-MPC}, 
the LDFS vertex ordering $\sigma$ 
remains umbrella-free. 
Moreover, the final ordering~$\widehat{\sigma}$ defines a minimum path cover of $G$~\cite{Corneil-MPC}.

\section{Normal paths in cocomparability graphs\label{sec:normal}}

In this section we investigate the structure of the vertex ordering $\sigma$
that is obtained after applying an LDFS$^{+}$ preprocessing step to an
arbitrary umbrella-free ordering $\pi$ of a cocomparability graph $G$. On
such an LDFS umbrella-free ordering $\sigma$, we define a special type of
paths, called \emph{normal} paths (cf.~Definition~\ref{normal-def}), which
is a crucial notion for our algorithm for the longest path problem on
cocomparability graphs (cf.~Algorithm~\ref{alg-lp}). 
In the following definition we introduce the notion of a \emph{maximal} path 
in a graph, which extends that of a longest path.

\begin{definition}
\label{maximal-path}A path $P$ of a graph $G$ is \emph{maximal} if there
exists no path $P^{\prime }$ of $G$, such that $V(P)\subset V(P^{\prime })$.
\end{definition}

The main result of this section is that for any maximal path $P$ of a
cocomparability graph~$G$ (and thus also for any longest path), there exists
a normal path $P^{\prime}$ on the same vertices (cf.~Theorem~\ref{normal-thm}). 
Due to this result, it is sufficient for our algorithm that computes the 
longest path problem on cocomparability graphs (cf.~Algorithm~\ref{alg-lp}) 
to search only among the normal paths of the given cocomparability graph, 
in order to compute a longest path. The next lemma will be used in the sequel.

\begin{lemma}
\label{auxil}Let $G=(V,E)$ be a cocomparability graph and $\sigma $ be an
LDFS umbrella-free ordering of $V$. Let ${P=(v_{1},v_{2},\ldots ,v_{k})}$ be
a path of $G$ and $v_{\ell }\notin V(P)$ be a vertex of $G$, such that $%
v_{k}<_{\sigma }v_{\ell }<_{\sigma }v_{1}$ and $v_{\ell }v_{k}\notin E$.
Then, there exist two consecutive vertices $v_{i-1}$ and $v_{i}$ in $P$, $%
2\leq i\leq k$, such that $v_{i-1}v_{\ell }\in E$ and $v_{i}<_{\sigma
}v_{\ell }$.
\end{lemma}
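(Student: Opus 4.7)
The plan is to identify the rightmost (in path order) neighbor of $v_{\ell}$ on $P$ and argue that its path-successor lies below $v_{\ell}$ in $\sigma$. More precisely, I set $j^{\ast} = \max\{j : v_j v_{\ell} \in E\}$ and show that $i = j^{\ast}+1$ satisfies both conclusions of the lemma. The whole argument will rest on two uses of the umbrella-free property; interestingly, the LDFS (no bad triple) hypothesis on $\sigma$ does not seem to be needed.

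First, I verify that $j^{\ast}$ is well defined. Since $v_1 >_{\sigma} v_{\ell} >_{\sigma} v_k$, walking along $P$ in the path order there must exist an index $t$ with $v_t >_{\sigma} v_{\ell} >_{\sigma} v_{t+1}$. The path edge $v_t v_{t+1}$, combined with the umbrella-free property applied to the triple $v_{t+1} <_{\sigma} v_{\ell} <_{\sigma} v_t$, immediately forces $v_t v_{\ell} \in E$ or $v_{t+1} v_{\ell} \in E$, so $v_{\ell}$ has a neighbor in $V(P)$. Moreover $j^{\ast} < k$, since $v_k v_{\ell} \notin E$ by hypothesis, so $v_{j^{\ast}+1}$ exists and, by the maximality of $j^{\ast}$, is non-adjacent to $v_{\ell}$.

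The main step is to show $v_{j^{\ast}+1} <_{\sigma} v_{\ell}$. I argue by contradiction. Suppose instead that $v_{j^{\ast}+1} >_{\sigma} v_{\ell}$. Then the sub-path $v_{j^{\ast}+1}, v_{j^{\ast}+2}, \ldots, v_k$ again starts above $v_{\ell}$ in $\sigma$ and ends below (since $v_k <_{\sigma} v_{\ell}$), so it contains a crossing edge $v_{j-1} v_j$ with $j-1 > j^{\ast}$ and $v_{j-1} >_{\sigma} v_{\ell} >_{\sigma} v_j$. Applying umbrella-free a second time, now to the triple $v_j <_{\sigma} v_{\ell} <_{\sigma} v_{j-1}$, gives $v_{\ell} v_{j-1} \in E$ or $v_{\ell} v_j \in E$, contradicting the maximality of $j^{\ast}$ (both indices exceed $j^{\ast}$). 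Therefore $v_{j^{\ast}+1} <_{\sigma} v_{\ell}$, and taking $i = j^{\ast}+1$ delivers both required conditions $v_{i-1} v_{\ell} \in E$ and $v_i <_{\sigma} v_{\ell}$.

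The main obstacle I anticipate is purely bookkeeping: correctly formalising the two crossing arguments, and observing that because $v_{\ell} \notin V(P)$ all the relevant inequalities with respect to $\sigma$ are strict. The conceptual work is short once $j^{\ast}$ is chosen as the largest index of a neighbour of $v_{\ell}$ on $P$; everything else is two applications of the umbrella-free axiom.
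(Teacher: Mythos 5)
Your proof is correct and follows essentially the same route as the paper: both identify the last neighbour of $v_{\ell}$ along $P$ via a straddling-edge argument and then rule out $v_{\ell}<_{\sigma}v_{i}$ by a second application of the same crossing-edge/umbrella-free argument on the remaining subpath. Your observation that only umbrella-freeness (and not the LDFS property) is used is also consistent with the paper's proof.
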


\begin{proof}
Since $v_{k}<_{\sigma }v_{\ell }<_{\sigma }v_{1}$ and $v_{\ell }\notin V(P)$, there exists at least one
edge $e=xy$ of $P$, which straddles $v_{\ell }$ in~$\sigma $. Thus, at least
one of $x$ and $y$ is adjacent to $v_{\ell }$, since $\sigma $ is
umbrella-free. Recall that $v_{\ell }v_{k}\notin E$; let $v_{i-1}$, $2\leq
i\leq k$, be the last vertex of $P$ such that $v_{i-1}v_{\ell }\in E$. 
If~$v_{\ell }<_{\sigma }v_{i}$, then there exists similarly at least one vertex $%
v_{j}$, $i\leq j\leq k$, such that $v_{j}v_{\ell }\in E$, which is a
contradiction by the assumption on $v_{i-1}$. Thus, $v_{i}<_{\sigma }v_{\ell}$. 
This completes the proof of the lemma.
\end{proof}

\begin{definition}
\label{LDFS-closure}
Let $G=(V,E)$ be a cocomparability graph, $\sigma$ be an LDFS umbrella-free 
ordering of $V$, and $\sigma^{\prime}$ be an induced subordering of~$\sigma$. 
An \emph{LDFS closure} $\sigma^{\prime\prime}$ of~$\sigma^{\prime}$ 
(within $\sigma$) is an induced subordering of~$\sigma$ with the smallest 
number of vertices, such that $\sigma^{\prime\prime}$ is an LDFS ordering 
that includes $\sigma^{\prime}$.
\end{definition}

Observe that any induced subordering~$\sigma^{\prime}$ 
of an umbrella-free ordering~$\sigma$ also remains an umbrella-free ordering (cf.~Lemma~\ref{cocomp-order}). 
An example of a cocomparability graph $G=(V,E)$, as well as 
an LDFS umbrella-free ordering $\sigma=(u_{1},u_{2},\ldots,u_{9})$ of $V$ is 
illustrated in Figure~\ref{sigma-fig}. 
In this example, $\sigma^{\prime}=(u_{1},u_{3},u_{4},u_{5},u_{7},u_{8})$ is an 
induced subordering of $\sigma$ (and thus also umbrella-free). 
Furthermore, the ordering $\sigma^{\prime\prime}=(u_{1},u_{2},u_{3},u_{4},u_{5},u_{6},u_{7},u_{8})$ 
is an LDFS closure of~$\sigma^{\prime}$ (within~$\sigma$), where the vertices 
$u_{2}$ and $u_{6}$ are the $d$-vertices of the triples $(u_{1},u_{3},u_{4})$ 
and~$(u_{5},u_{7},u_{8})$, respectively.

\begin{figure}[h!]
\centering
\subfigure[]{ \label{sigma-graph-fig}
\includegraphics[scale=0.8]{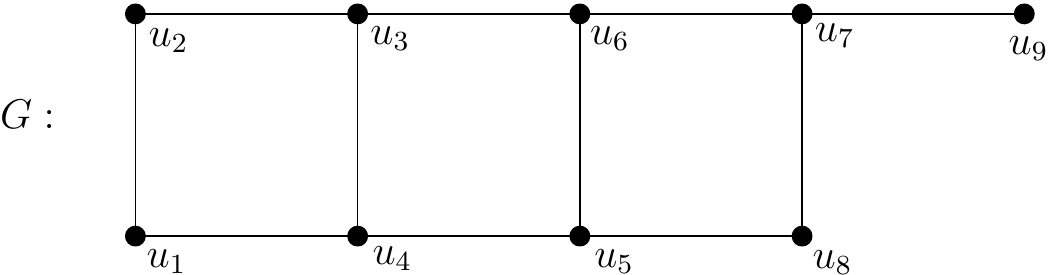}}
\subfigure[]{ \label{sigma-ordering-fig}
\includegraphics[scale=0.9]{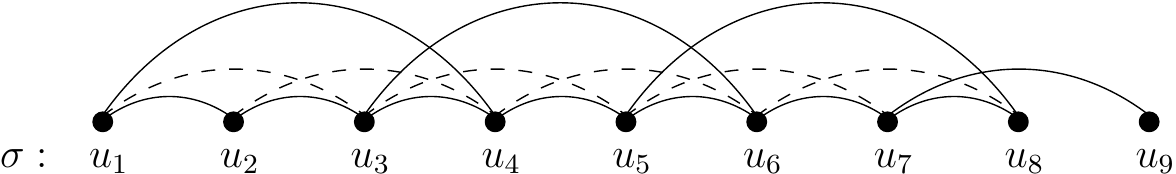}}
\caption{(a) A cocomparability graph $G=(V,E)$ and (b) an LDFS umbrella-free 
ordering~${\sigma=(u_{1},u_{2},\ldots,u_{9})}$ of $V$.\vspace{-0.2cm}}
\label{sigma-fig}
\end{figure}

\begin{observation}
\label{vertices-closure}Let $\sigma $ be an LDFS umbrella-free ordering, $%
\sigma ^{\prime }$ an arbitrary induced subordering of $\sigma $, and $%
\sigma ^{\prime \prime }$  any LDFS closure~of~$\sigma ^{\prime }$ (within 
$\sigma $). Then, every vertex $v$ of~$\sigma ^{\prime \prime }\setminus
\sigma ^{\prime }$ is a $d$-vertex of some good triple $(a,b,c)$ in $\sigma
^{\prime \prime }$.
\end{observation}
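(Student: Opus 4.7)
The plan is to proceed by contradiction, using the minimality property guaranteed by Definition~\ref{LDFS-closure}. Specifically, I would suppose that there exists some $v\in\sigma''\setminus\sigma'$ that is not a $d$-vertex of any good triple in $\sigma''$, and then show that removing $v$ still yields an LDFS ordering containing $\sigma'$, contradicting the fact that $\sigma''$ has the smallest possible number of vertices among induced suborderings of $\sigma$ that are LDFS orderings and contain $\sigma'$.

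The central step is to verify that $\tau := \sigma''\setminus\{v\}$ is again an LDFS ordering, i.e.\ contains no bad triple (in the sense of Definition~\ref{good-def}). I would argue this by contradiction within the contradiction: suppose $(a,b,c)$ is a bad triple in $\tau$, so $a<_{\sigma}b<_{\sigma}c$, $ac\in E$, $ab\notin E$, and no vertex of $\tau$ strictly between $a$ and $b$ in $\sigma$ is adjacent to $b$ but non-adjacent to $c$. Since $a,b,c\in\sigma''$ and $\sigma''$ is an LDFS ordering, the triple $(a,b,c)$ must be \emph{good} in $\sigma''$, hence witnessed by some $d$-vertex $d\in\sigma''$ with $a<_{\sigma}d<_{\sigma}b$, $db\in E$, and $dc\notin E$. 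This $d$ cannot lie in $\tau$, for otherwise $(a,b,c)$ would be good in $\tau$; thus $d=v$, which makes $v$ a $d$-vertex of a good triple of $\sigma''$, contradicting the assumption on $v$.

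Once this is established, the conclusion is immediate: since $v\notin\sigma'$, we have $\sigma'\subseteq\tau$; and $\tau$ is an induced subordering of $\sigma$ strictly smaller than $\sigma''$, an LDFS ordering, and still containing $\sigma'$. This contradicts the minimality in Definition~\ref{LDFS-closure}, so every $v\in\sigma''\setminus\sigma'$ must in fact play the role of a $d$-vertex for some good triple of $\sigma''$.

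I do not foresee any significant obstacle. The one point that needs care is the consistent treatment of the triple $(a,b,c)$ across $\tau$ and $\sigma''$: because both orderings are induced suborderings of the fixed ordering $\sigma$, the relative order and the graph adjacencies of $a,b,c$ are identical in both, so the only way deletion of $v$ can turn a good triple into a bad one is if $v$ was the unique $d$-vertex available for that triple in $\sigma''$. This is exactly the case captured by $d=v$ above.
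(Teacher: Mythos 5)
Your proof is correct and is precisely the minimality argument that the paper intends: it states this as an Observation without proof, relying on the fact that deleting a non-$d$-vertex from $\sigma''$ would leave a smaller LDFS ordering containing $\sigma'$, contradicting Definition~\ref{LDFS-closure}. Your handling of the one delicate point — that the only candidate $d$-vertex lost in passing from $\sigma''$ to $\tau$ is $v$ itself, since order and adjacency are inherited from the common ambient ordering $\sigma$ — is exactly right.
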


The next lemma follows easily by Observation~\ref{vertices-closure}.

\begin{lemma}
\label{infinite-sequence-closure}Let $\sigma $ be an LDFS umbrella-free
ordering, $\sigma ^{\prime }$ an arbitrary induced subordering of $\sigma $,
and $\sigma ^{\prime \prime }$  any LDFS closure~of~$\sigma ^{\prime }$
(within $\sigma $). Let $v$ be a vertex of~$\sigma ^{\prime \prime
}\setminus \sigma ^{\prime }$. Then, there exists at least one vertex $%
v^{\prime }$ in $\sigma ^{\prime }$, such that $v<_{\sigma }v^{\prime }$ and 
$vv^{\prime }\notin E$.
\end{lemma}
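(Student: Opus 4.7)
My plan is to iterate Observation~\ref{vertices-closure} and build a strictly increasing chain in $\sigma$ that must terminate in $\sigma'$, using the umbrella-free property to propagate the non-edge relation back to $v$.

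First, set $v_0 := v$. Since $v_0 \in \sigma'' \setminus \sigma'$, Observation~\ref{vertices-closure} says $v_0$ is a $d$-vertex of some good triple $(a_0,b_0,c_0)$ in $\sigma''$. By Definition~\ref{good-def}, this gives $v_0 <_{\sigma} c_0$ and $v_0 c_0 \notin E$. If $c_0 \in \sigma'$, I am done by taking $v' := c_0$.

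Otherwise, I iterate: assuming I have constructed $v_0 <_{\sigma} c_0 <_{\sigma} c_1 <_{\sigma} \cdots <_{\sigma} c_i$ with $c_0,\ldots,c_i \in \sigma'' \setminus \sigma'$ and $v_0 c_j \notin E$ for every $j \leq i$, I apply Observation~\ref{vertices-closure} to $c_i$ to obtain a triple in $\sigma''$ whose $d$-vertex is $c_i$, yielding a vertex $c_{i+1} \in \sigma''$ with $c_i <_{\sigma} c_{i+1}$ and $c_i c_{i+1} \notin E$. Now I invoke the umbrella-free property of $\sigma$ (Lemma~\ref{cocomp-order}) on the triple $v_0 <_{\sigma} c_i <_{\sigma} c_{i+1}$: were $v_0 c_{i+1} \in E$, at least one of $v_0 c_i$ and $c_i c_{i+1}$ would be an edge, contradicting the induction hypothesis. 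Hence $v_0 c_{i+1} \notin E$, preserving the invariant.

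Because the chain $c_0 <_{\sigma} c_1 <_{\sigma} \cdots$ is strictly increasing inside the finite ordering $\sigma''$, it must eventually stop being extendable, and by construction the only way to stop is to reach an index $k$ with $c_k \in \sigma'$. Then $v' := c_k$ satisfies $v <_{\sigma} v'$ and $vv' \notin E$, as required.

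The only subtle point — and the piece of the argument that really deserves care — is the umbrella-free step that converts the ``local'' non-edge $c_i c_{i+1} \notin E$ into the ``global'' non-edge $v_0 c_{i+1} \notin E$; everything else is bookkeeping on a strictly monotone sequence in a finite set. I do not expect any genuine obstacle beyond writing this induction cleanly.
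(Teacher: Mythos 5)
Your proof is correct and is essentially the same argument as the paper's: both iterate Observation~\ref{vertices-closure} to build a strictly increasing chain $v <_{\sigma} c_0 <_{\sigma} c_1 <_{\sigma} \cdots$ of pairwise non-adjacent vertices (using the umbrella-free property exactly as you do to propagate $vc_{i+1}\notin E$), and both conclude from finiteness of $\sigma''$ that the chain must hit a vertex of $\sigma'$. The only difference is presentational: the paper phrases it as a proof by contradiction (assuming every such $c_i$ lies in $\sigma''\setminus\sigma'$ and deriving an infinite independent set), while you run the same induction directly.
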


\begin{proof}
Suppose otherwise that for every vertex $v^{\prime }$ of $\sigma ^{\prime
\prime }$, for which $v<_{\sigma }v^{\prime }$ and $vv^{\prime }\notin E$,
the vertex $v^{\prime }$ belongs to $\sigma ^{\prime \prime }\setminus
\sigma ^{\prime }$. Note by Observation~\ref{vertices-closure} that $v$ is a 
$d$-vertex of some good triple in $\sigma ^{\prime \prime }$; let this
triple be $(a_{0},b_{0},c_{0})$. Then, $v<_{\sigma }c_{0}$ and $vc_{0}\notin
E$ by definition of a good triple. Thus, $c_{0}$ is a vertex of $\sigma
^{\prime \prime }\setminus \sigma ^{\prime }$ by our assumption on $v$.
Then, $c_{0}$ is a $d$-vertex of some good triple in $\sigma ^{\prime \prime
}$ by Observation~\ref{vertices-closure}; let this triple be $%
(a_{1},b_{1},c_{1})$. Then, in particular, $c_{0}<_{\sigma }c_{1}$ and $%
c_{0}c_{1}\notin E$ by definition of a good triple. Therefore $v<_{\sigma
}c_{0}<_{\sigma }c_{1}$. Furthermore $vc_{1}\notin E$, since otherwise the
vertices $v,c_{0},c_{1}$ build an umbrella in $\sigma $, which is a
contradiction. That is, $v<_{\sigma }c_{1}$ and $vc_{1}\notin E$, and thus $%
c_{1}$ is a vertex of $\sigma ^{\prime \prime }\setminus \sigma ^{\prime }$
by our assumption on $v$. Now, for every $i\geq 2$, we can inductively
construct a sequence $c_{2},c_{3},\ldots ,c_{i}$ of vertices in $\sigma
^{\prime \prime }\setminus \sigma ^{\prime }$, such that $v<_{\sigma
}c_{0}<_{\sigma }c_{1}<_{\sigma }c_{2}<_{\sigma }\ldots <_{\sigma }c_{i}$
and the vertices $v,c_{0},c_{1},c_{2},\ldots ,c_{i}$ build an independent
set. This is a contradiction, since $\sigma ^{\prime \prime }$ is finite.
Therefore, there exists at least one vertex $v^{\prime }$ in~$\sigma
^{\prime }$, such that $v<_{\sigma }v^{\prime }$ and $vv^{\prime }\notin E$.
This completes the proof of the lemma.
\end{proof}

\begin{corollary}
\label{rightmost-in-closure}Let $\sigma $ be an LDFS umbrella-free ordering, 
$\sigma ^{\prime }$ an arbitrary induced subordering of~$\sigma $, and $%
\sigma ^{\prime \prime }$ any LDFS closure~of~$\sigma ^{\prime }$ (within 
$\sigma $). Then, the rightmost vertex of $\sigma ^{\prime }$ is also the
rightmost vertex of $\sigma ^{\prime \prime }$.
\end{corollary}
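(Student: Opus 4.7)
The plan is to derive this as an immediate consequence of Lemma~\ref{infinite-sequence-closure} via a one-step contradiction argument. Let $v^{\ast}$ denote the rightmost vertex of $\sigma^{\prime}$. Since by Definition~\ref{LDFS-closure} every vertex of $\sigma^{\prime}$ lies in $\sigma^{\prime\prime}$, the vertex $v^{\ast}$ is certainly a vertex of $\sigma^{\prime\prime}$. It therefore suffices to show that no vertex of $\sigma^{\prime\prime}$ lies strictly to the right of $v^{\ast}$ in $\sigma$.

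To this end, I would suppose for the sake of contradiction that there exists a vertex $v$ of $\sigma^{\prime\prime}$ with $v^{\ast}<_{\sigma}v$. Since $v^{\ast}$ is by assumption the rightmost vertex of $\sigma^{\prime}$, this $v$ cannot belong to $\sigma^{\prime}$, so it must lie in $\sigma^{\prime\prime}\setminus\sigma^{\prime}$. Now I would invoke Lemma~\ref{infinite-sequence-closure} on $v$, which produces a vertex $v^{\prime}\in\sigma^{\prime}$ with $v<_{\sigma}v^{\prime}$ (and $vv^{\prime}\notin E$, though the edge information is not needed here). Chaining the two strict inequalities yields $v^{\ast}<_{\sigma}v<_{\sigma}v^{\prime}$, which contradicts the choice of $v^{\ast}$ as the rightmost vertex of $\sigma^{\prime}$. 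Hence no such $v$ exists, and the rightmost vertex of $\sigma^{\prime\prime}$ coincides with $v^{\ast}$.

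There is no genuine obstacle to overcome: the technical content is entirely absorbed into Lemma~\ref{infinite-sequence-closure}, and the corollary is little more than a restatement of the conclusion of that lemma applied to the single candidate vertex that would violate the claim. The only point requiring mild care is to make sure that $v$ is indeed in $\sigma^{\prime\prime}\setminus\sigma^{\prime}$ before applying the lemma, which is immediate from the definition of $v^{\ast}$ as the rightmost element of $\sigma^{\prime}$.
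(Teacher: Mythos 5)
Your proof is correct and follows essentially the same route as the paper: both arguments take a vertex of $\sigma''$ lying to the right of the rightmost vertex of $\sigma'$, note it must belong to $\sigma''\setminus\sigma'$, and apply Lemma~\ref{infinite-sequence-closure} to produce a vertex of $\sigma'$ even further right, yielding a contradiction. The only cosmetic difference is that the paper instantiates the argument with the rightmost vertex of $\sigma''$ rather than an arbitrary offending vertex.
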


\begin{proof}
Let $v^{\prime }$ and $v^{\prime \prime }$ be the rightmost vertices of $%
\sigma ^{\prime }$ and of $\sigma ^{\prime \prime }$, respectively. If $%
v^{\prime }\neq v^{\prime \prime }$, then $v^{\prime \prime }$ is a vertex
of $\sigma ^{\prime \prime }\setminus \sigma ^{\prime }$ and $v^{\prime
}<_{\sigma }v^{\prime \prime }$, since $\sigma ^{\prime }$ is a subset of $%
\sigma ^{\prime \prime }$. Then, there exists by Lemma~\ref%
{infinite-sequence-closure} at least one vertex $v^{\prime \prime \prime }$
in $\sigma ^{\prime }$, such that $v^{\prime \prime }<_{\sigma }v^{\prime
\prime \prime }$, i.e.~$v^{\prime }<_{\sigma }v^{\prime \prime }<_{\sigma
}v^{\prime \prime \prime }$, which is a contradiction to our assumption on $%
v^{\prime }$.
\end{proof}

\medskip

In the following we introduce the notion of a typical and a normal path in a 
cocomparability graph $G=(V,E)$ (with respect to an LDFS umbrella-free 
ordering $\sigma$ of $V$), which will be used in the remainder of the paper.

\begin{definition}
\label{normal-def}Let $G=(V,E)$ be a cocomparability graph and $\sigma $ be
an LDFS umbrella-free ordering of $V$. Then,\vspace{-0.1cm}
\begin{enumerate}
\item[(a)] a path ${P=(v_{1},v_{2},\ldots ,v_{k})}$ of $G$ is called \emph{%
typical} if $v_{1}$ is the rightmost vertex of $V(P)$ in~$\sigma $ and $%
v_{2} $ is the rightmost vertex of $N(v_{1})\cap V(P)$ in~$\sigma $, and\vspace{-0.1cm}
\item[(b)] a typical path ${P=(v_{1},v_{2},\ldots ,v_{k})}$ of $G$ is called 
\emph{normal} if $v_{i}$ is the rightmost vertex of~$N(v_{i-1})\cap
\{v_{i},v_{i+1},\ldots ,v_{k}\}$ in~$\sigma $, for every $i=2,\ldots ,k$.\vspace{-0.15cm}
\end{enumerate}
\end{definition}

For example, in the cocomparability graph $G$ of Figure~\ref{sigma-fig}, 
the path $P=(u_{8},u_{5},u_{6},u_{3},u_{4})$ is a normal path. 
The next observation follows from Definition~\ref{normal-def}.

\begin{observation}
\label{normal-RMN}Let $G=(V,E)$ be a cocomparability graph and $\sigma $ be
an LDFS umbrella-free ordering of $V$. Let $P$ be a normal path of $G$ (with
respect to the ordering $\sigma $) and $\sigma |_{V(P)}$ be the restriction
of $\sigma $ on the vertices of $P$. Then, the ordering of the vertices of $%
V(P)$ in $P$ coincides with the ordering RMN$(\sigma |_{V(P)})$.
\end{observation}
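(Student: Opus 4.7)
The plan is to proceed by a straightforward induction on $i$, showing that at step $i$, the algorithm RMN($\sigma|_{V(P)}$) picks exactly the vertex $v_{i}$. The key observation driving the argument is that after $v_{1},\ldots,v_{i-1}$ have been marked visited, the unvisited portion of $V(P)$ is exactly $\{v_{i},v_{i+1},\ldots,v_{k}\}$, which is precisely the set appearing in Definition~\ref{normal-def}(b).

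For the base case $i=1$, RMN begins by picking the rightmost unvisited vertex of the input ordering $\sigma|_{V(P)}$, which is the rightmost vertex of $V(P)$ in $\sigma$. By Definition~\ref{normal-def}(a), this is exactly $v_{1}$. For $i=2$, since $v_{1}v_{2}\in E$, the current vertex $v_{1}$ has at least one unvisited neighbor in $V(P)$, and the set of unvisited neighbors of $v_{1}$ in $V(P)$ is $N(v_{1})\cap V(P)$ (since $v_{1}$ itself is not adjacent to $v_{1}$). By Definition~\ref{normal-def}(a), the rightmost vertex of this set in $\sigma$ is $v_{2}$, so RMN picks $v_{2}$.

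For the inductive step, assume RMN has selected $v_{1},v_{2},\ldots,v_{i-1}$ in that order, so the current vertex is $v_{i-1}$ and the unvisited vertices in $V(P)$ form the set $\{v_{i},v_{i+1},\ldots,v_{k}\}$. Since $v_{i-1}v_{i}\in E$, the current vertex has at least one unvisited neighbor, so RMN proceeds inside the inner while-loop and picks the rightmost unvisited neighbor of $v_{i-1}$ in $V(P)$, namely the rightmost vertex of $N(v_{i-1})\cap\{v_{i},v_{i+1},\ldots,v_{k}\}$ in $\sigma$. By Definition~\ref{normal-def}(b), this rightmost vertex is $v_{i}$. Hence RMN picks $v_{i}$, completing the induction. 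Once $v_{k}$ has been picked, every vertex of $V(P)$ has been visited and RMN terminates on the input $\sigma|_{V(P)}$.

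There is no serious obstacle here: the claim is essentially a translation between the recursive definition of a normal path and the greedy sweep performed by RMN. The only mild subtlety worth pointing out explicitly is that at each step $i\geq 2$ the current vertex $v_{i-1}$ really does have an unvisited neighbor in $V(P)$ (namely $v_{i}$), so RMN never exits the inner while-loop prematurely and never needs to restart from a new rightmost unvisited vertex; therefore the entire path $P$ is produced as a single run of the inner loop.
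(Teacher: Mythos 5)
Your proof is correct and is exactly the argument the paper has in mind: the paper states this Observation without proof, noting only that it ``follows from Definition~\ref{normal-def},'' and your induction (rightmost vertex starts the sweep by typicality, then at each step the unvisited set is $\{v_i,\ldots,v_k\}$ and condition (b) forces RMN to pick $v_i$) is the natural way to spell that out. The one detail you rightly flag --- that $v_{i-1}v_i\in E$ guarantees RMN never leaves the inner loop, so the whole of $P$ is one run --- is the only point that needed explicit mention.
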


A similar notion of a normal (i.e.~RMN) path for the special case of interval graphs 
has appeared in~\cite{Damasc93} (referred to as a \emph{straight} path), 
as well as in~\cite{longest-int-algo}. 
We now state the following two auxiliary lemmas.

\begin{lemma}[\hspace{-0.0005cm}\protect\cite{Corneil-MPC}]
\label{auxil-2}Let $G=(V,E)$ be a cocomparability graph and $\pi $ be an
umbrella-free ordering of $V$. Let $\pi ^{\prime }=$RMN$(\pi )$ and $\pi
^{\prime \prime }=$LDFS$^{+}(\pi )$. Furthermore, let $x,y\in V$ such that $%
xy\notin E$. If~$y<_{\pi }x$, then $x<_{\pi ^{\prime }}y$ and $x<_{\pi
^{\prime \prime }}y$.
\end{lemma}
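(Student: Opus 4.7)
The plan is to prove both inequalities by strong induction on the step index $t$ at which vertices are numbered by the respective algorithm. The inductive hypothesis I carry is: for every $t^{\prime}<t$, the vertex $v$ numbered at step $t^{\prime}$ has the property that every $w>_{\pi}v$ with $wv\notin E$ was already numbered before step $t^{\prime}$. The base case $t=1$ is immediate, since both RMN and LDFS$^{+}$ begin by numbering the rightmost vertex of $\pi$, so no such $w$ exists. The inductive step will be closed in each case by identifying a ``witness'' vertex $z$ and playing the umbrella-free property of~$\pi$ against the inductive hypothesis.

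For RMN, I would suppose toward contradiction that at the step $t$ at which $y$ is numbered by $\pi^{\prime}$, the vertex $x$ is still unvisited. If $y$ were selected by the outer loop as the start of a new path, then $y$ would be the rightmost unvisited vertex of~$\pi$, directly contradicting $y<_{\pi}x$ with $x$ unvisited. Hence $y$ must be picked by the inner loop as the rightmost unvisited neighbor of the previously numbered vertex, which I call $z$, so $yz\in E$; moreover $xz\notin E$, because $x$ is unvisited and $x>_{\pi}y$ would otherwise force RMN to choose $x$ instead of $y$. Now I case-split on the $\pi$-position of $z$. If $x<_{\pi}z$, then the triple $y<_{\pi}x<_{\pi}z$ has $yz\in E$ while $yx,xz\notin E$, which violates the umbrella-free property of~$\pi$ from Lemma~\ref{cocomp-order}. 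Otherwise $z<_{\pi}x$; combining $xz\notin E$ with the inductive hypothesis applied to $z$ (numbered at step $t-1<t$) forces $x$ to have been numbered before $z$, contradicting the assumption that $x$ is unvisited at step $t$.

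For LDFS$^{+}$ the argument has the same skeleton, but I first need to locate $z$ by inspecting the labels. Suppose again for contradiction that $y$ is numbered at step $t$ of $\pi^{\prime\prime}$ while $x$ is still unnumbered. Since $y<_{\pi}x$, the rightmost-in-$\pi$ tiebreak rule of LDFS$^{+}$ forces $L_{y}>_{\mathrm{lex}}L_{x}$ strictly, so the first position at which the two label sequences differ corresponds to a step $s<t$ appearing in $L_{y}$ but not in $L_{x}$. Taking $z$ to be the vertex numbered at step $s$, I have $zy\in E$ (since $s$ was prepended to $L_{y}$ at that step); a short check using strict monotonicity of label entries shows $zx\notin E$, for otherwise $s$ would also appear in $L_{x}$ at the same first-differing position, contradicting the choice of $s$. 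From here the case analysis on the $\pi$-position of $z$ is identical to the RMN case: $x<_{\pi}z$ contradicts umbrella-freeness of~$\pi$, while $z<_{\pi}x$ contradicts the inductive hypothesis applied to $z$ at step $s<t$.

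The step I expect to be most delicate is the identification of $z$ in the LDFS$^{+}$ case and the verification that $z$ is adjacent to $y$ but not to $x$; this demands careful reasoning about how labels evolve under the prepend operation together with the strict monotonicity of label entries and the tie-breaking rule. Once the witness is pinned down, the umbrella-free-plus-IH dichotomy closes both halves of the lemma uniformly.
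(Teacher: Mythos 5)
The paper does not prove this lemma at all --- it is imported verbatim from \cite{Corneil-MPC} --- so there is no in-paper argument to compare against; your proposal has to stand on its own, and it does. The overall scheme (strong induction on the numbering step, with the invariant that every $\pi$-larger non-neighbor of an already-numbered vertex has itself been numbered earlier, closed by exhibiting a witness $z$ with $zy\in E$, $zx\notin E$ and then splitting on whether $x<_{\pi}z$, which kills the triple $y<_{\pi}x<_{\pi}z$ via umbrella-freeness, or $z<_{\pi}x$, which kills it via the inductive hypothesis) is sound, and the RMN half is complete as written: the outer-loop case is excluded directly, and in the inner-loop case $xz\notin E$ follows because RMN would otherwise have preferred $x$ to $y$. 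For the LDFS$^{+}$ half, the delicate point you flag is handled correctly but deserves one explicit sentence: from $L_{y}>_{\mathrm{lex}}L_{x}$ and the fact that both labels are strictly decreasing digit sequences sharing a common prefix of digits all larger than $s$, one gets $s\notin L_{x}$; to convert that into $zx\notin E$ you must also invoke the fact that $x$ is still \emph{unnumbered} at step $s$ (since $s<t$ and $x$ is unnumbered at step $t$), because $s\notin L_{x}$ could otherwise be explained by $x$ having been numbered before step $s$ rather than by non-adjacency. You use this implicitly; making it explicit closes the only gap I can see. One further cosmetic remark: the strict inequality $L_{y}>_{\mathrm{lex}}L_{x}$ indeed follows from the tie-break rule of Algorithm~\ref{ldfs+-alg}, since equal labels would force the algorithm to pick the $\pi$-rightmost vertex $x$ instead of $y$.
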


\begin{lemma}
\label{auxil-3}Let $G=(V,E)$ be a cocomparability graph, $\pi $ be an
umbrella-free ordering of $V$, and~$\pi ^{\prime }=$RMN$(\pi )$. Let $x,y\in
V$ such that $y<_{\pi }x$ and $y<_{\pi ^{\prime }}x$. Then, $y$ is not the
first vertex of $\pi ^{\prime }$ and for the previous vertex $z$ of $y$ in $%
\pi ^{\prime }$, $y<_{\pi }x<_{\pi }z$, $zy\in E$, and $zx\notin E$.
\end{lemma}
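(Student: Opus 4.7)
The plan is to trace through the execution of RMN($\pi$) and exploit its greedy rightmost-selection rule. First, I would rule out that $y$ is the first vertex of $\pi'=\text{RMN}(\pi)$: the initial vertex produced by RMN is by definition the rightmost vertex of $\pi$, but $y<_\pi x$ shows that $x$ lies strictly to the right of $y$, a contradiction. Hence $y$ must have a predecessor $z$ in $\pi'$, and $z$ is the vertex visited immediately before $y$ during the algorithm.

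Next, I would identify how $y$ came to sit immediately after $z$ in $\pi'$. The RMN algorithm appends a vertex right after $z$ in only one of two ways: either (i) $z$ had no unvisited neighbor at its moment of processing, the inner \textbf{while}-loop terminates, and the outer loop picks $y$ as the rightmost currently unvisited vertex in $\pi$; or (ii) $y$ is chosen inside the inner loop as the rightmost unvisited neighbor of $z$ in $\pi$. I would rule out case~(i): at the moment $y$ is picked, $x$ is still unvisited because $y<_{\pi'} x$, so the assumption $y<_\pi x$ would make $x$ a more rightmost unvisited vertex than $y$, contradicting the choice of $y$. Thus case~(ii) must hold, which directly gives $zy\in E$. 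The same greedy argument will also force $zx\notin E$: if $zx\in E$, then $x$ would be an unvisited neighbor of $z$ lying to the right of $y$ in $\pi$, hence chosen by the inner loop in preference to $y$, contradicting $y<_{\pi'} x$.

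The remaining piece is $x<_\pi z$, and I plan to establish it by contradiction: assume $z<_\pi x$. Since $zx\notin E$ has just been obtained, Lemma~\ref{auxil-2} applied to the pair $(z,x)$ yields $x<_{\pi'} z$. However, in $\pi'$ the vertex $z$ is immediately followed by $y$ and $y<_{\pi'} x$ by hypothesis, so $z<_{\pi'} x$, contradicting $x<_{\pi'} z$. Hence $x<_\pi z$, completing the chain $y<_\pi x<_\pi z$. I do not expect any real obstacle here; the argument is essentially a careful reading of the greedy step in RMN plus one application of Lemma~\ref{auxil-2}. The most delicate point will be the case distinction explaining how $y$ can appear just after $z$ in $\pi'$, since overlooking case~(i) would leave the crucial conclusion $zy\in E$ unjustified.
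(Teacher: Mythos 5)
Your proposal is correct and follows essentially the same route as the paper: it rules out $y$ being first because the RMN ordering starts at the rightmost vertex of $\pi$, derives $zy\in E$ and $zx\notin E$ from the greedy rightmost-unvisited-neighbor rule (using that $x$ is still unvisited when $z$ is processed, since $z<_{\pi'}y<_{\pi'}x$), and obtains $x<_{\pi}z$ from Lemma~\ref{auxil-2}. The only difference is cosmetic — you establish $zy\in E$ before $zx\notin E$ and $x<_{\pi}z$, whereas the paper does it last, and you phrase the Lemma~\ref{auxil-2} step as a contradiction rather than a contrapositive.
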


\begin{proof}
First, note that the first vertex of the ordering $\pi ^{\prime }=$RMN$(\pi
) $ is the rightmost vertex of~$\pi $. Thus $y$ is not the first vertex of $%
\pi ^{\prime }$, since $y<_{\pi }x$. Let $z$ be the previous vertex of $y$
in $\pi ^{\prime }$. Then, $x$ is unvisited, when $z$ is being visited by $%
\pi ^{\prime }$, since $z<_{\pi ^{\prime }}y<_{\pi ^{\prime }}x$. Suppose
that $zx\in E$. Then, $y$ could not be the next vertex of $z$ in $\pi
^{\prime }$, since $x$ is unvisited and $y<_{\pi }x$%
, which is a contradiction. Thus $zx\notin E$. Furthermore, Lemma~\ref%
{auxil-2} implies that $x<_{\pi }z$, i.e.~$y<_{\pi }x<_{\pi }z$, since $%
z<_{\pi ^{\prime }}x$ and $zx\notin E$. Suppose now that $zy\notin E$. In
the case where no neighbor of $z$ is unvisited, when $z$ is being visited by 
$\pi ^{\prime }$, then $x$ is the next vertex of $z$ in $\pi ^{\prime }$
instead of $y$, since $y<_{\pi }x$, which is a contradiction. In the case
where at least one neighbor $w$ of $z$ is unvisited, when $z$ is being
visited by $\pi ^{\prime }$, then one of the unvisited neighbors $w$ of $z$
is the next vertex of~$z$ in $\pi ^{\prime }$ instead of $y$, which is again
a contradiction. Thus, $zy\in E$. This completes the proof of the lemma.
\end{proof}

\begin{notation}
\label{maximal-path-orderings-notation}
In the remainder of this section, we consider 
a cocomparability graph $G=(V,E)$ and 
an \emph{LDFS umbrella-free ordering} $\sigma$ of~$G$. 
Furthermore, we consider 
a \emph{maximal} path $P$ of~$G$, 
the \emph{restriction} $\sigma^{\prime }=\sigma |_{V(P)}$ of $\sigma$ on the vertices of $P$ and
an arbitrary \emph{LDFS~closure}~$\sigma^{\prime\prime}$ of~$\sigma^{\prime}$ (within~$\sigma$). 
Finally, we consider 
the orderings~$\widehat{\sigma }=$LDFS$^{+}(\sigma ^{\prime })$ 
and~$\widehat{\widehat{\sigma }}=$RMN$(\widehat{\sigma})$. 
\end{notation}

The next structural lemma will be used in the sequel, in order to prove in
Theorem~\ref{normal-thm} that for every maximal path $P$ there exists a 
normal path $P^{\prime}$ of~$G$, such that $V(P^{\prime})=V(P)$.

\begin{lemma}
\label{typical-order}Let $x,y,z$ be three vertices of $\sigma ^{\prime }$,
such that ${x<_{\widehat{\sigma }}y<_{\widehat{\sigma }}z}$ 
and~${z<_{\sigma^{\prime}}y<_{\sigma^{\prime}}x}$, where~${xy,xz\in E}$ and ${yz\notin E}$.
Then, $x$ is not the next vertex of $z$ in $\widehat{\widehat{\sigma }}$.
\end{lemma}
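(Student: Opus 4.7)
I would proceed by contradiction, assuming that $x$ is the immediate successor of $z$ in $\widehat{\widehat{\sigma}}$, and then produce a neighbour of $z$ in $\widehat{\sigma}$ that lies strictly to the right of $x$ in $\widehat{\sigma}$ and is still unvisited when RMN processes $z$; this would contradict the rightmost-unvisited-neighbour rule of Algorithm~\ref{rmn-alg}.

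First I would unpack the assumption at the level of Algorithm~\ref{rmn-alg}. Since $xz\in E$, the outer-loop branch of RMN (in which $z$ is the last vertex of its RMN-path and the next path starts at the rightmost unvisited vertex of $\widehat{\sigma}$) is ruled out: $x$ itself would be an unvisited neighbour of $z$ that should already have been consumed by the inner loop. Hence $x$ must be the rightmost unvisited neighbour of $z$ in $\widehat{\sigma}$ at the moment $z$ is processed. Next, observing that $\widehat{\sigma}$ is an LDFS umbrella-free ordering (LDFS$^{+}$ preserves umbrella-freeness by~\cite{Corneil-MPC}), I would apply Lemma~\ref{auxil-2} with $\pi=\widehat{\sigma}$ and $\pi'=\widehat{\widehat{\sigma}}$ to the non-edge $yz$ and the relation $y<_{\widehat{\sigma}}z$, obtaining $z<_{\widehat{\widehat{\sigma}}}y$; in particular, $y$ is itself unvisited when $z$ is processed.

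To produce the promised neighbour of $z$, I would exploit that $\sigma$ is LDFS. The triple $(z,y,x)$ satisfies $z<_{\sigma}y<_{\sigma}x$, $zx\in E$ and $zy\notin E$, so by Definition~\ref{ldfs-order} it is a good triple and therefore admits a $d$-vertex $d$ with $z<_{\sigma}d<_{\sigma}y$, $dy\in E$ and $dx\notin E$. Umbrella-freeness of $\sigma$ on $(z,d,x)$ together with $zx\in E$ and $dx\notin E$ then forces $dz\in E$. Assuming for the moment that $d\in V(P)$, I would track the LDFS$^{+}$ execution on $\sigma'$: the visit of $y$ prepends its index to $d$'s label (because $dy\in E$) while leaving $z$'s label untouched (because $yz\notin E$), so $d$ is selected before $z$ by LDFS$^{+}$, placing it strictly between $y$ and $z$ in $\widehat{\sigma}$ and in particular giving $x<_{\widehat{\sigma}}d<_{\widehat{\sigma}}z$. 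A short bookkeeping of the RMN execution (or a second invocation of Lemma~\ref{auxil-2}/\ref{auxil-3}) then shows that $d$ is still unvisited when $z$ is processed, yielding a rightmost-than-$x$ unvisited neighbour of $z$ and the desired contradiction.

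The main obstacle is the case $d\notin V(P)$, in which $d$ does not a priori appear in $\widehat{\sigma}$. My plan here is to invoke the LDFS-closure machinery just developed in this section --- Observation~\ref{vertices-closure}, Lemma~\ref{infinite-sequence-closure} and Corollary~\ref{rightmost-in-closure} --- in order to ``promote'' an external witness $d\in\sigma''\setminus\sigma'$ to a genuine witness $w\in\sigma'$ with the same structural properties $wy,wz\in E$, $wx\notin E$ and $z<_{\sigma}w<_{\sigma}y$; alternatively, the maximality of $P$ in the sense of Definition~\ref{maximal-path} can be used to rule out $d\notin V(P)$ directly via a local rerouting of $P$ through $d$ (using the edges $dy$ and $dz$) which would enlarge $V(P)$. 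Once such a $w\in V(P)$ is in hand, the rest of the argument is the routine LDFS$^{+}$/RMN unwinding described above.
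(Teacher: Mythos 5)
Your opening moves match the paper's: argue by contradiction, note that $x$ must be the rightmost unvisited neighbour of $z$ when RMN reaches $z$, get $z<_{\widehat{\widehat{\sigma}}}y$ from Lemma~\ref{auxil-2}, extract a $d$-vertex for the triple formed by $z,y,x$, deduce $dz\in E$ from umbrella-freeness of $\sigma$, and split on whether $d\in V(P)$. But there are two genuine gaps. First, you never fix an extremal counterexample, and without that the case $d\in V(P)$ does not close. Your claim that $d$ lands between $y$ and $z$ in $\widehat{\sigma}$ is unjustified: the label of $z$ can overtake that of $d$ during LDFS$^{+}$ via intermediate vertices adjacent to $z$ but not to $d$, and indeed the paper must treat $z<_{\widehat{\sigma}}d$ as a separate subcase (its Case 1b), manufacturing a further witness $d^{\prime}$ from the LDFS property of $\widehat{\sigma}$. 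Worse, even when $x<_{\widehat{\sigma}}d<_{\widehat{\sigma}}z$, the vertex $d$ may well have been visited \emph{before} $z$ in $\widehat{\widehat{\sigma}}$, so there is no immediate contradiction with the rightmost-unvisited-neighbour rule; the paper escapes by applying Lemma~\ref{auxil-3} to the predecessor $a$ of $d$ in $\widehat{\widehat{\sigma}}$ and observing that $d,z,a$ form a new offending configuration whose ``$z$'' lies further right in $\widehat{\sigma}$ --- which only contradicts anything because the triple $(x,y,z)$ was chosen with $z$ rightmost possible in $\widehat{\sigma}$ (and $y$ rightmost among those with equal $z$). That extremal choice is the engine of the whole first case and is absent from your plan.

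Second, the case $d\notin V(P)$ is the bulk of the paper's proof, and neither of your two suggested shortcuts works. There is no ``promotion'' of $d$ to an internal witness $w\in V(P)$ with $wy,wz\in E$, $wx\notin E$ and $z<_{\sigma}w<_{\sigma}y$: Observation~\ref{vertices-closure}, Lemma~\ref{infinite-sequence-closure} and Corollary~\ref{rightmost-in-closure} assert nothing of the sort, and such a $w$ need not exist. And the ``local rerouting of $P$ through $d$'' is not local: $y$ and $z$ are non-adjacent (indeed $yz\notin E$) and need not be consecutive in $P$, so the edges $dy$ and $dz$ alone do not let you splice $d$ into $P$. To actually produce a path on $V(P)\cup\{d\}$ the paper partitions the vertices between $y$ and $z$ in $\widehat{\sigma}$ into sets $A$ and $B$, proves adjacency facts about them (again using the extremality of $y$), builds an auxiliary ordering $\rho$ with a phantom vertex $v$, verifies that $\rho$ remains an LDFS umbrella-free ordering, invokes the fact from~\cite{Corneil-MPC} that RMN on such an ordering of a Hamiltonian graph yields a Hamiltonian path, and compares $\widehat{\rho}$ with $\widehat{\widehat{\sigma}}$ to exhibit the enlarged path $(P_{0},z,d,y,P_{1},x,Q)$ contradicting the maximality of $P$. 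None of this is ``routine unwinding''; as it stands your proposal does not prove the lemma.
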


\begin{proof}
The proof will be done by contradiction. We will exploit the facts 
that~$P$ is a maximal path (cf.~Notation~\ref{maximal-path-orderings-notation}) 
and that, given a Hamiltonian cocomparability graph $H$ 
and an LDFS umbrella-free ordering $\pi$ of $H$, 
the ordering RMN$(\pi)$ gives a Hamiltonian path of $H$~\cite{Corneil-MPC}. 
Suppose that there exists a triple $(x,y,z)$ of vertices in $\sigma ^{\prime }$ 
that satisfy the conditions of the lemma, 
such that $x$ is the next vertex of $z$ in $\widehat{\widehat{\sigma }}$.
Among all those triples, let $(x,y,z)$ be the one, where $z$ is the
rightmost possible in $\widehat{\sigma }$ and $y$ is the rightmost possible
in $\widehat{\sigma }$ among those with equal $z$. Note that always $z<_{%
\widehat{\widehat{\sigma }}}y$ by Lemma~\ref{auxil-2}, since $\widehat{%
\widehat{\sigma }}=$RMN$(\widehat{\sigma })$, and since $yz\notin E$ and $%
y<_{\widehat{\sigma }}z$ by assumption.

Since $z<_{\sigma ^{\prime }}y<_{\sigma ^{\prime }}x$, $xz\in E$, and $%
yz\notin E$, and since $\sigma ^{\prime \prime }$ is an LDFS-closure of $%
\sigma ^{\prime }$ (within $\sigma $), there exists by Observation~\ref%
{vertices-closure} a vertex $d$ in $\sigma ^{\prime \prime }$, such that $%
z<_{\sigma ^{\prime \prime }}d<_{\sigma ^{\prime \prime }}y<_{\sigma
^{\prime \prime }}x$, $dy\in E$, and $dx\notin E$. Thus, since $dx\notin E$
and $\sigma $ is an umbrella-free ordering, it follows that $zd\in E$. 
In the following we
will distinguish  the cases where $d\in \sigma ^{\prime }$
and $d\in \sigma ^{\prime \prime }\setminus \sigma ^{\prime }$.

\medskip

\emph{Case 1.} Suppose first that $d\in \sigma ^{\prime }$, i.e.~$d\in V(P)$, 
and thus $d\in \widehat{\sigma }$. Then, since $\widehat{\sigma }=$LDFS$%
^{+}(\sigma ^{\prime })$, and since $dx\notin E$ and $d<_{\sigma ^{\prime
}}x $, it follows by Lemma~\ref{auxil-2} that $x<_{\widehat{\sigma }}d$.

\medskip

\emph{Case 1a.} Suppose that $d<_{\widehat{\sigma }}z$, i.e.~$x<_{\widehat{%
\sigma }}d<_{\widehat{\sigma }}z$. If $d$ is unvisited when $z$ is being
visited in~$\widehat{\widehat{\sigma }}$, then $d$ would be the next vertex
of $z$ in $\widehat{\widehat{\sigma }}$ instead of $x$, since $zd\in E$,
which is a contradiction. Thus, $d$ has been visited before $z$ in $\widehat{%
\widehat{\sigma }}$, i.e.~$d<_{\widehat{\widehat{\sigma }}}z$. Therefore,
Lemma~\ref{auxil-3} implies that $d$ is not the first vertex in $\widehat{%
\widehat{\sigma }}$, while $d<_{\widehat{\sigma }}z<_{\widehat{\sigma }}a$, $%
ad\in E$, and $az\notin E$ for the previous vertex $a$ of $d$ in $\widehat{%
\widehat{\sigma }}$. Then, in particular, $a<_{\sigma ^{\prime }}z$ by 
Lemma~\ref{auxil-2}, since $z<_{\widehat{\sigma }}a$, $az\notin E$, and $\widehat{%
\sigma }=$LDFS$^{+}(\sigma ^{\prime })$. Summarizing, $d<_{\widehat{\sigma }%
}z<_{\widehat{\sigma }}a$ and $a<_{\sigma ^{\prime }}z<_{\sigma ^{\prime }}d$, 
where $dz,da\in E$ and $za\notin E$, while $d$ is the next vertex of $a$
in $\widehat{\widehat{\sigma }}$. This comes in contradiction to the choice
of the triple $(x,y,z)$, for which $z$ is the rightmost possible in $%
\widehat{\sigma }$.

\medskip

\emph{Case 1b.} Suppose that $z<_{\widehat{\sigma }}d$, i.e.~$y<_{\widehat{%
\sigma }}z<_{\widehat{\sigma }}d$. Recall that $yd\in E$ and $yz\notin E$.
Thus, since~$\widehat{\sigma }$ is an LDFS ordering, there exists a vertex $%
d^{\prime }$ in $\widehat{\sigma }$, such that $y<_{\widehat{\sigma }%
}d^{\prime }<_{\widehat{\sigma }}z<_{\widehat{\sigma }}d$, $d^{\prime }z\in
E $, and $d^{\prime }d\notin E$. Note that $x<_{\widehat{\sigma }}y<_{%
\widehat{\sigma }}d^{\prime }$. Similarly to the previous paragraph, if $%
d^{\prime }$ is unvisited when $z$ is being visited in $\widehat{\widehat{%
\sigma }}$, then $d^{\prime }$ would be the next vertex of $z$ in $\widehat{%
\widehat{\sigma }}$ instead of $x$, since $d^{\prime }z\in E$, which is a
contradiction. Thus, $d^{\prime }$ has been visited before $z$ in $\widehat{%
\widehat{\sigma }}$, i.e.~$d^{\prime }<_{\widehat{\widehat{\sigma }}}z$.
Therefore, Lemma~\ref{auxil-3} implies that $d^{\prime }$ is not the first
vertex in $\widehat{\widehat{\sigma }}$, while $d^{\prime }<_{\widehat{%
\sigma }}z<_{\widehat{\sigma }}a^{\prime }$, $a^{\prime }d^{\prime }\in E$,
and $a^{\prime }z\notin E$ for the previous vertex $a^{\prime }$ of $%
d^{\prime }$ in $\widehat{\widehat{\sigma }}$. Then, in particular, $%
a^{\prime }<_{\sigma ^{\prime }}z$ by Lemma~\ref{auxil-2}, since $z<_{%
\widehat{\sigma }}a^{\prime }$, $a^{\prime }z\notin E$, and $\widehat{\sigma 
}=$LDFS$^{+}(\sigma ^{\prime })$. Similarly, $d<_{\sigma ^{\prime
}}d^{\prime }$, since $d^{\prime }<_{\widehat{\sigma }}d$ and $d^{\prime
}d\notin E$. Therefore, since $z<_{\sigma ^{\prime }}d$, it follows that $%
z<_{\sigma ^{\prime }}d<_{\sigma ^{\prime }}d^{\prime }$. Summarizing, $%
d^{\prime }<_{\widehat{\sigma }}z<_{\widehat{\sigma }}a^{\prime }$ and $%
a^{\prime }<_{\sigma ^{\prime }}z<_{\sigma ^{\prime }}d^{\prime }$, where $%
d^{\prime }z,d^{\prime }a^{\prime }\in E$ and $za^{\prime }\notin E$, while $%
d^{\prime }$ is the next vertex of $a^{\prime }$ in $\widehat{\widehat{%
\sigma }}$. This comes in contradiction to the choice of the triple $(x,y,z)$, 
such that $z$ is the rightmost possible in $\widehat{\sigma }$.

\medskip

\emph{Case 2.} Suppose now that $d\in \sigma ^{\prime \prime }\setminus
\sigma ^{\prime }$, i.e.~$d\notin V(P)$, and thus $d\notin \widehat{\sigma }$. 
Consider the set of vertices $w$ of $\widehat{\sigma }$, such that $y<_{%
\widehat{\sigma }}w<_{\widehat{\sigma }}z$. We partition this set into the
(possibly empty) sets $A=\{w \ | \ y<_{\widehat{\sigma }}w<_{\widehat{\sigma }%
}z,wz\notin E\}$ and $B=\{w \ | \ y<_{\widehat{\sigma }}w<_{\widehat{\sigma }%
}z,wz\in E\}$. First observe that~$xw\in E$ for every $w\in A$, since $xz\in
E$ and $zw\notin E$, and since $\widehat{\sigma }$ is an umbrella-free
ordering. We will now prove that $yw\in E$ for every vertex $w\in A$.
Suppose otherwise that $yw,zw\notin E$ for a vertex $w$, for which $y<_{%
\widehat{\sigma }}w<_{\widehat{\sigma }}z$. Then, $z<_{\sigma ^{\prime
}}w<_{\sigma ^{\prime }}y$ by Lemma~\ref{auxil-2}, since $y<_{\widehat{%
\sigma }}w<_{\widehat{\sigma }}z$, $yw,zw\notin E$, and $\widehat{\sigma }=$%
LDFS$^{+}(\sigma ^{\prime })$. Recall that $y<_{\sigma ^{\prime }}x$ by
assumption in the statement of the lemma. Thus, $x<_{\widehat{\sigma }}w<_{%
\widehat{\sigma }}z$ and $z<_{\sigma ^{\prime }}w<_{\sigma ^{\prime }}x$,
where $xw,xz\in E$ and $wz\notin E$, while $x$ is the next vertex of $z$ in $%
\widehat{\widehat{\sigma }}$. Therefore, since $y<_{\widehat{\sigma }}w$,
this comes in contradiction to the choice of the triple $(x,y,z)$, such that 
$y$ is the rightmost possible (with respect to $z$) in $\widehat{\sigma }$.
Therefore, $yw\in E$ for every vertex $w\in A$.

If a vertex $w\in B$ is unvisited when $z$ is being visited in $\widehat{%
\widehat{\sigma }}$, then $w$ would be the next vertex of $z$ in $\widehat{%
\widehat{\sigma }}$ instead of $x$, which is a contradiction to the
assumption. Thus, $w$ has been visited before $z$ in $\widehat{\widehat{%
\sigma }}$, i.e.~$w<_{\widehat{\widehat{\sigma }}}z$, for every $w\in B$. On
the other hand, Lemma~\ref{auxil-2} implies that $z<_{\widehat{\widehat{%
\sigma }}}w$ for every $w\in A$, i.e.~$w$ is being visited after $z$ in $%
\widehat{\widehat{\sigma }}$, since $\widehat{\widehat{\sigma }}=$RMN$(%
\widehat{\sigma })$, $w<_{\widehat{\sigma }}z$, and $wz\notin E$ for every $%
w\in A$. Let $v$ be a vertex that is visited after $z$ in $\widehat{\widehat{%
\sigma }}$, i.e.~$z<_{\widehat{\widehat{\sigma }}}v$. Then, $v<_{\widehat{%
\sigma }}z$. Indeed, suppose otherwise that $z<_{\widehat{\sigma }}v$ for
such a vertex $v$. If $zv\in E$, then $v$ is the next vertex of $z$ in~$%
\widehat{\widehat{\sigma }}$ instead of $x$, since in this case $x<_{%
\widehat{\sigma }}z<_{\widehat{\sigma }}v$, which is a contradiction. If $%
zv\notin E$, then $v<_{\widehat{\widehat{\sigma }}}z$ by Lemma~\ref{auxil-2}, 
since $\widehat{\widehat{\sigma }}=$RMN$(\widehat{\sigma })$, which is
again a contradiction. Thus, $v<_{\widehat{\sigma }}z$ for every vertex $v$
that is visited after $z$ in $\widehat{\widehat{\sigma }}$. 
In the following we distinguish the cases $A\neq \emptyset $ and $A=\emptyset $. The
case where $A=\emptyset $ can be handled similarly to the case where $A\neq
\emptyset $, as we will see in the sequel.

\medskip

\emph{Case 2a.} $A\neq \emptyset $. Recall that $z<_{\widehat{\widehat{%
\sigma }}}w$ for every $w\in A$, i.e.~every $w\in A$ is visited after $z$ 
in~$\widehat{\widehat{\sigma }}$, as we proved above. Thus, since $x$ is the
next vertex of $z$ in $\widehat{\widehat{\sigma }}$ by assumption, all
vertices $w\in A$ are unvisited, when $x$ is being visited by~$\widehat{%
\widehat{\sigma }}$. Therefore, the next vertex of $x$ in $\widehat{\widehat{%
\sigma }}$ is some $w_{1}\in A$, since $xw\in E$ and $y<_{\widehat{\sigma }%
}w $ for every $w\in A$. Now recall by the previous paragraph that $v<_{%
\widehat{\sigma }}z$ for every vertex $v$ that is visited after $z$ in $%
\widehat{\widehat{\sigma }}$, and that all vertices of $B$ have been visited
before $z$ in $\widehat{\widehat{\sigma }}$. Furthermore recall that $yw\in
E $ for every $y\in A$. Therefore, $\widehat{\widehat{\sigma }}$ visits
after $w_{1}$ only vertices $w\in A$, until it reaches vertex $y$. Denote by 
$P^{\prime }$ the path on the vertices of $V(P)$ produced by $\widehat{%
\widehat{\sigma }}$. Suppose that not all vertices of $A$ have been visited
before $y$ in $P^{\prime }$, i.e.~in $\widehat{\widehat{\sigma }}$. Then the
next vertex of $y$ in $\widehat{\widehat{\sigma }}$ is again some $w_{2}\in
A $. That is, $P^{\prime }=(P_{0},z,x,P_{1},y,w_{2},P_{2})$ for some
subpaths $P_{0}$, $P_{1}$, and $P_{2}$ of $P^{\prime }$, where $%
V(P_{1})\subseteq A$ and $w_{2}\in A$. Thus, since $xw\in E$ for every $w\in
A$, there exists the path $P^{\prime \prime
}=(P_{0},z,d,y,P_{1},x,w_{2},P_{2})$, where $V(P^{\prime \prime })=V(P)\cup
\{d\}$, which is a contradiction, since $P$ is a maximal path.

Thus we may assume for the sequel that all vertices of $A$ have been visited
before $y$ in~$P^{\prime }$, i.e.~in $\widehat{\widehat{\sigma }}$. Then, $%
V(P_{1})=A$. If $y$ is the last vertex in $\widehat{\widehat{\sigma }}$,
then $P^{\prime }=(P_{0},z,x,P_{1},y)$ for some subpaths $P_{0}$ and $P_{1}$
of $P^{\prime }$, where $V(P_{1})=A$. In this case, there exists the 
path~$P^{\prime \prime }=(P_{0},z,d,y,P_{1},x)$, where $V(P^{\prime \prime
})=V(P)\cup \{d\}$, which is a contradiction, since~$P$ is a maximal path.
Suppose that $y$ is not the last vertex in $\widehat{\widehat{\sigma }}$ and
denote by $q\notin A$ the next vertex of $y$ in $\widehat{\widehat{\sigma }}$. 
Then, $P^{\prime }=(P_{0},z,x,P_{1},y,q,P_{2})$ for some subpaths $P_{0}$, 
$P_{1}$, and~$P_{2}$ of $P^{\prime }$, where $V(P_{1})=A$, and let $%
P_{1}=(w_{1},w_{2},\ldots ,w_{\ell })$. If $w_{\ell }q\in E$, there exists
the path~$P^{\prime \prime }=(P_{0},z,d,y,x,P_{1},q,P_{2})$, which
contradicts the maximality of $P$. If $xq\in E$, then there exists the 
path $P^{\prime \prime }=(P_{0},z,d,y,P_{1},x,q,P_{2})$, which again contradicts 
the maximality of~$P$.

To complete the proof of Case 2a, we now assume that $w_{\ell }q,xq\notin E$. 
First we proof that $q<_{\widehat{\sigma }}x$. Otherwise, suppose
that $y<_{\widehat{\sigma }}q$. Then $q<_{\widehat{\sigma }}z$, since $v<_{%
\widehat{\sigma }}z$ for every vertex $v$ that is visited after $z$ in $%
\widehat{\widehat{\sigma }}$, as we proved above, and thus $y<_{\widehat{%
\sigma }}q<_{\widehat{\sigma }}z$. Furthermore $q\notin B$, since all
vertices of $B$ have been visited before $z$ in $\widehat{\widehat{\sigma }}$%
, as we proved above. Therefore $y\in A$, which is a contradiction, since we
assumed that all vertices of $A$ have been visited before $y$ in $\widehat{%
\widehat{\sigma }}$. Now suppose  $x<_{\widehat{\sigma }}q<_{\widehat{\sigma }%
}y$, i.e.~$x<_{\widehat{\sigma }}q<_{\widehat{\sigma }}y<_{\widehat{\sigma }%
}w_{\ell }$. Then the vertices $x,q,w_{\ell }$ build an umbrella in $%
\widehat{\sigma }$, which is again a contradiction, since $\widehat{\sigma }$
is umbrella-free. Thus, $q<_{\widehat{\sigma }}x$.

Let $s$ be a vertex, such that $x<_{\widehat{\sigma }}s<_{\widehat{\sigma }%
}y $ and $s$ is visited after $y$ in $\widehat{\widehat{\sigma }}$. Then, $%
s\neq q$, since $q<_{\widehat{\sigma }}x<_{\widehat{\sigma }}s$. If $ys\in E$, 
then $s$ is the next vertex of $y$ in $\widehat{\widehat{\sigma }}$
instead of $q$, since $\widehat{\widehat{\sigma }}=$RMN$(\widehat{\sigma })$, 
which is a contradiction. Thus $ys\notin E$ for every vertex $s$, such
that $x<_{\widehat{\sigma }}s<_{\widehat{\sigma }}y$ and $s$ is visited
after $y$ in $\widehat{\widehat{\sigma }}$.

We now construct a new ordering $\rho $ of $V(P)\cup \{v\}$, where $v$ is a
new vertex. This new ordering $\rho $ is based on the LDFS umbrella-free
ordering $\widehat{\sigma }$ and the structure of $\rho $ will allow us to
show that $G$ has a path on the vertices of $V(P)\cup \{d\}$, thereby
contradicting the maximality of path $P$. The ordering $\rho $ is
constructed by adding the new vertex $v$ immediately to the right of vertex $%
y$ in $\widehat{\sigma }$. The adjacencies between the vertices of $V(P)$ in 
$\widehat{\sigma }$ remain the same in $\rho $, while the adjacencies
between the new vertex $v$ and the vertices of $V(P)$ in $\rho $ are defined
as follows. First, $v$ is made adjacent in $\rho $ to $y$ and to all
neighbors of $y$. Second, $v$ is made adjacent also to $z$ and to every
vertex $w\in B$. Note that $v$ is adjacent in $\rho $ to all vertices $w$ of 
$\widehat{\sigma }$, for which $v<_{\rho }w\leq _{\rho }z$. Therefore, if $%
wy\notin E$ and $wv\in E$ in $\rho $ for some vertex $w\in V(P)\setminus
\{y\}$, then $y<_{\widehat{\sigma }}w\leq _{\widehat{\sigma }}z$ (in
particular, $w\in B$). Let $H$ be the graph induced by the ordering $\rho $.

We will prove that $\rho $ remains an LDFS umbrella free ordering of the
vertices of $V(P)\cup \{v\}$. Since $G[V(P)]$ (i.e.~the subgraph of $G$
induced by $\widehat{\sigma }$) is an induced subgraph of $H$, if there is
an umbrella or a bad triple in $\rho $, then the new vertex $v$ must belong
to this umbrella or bad triple, since $\widehat{\sigma }=\rho |_{V(P)}$ is
an LDFS umbrella-free ordering of $V(P)$. Suppose that $v$ belongs to an
umbrella in $\rho $ with vertices $a,b,v$, where either $v<_{\rho }a<_{\rho
}b$, or $a<_{\rho }v<_{\rho }b$, or $a<_{\rho }b<_{\rho }v$.

Suppose first that $v<_{\rho }a<_{\rho }b$. Then, since $va\notin E$, it
follows by the construction of $\rho $ that $z<_{\rho }a$, i.e.~$z<_{\rho
}a<_{\rho }b$, and thus also $ya\notin E$ and $yb\in E$. That is, the
vertices $y,a,b$ build an umbrella in $\widehat{\sigma }$, which is a
contradiction. Suppose now that $a<_{\rho }v<_{\rho }b$. Then, $a\neq y$,
since $v$ is adjacent to $y$ in $\rho $. Thus, since $va\notin E$, it
follows by the construction of $\rho $ that also $ay\notin E$. Furthermore,
since $vb\notin E$, it follows by the construction of $\rho $ that $z<_{\rho
}b$, and thus also $yb\notin E$. That is, the vertices $a,y,b$ build an
umbrella in $\widehat{\sigma }$, which is a contradiction. Suppose finally
that $a<_{\rho }b<_{\rho }v$. Then, $b\neq y$, since $v$ is adjacent to $y$
in $\rho $. Furthermore, $a\neq y$, since $y$ lies immediately to the left
of $v$ in $\rho $. Thus, since $av\in E$ and $bv\notin E$, it follows by the
construction of $\rho $ that also $ay\in E$ and $by\notin E$, i.e.~the
vertices $a,b,y$ build an umbrella in $\widehat{\sigma }$, which is a
contradiction. Thus, $\rho $ is umbrella-free.

Suppose now that $v$ belongs to a bad triple in $\rho $ with vertices $a,b,v$, 
where either $v<_{\rho }a<_{\rho }b$ or $a<_{\rho }v<_{\rho }b$ or $%
a<_{\rho }b<_{\rho }v$. First let $v<_{\rho }a<_{\rho }b$, where $vb\in E$
and $va\notin E$. Since $va\notin E$, it follows by the construction of $%
\rho $ that $z<_{\widehat{\sigma }}a<_{\widehat{\sigma }}b$, and thus also $%
y<_{\widehat{\sigma }}a<_{\widehat{\sigma }}b$, $yb\in E$, and $ya\notin E$.
Since $\widehat{\sigma }$ is an LDFS ordering, there exists a vertex $%
v^{\prime }$ between $y$ and $a$ in $\widehat{\sigma }$, such that $%
v^{\prime }a\in E$ and $v^{\prime }b\notin E$. Note that $v^{\prime }\neq v$, 
since $vb\in E$ and $v^{\prime }b\notin E$. Thus, the vertices $v,a,b$ do
not build a bad triple in $\rho $, which is a contradiction. Now let $%
a<_{\rho }v<_{\rho }b$, where $ab\in E$ and $av\notin E$. Note that $a\neq y$, 
since $av\notin E$, and thus also $a<_{\widehat{\sigma }}y<_{\widehat{%
\sigma }}b$ and $ay\notin E$. Since~$\widehat{\sigma }$ is an LDFS ordering,
there exists a vertex $v^{\prime }$ between $a$ and $y$ in $\widehat{\sigma }
$, such that $v^{\prime }b\notin E$ and $v^{\prime }y\in E$, and thus also $%
v^{\prime }v\in E$. Thus, the vertices $a,v,b$ do not build a bad triple in $%
\rho $, which is a contradiction. Finally let $a<_{\rho }b<_{\rho }v$, where 
$av\in E$ and $ab\notin E$. By the construction of~$\rho $, note that $b\neq
y$, since $av\in E$ and $ab\notin E$. Thus $a<_{\widehat{\sigma }}b<_{%
\widehat{\sigma }}y$. Furthermore, $ay\in E$ by the construction of $\rho $,
since $av\in E$. Since $\widehat{\sigma }$ is an LDFS ordering, there exists
a vertex $v^{\prime }$ between~$a$ and $b$ in $\widehat{\sigma }$, such that 
$v^{\prime }b\in E$ and $v^{\prime }y\notin E$, and thus also $v^{\prime
}v\notin E$. Thus, the vertices $a,b,v$ do not build a bad triple in $\rho$, 
which is a contradiction. Summarizing, $\rho $ is an LDFS umbrella-free ordering.

Since $\widehat{\sigma }$ is an LDFS umbrella-free ordering of the vertices
of a path $P$, the ordering~$\widehat{\widehat{\sigma }}=$RMN$(\widehat{%
\sigma })$ gives a Hamiltonian path $P^{\prime }$ of the subgraph of $G$
induced by $V(P)$~\cite{Corneil-MPC}. Recall that $P^{\prime
}=(P_{0},z,x,P_{1},y,q,P_{2})$ for some subpaths $P_{0}$, $P_{1}$, and $%
P_{2} $ of $P^{\prime }$, where $V(P_{1})=A$. Thus, the graph $H$ induced by
the ordering $\rho $ of the vertices of $V(P)\cup \{v\}$ is again
Hamiltonian, since we can just insert in $P^{\prime }$ the new vertex $v$ of 
$\rho $ between $z$ and $x$. Therefore, since $\rho $ is an LDFS
umbrella-free ordering, the ordering $\widehat{\rho }=$RMN$(\rho )$ gives a
Hamiltonian path of $H$~\cite{Corneil-MPC}, i.e.~of the graph induced by $%
\rho $. We will compare now the orderings $\widehat{\widehat{\sigma }}$ and $%
\widehat{\rho }$.

First, we will prove that both orderings $\widehat{\widehat{\sigma }}$ and $%
\widehat{\rho }$ coincide until vertex $z$ is visited. Indeed, since $%
\widehat{\widehat{\sigma }}$ and $\widehat{\rho }$ differ only at the vertex 
$v$, the only difference of these orderings before $z$ is visited could be
that $v$ is visited before $z$ in $\widehat{\rho }$. Suppose that $v$ is
visited before $z$ in $\widehat{\rho }$. Note that the first vertex of the
ordering $\widehat{\rho }=$RMN$(\rho )$ is the rightmost vertex of $\rho $.
Therefore, $v$ is not the first vertex of $\widehat{\rho }$, since $v<_{\rho
}z$. Let $a$ be the previous vertex of $v$ in $\widehat{\rho }$. Then, $a$
is adjacent to $v$ in $\rho $, since $\widehat{\rho }$ is a path. If $a$ is
adjacent to $z$ in $\rho $, then $z$ is the next vertex of $a$ in $\widehat{%
\rho }$ instead of $v$, since $v<_{\rho }z$, which is a contradiction. Thus, 
$a$ is not adjacent to $z$ in both $\rho $ and $\widehat{\sigma }$. Note
that both orderings $\widehat{\widehat{\sigma }}$ and $\widehat{\rho }$
coincide at least until the visit of $a$, which is visited before $z$ in
both $\widehat{\widehat{\sigma }}$ and $\widehat{\rho }$, and thus $a\neq y$%
. If $a<_{\rho }y$ or $z<_{\rho }a$, it follows by the construction of $\rho 
$ that $a$ is adjacent also to $y$ in $\widehat{\sigma }$. Thus, since $v$
is the next vertex of $a$ in $\widehat{\rho }=$RMN$(\rho )$, it follows that 
$y$ is the next vertex of $a$ in $\widehat{\widehat{\sigma }}=$RMN$(\widehat{%
\sigma })$, i.e.~that $y$ is visited before $z$ in $\widehat{\widehat{\sigma 
}}$, which is a contradiction. Suppose that $v<_{\rho }a<_{\rho }z$. Then,
since $az\notin E$, it follows that $a\in A$, and thus $ay\in E$ in the
ordering~$\widehat{\sigma }$, as we proved above. Therefore, since $v$ is
the next vertex of $a$ in $\widehat{\rho }$, it follows that $y$ is the next
vertex of $a$ in $\widehat{\widehat{\sigma }}$, i.e.~that $y$ is visited
before $z$ in $\widehat{\widehat{\sigma }}$, which is a contradiction.
Therefore, $v$ is not visited before $z$ in $\widehat{\rho }$, and thus both
orderings $\widehat{\widehat{\sigma }}$ and $\widehat{\rho }$ coincide until
vertex $z$ is visited.

Now, $v$ is the rightmost unvisited neighbor of $z$ in $\rho $ at the time
that vertex $z$ is being visited by $\widehat{\rho }$, since by our initial
assumption $x$ is the next vertex of $z$ in $\widehat{\widehat{\sigma }}$.
Furthermore, similarly to $\widehat{\widehat{\sigma }}$, the ordering $%
\widehat{\rho }$ visits the vertices of $P_{1}$ after $v$, where $V(P_{1})=A$%
. In the sequel, after visiting all vertices of $P_{1}$, $\widehat{\rho }$
visits $y$ as the rightmost unvisited neighbor of the last vertex of $P_{1}$%
. Recall that $ys\notin E$ for every unvisited vertex $s$, such that $x<_{%
\widehat{\sigma }}s<_{\widehat{\sigma }}y$, and that the next vertex of $y$
in $\widehat{\widehat{\sigma }}$ is $q<_{\widehat{\sigma }}x$. Therefore, $x$
is the rightmost unvisited neighbor of $y$ in $\rho $ at the time that $y$
is being visited by $\widehat{\rho }$, and thus $\widehat{\rho }$ visits $x$
after $y$. Summarizing, the Hamiltonian path $P_{\rho }$ of the graph $H$
(i.e.~the graph induced by $\rho $) that is computed by $\widehat{\rho }$ 
is~$P_{\rho }=(P_{0},z,v,P_{1},y,x,Q)$ for some subpath $Q$ of $P_{\rho }$,
where $P^{\prime }=(P_{0},z,x,P_{1},y,q,P_{2})$. Note that $%
V(Q)=V(P_{2})\cup \{q\}$, since $V(P_{\rho })=V(P^{\prime })\cup
\{v\}=V(P)\cup \{v\}$. Furthermore, note that $Q$ is also a path of $G[V(P)]$%
, since $v\notin V(Q)$. Then, there exists the path $P^{\prime \prime
}=(P_{0},z,d,y,P_{1},x,Q)$ of $G$, where $V(P^{\prime \prime })=V(P)\cup
\{d\}$, which is a contradiction, since $P$ is a maximal path.

\emph{Case 2b.} $A=\emptyset $. Then $y$ is the next vertex of $x$ in $%
\widehat{\widehat{\sigma }}$, since $xy\in E$ and all vertices to the right
of $y$ in $\widehat{\sigma }$ have been already visited before $x$ in $%
\widehat{\widehat{\sigma }}$. That is, the path $P^{\prime }$ of the
vertices of~$V(P)$ constructed by $\widehat{\widehat{\sigma }}$ is~$%
P^{\prime }=(P_{0},z,x,y,P_{3})$, for some subpaths $P_{0}$ and $P_{3}$ of~$%
P^{\prime }$. Consider the ordering $\rho $, which obtained by adding a new
vertex $v$ to $\widehat{\sigma }$, as described in Case~2a. Then, similarly
to Case~2a, the graph $H$ induced by $\rho $ is Hamiltonian and the ordering 
$\widehat{\rho }=$RMN$(\rho )$ gives a Hamiltonian path $P_{\rho }$ of $H$,
where $P_{\rho }=(P_{0},z,v,y,Q)$. Note that $V(Q)=V(P_{3})\cup \{x\}$ and
that~$Q$ is also a path of $G[V(P)]$, since $v\notin V(Q)$. Thus, there
exists the path~${P^{\prime \prime }=(P_{0},z,d,y,Q)}$ of~$G$, where $%
V(P^{\prime \prime })=V(P)\cup \{d\}$, which is a contradiction, since $P$
is a maximal path. This completes the proof of the lemma.
\end{proof}

\medskip

The next lemma now follows by Lemma~\ref{typical-order}.

\begin{lemma}
\label{first-two}Let $x$ be the rightmost vertex in $\sigma ^{\prime }$ and $%
y$ be the rightmost neighbor of $x$ in $\sigma ^{\prime }$. Then,~$x$ is the
last vertex of $\widehat{\widehat{\sigma }}$ and $y$ is the previous vertex
of $x$ in $\widehat{\widehat{\sigma }}$.
\end{lemma}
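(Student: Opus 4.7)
My plan is to reduce both parts of the lemma to Lemma~\ref{typical-order}, using three structural observations about the orderings involved. First, since LDFS$^+$ begins at the rightmost vertex of its input ordering, the first vertex of $\widehat{\sigma}$ is exactly $x$. Second, immediately after $x$ is processed every neighbor of $x$ in $\sigma'$ receives label $\{1\}$ while every non-neighbor keeps label $\varepsilon$, so LDFS$^+$ next selects the rightmost neighbor of $x$ in $\sigma'$, namely $y$; hence $\widehat{\sigma}(2) = y$. Third, by the result of~\cite{Corneil-MPC} quoted in the proof of Lemma~\ref{typical-order}, the ordering $\widehat{\widehat{\sigma}}=$RMN$(\widehat{\sigma})$ traces a Hamiltonian path $(u_1, u_2, \ldots, u_k)$ of $G[V(P)]$. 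I convert between the orderings $\sigma'$ and $\widehat{\sigma}$ (and between $\widehat{\sigma}$ and $\widehat{\widehat{\sigma}}$) via repeated use of Lemma~\ref{auxil-2}.

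For Part~(1), I suppose $x = u_j$ with $2 \leq j < k$ (the case $j=1$ collapses to $|V(P)|=1$), and set $v = u_{j-1}$ and $w = u_{j+1}$. Since $v$ chose $x$ as its rightmost unvisited neighbor and $x$ lies at position~$1$ of $\widehat{\sigma}$, the vertex $x$ must be the only unvisited neighbor of $v$ at step~$j$; in particular $vw \notin E$. Since $v <_{\widehat{\widehat{\sigma}}} w$, Lemma~\ref{auxil-2} yields $w <_{\widehat{\sigma}} v$, and a second application of Lemma~\ref{auxil-2} yields $v <_{\sigma'} w$. The triple $(x, w, v)$ now satisfies every hypothesis of Lemma~\ref{typical-order}, whose conclusion that $x$ is not the next vertex of $v$ in $\widehat{\widehat{\sigma}}$ contradicts $v = u_{j-1}$ and $x = u_j$.

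For Part~(2), let $p = u_{k-1}$ and assume $p \neq y$. Because $y$ is the rightmost neighbor of $x$ in $\sigma'$, $p <_{\sigma'} y <_{\sigma'} x$; because $x$ and $y$ occupy positions~$1$ and~$2$ of $\widehat{\sigma}$, also $x <_{\widehat{\sigma}} y <_{\widehat{\sigma}} p$. If $py \notin E$, the triple $(x, y, p)$ directly satisfies the hypotheses of Lemma~\ref{typical-order}, which then rules out $x$ being the next vertex of $p$ in $\widehat{\widehat{\sigma}}$ and so contradicts $p = u_{k-1}$, $x = u_k$.

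The main obstacle is the case $py \in E$, where the triple $(x,y,p)$ fails the non-adjacency hypothesis of Lemma~\ref{typical-order}. I then locate an auxiliary vertex $z$ earlier in the RMN walk: set $z = u_{k-3}$ if $u_{k-2} = y$, and otherwise $z = u_{j-1}$ where $y = u_j$ and $j < k-2$. In either case $z$'s immediate successor in the RMN walk is $y$, and since $z$ selected $y$ (at position~$2$ of $\widehat{\sigma}$) over $p$ (at position~$\geq 3$), the vertex $p$ cannot be a neighbor of $z$, so $zp \notin E$. Lemma~\ref{auxil-2} then forces the $\sigma'$-order of $z, p$ to be the reverse of their $\widehat{\sigma}$-order; a case split on which order holds produces a triple---either $(y, p, z)$ or $(x, z, p)$---that satisfies Lemma~\ref{typical-order} and contradicts the step of the RMN walk that selected $y$ after $z$, respectively the step that selected $x$ after $p$. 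The one residual configuration, in which $z$ lies strictly between $y$ and $x$ in $\sigma'$ but $zx \notin E$, is ruled out because then $(p, z, x)$ would form an umbrella in $\sigma'$, violating its umbrella-freeness.
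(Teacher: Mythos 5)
Your proof is correct, and its first half (showing $x$ is last) is essentially the paper's argument: take the neighbors $v,w$ of $x$ in the walk, observe $vw\notin E$ because the RMN step at $v$ chose the $\widehat{\sigma}$-leftmost vertex $x$, reverse the orders twice via Lemma~\ref{auxil-2}, and feed the triple into Lemma~\ref{typical-order}. The second half diverges in how the hard subcase $py\in E$ is handled. The paper first shows $py\in E$ is forced (if $py\notin E$ then Lemma~\ref{auxil-2} would put $p$ before $y$ in $\widehat{\widehat{\sigma}}$, which is false), and then invokes Lemma~\ref{auxil-3} applied to the pair $(y,p)$: that lemma hands over, in one shot, the predecessor $z$ of $y$ in $\widehat{\widehat{\sigma}}$ together with $y<_{\widehat{\sigma}}p<_{\widehat{\sigma}}z$, $yz\in E$ and $zp\notin E$, so only your Case~A ever occurs and the single triple $(y,p,z)$ finishes the proof. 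You instead re-derive $zp\notin E$ directly from the greedy choice at $z$ (a clean argument, since $y$ sits at position~2 of $\widehat{\sigma}$ and $p$ at position $\geq 3$), but because you do not pin down the $\widehat{\sigma}$-position of $z$ relative to $p$, you must carry the extra Case~B, where you recover the missing hypothesis $xz\in E$ from umbrella-freeness of $(p,z,x)$ in $\sigma'$ and apply Lemma~\ref{typical-order} to $(x,z,p)$. Both routes are valid; the paper's use of Lemma~\ref{auxil-3} buys a shorter, single-case argument, while yours avoids that lemma at the cost of one more case. Two small points you should make explicit: (i) the predecessor $z$ of $y$ exists, i.e.\ $y$ is not the first vertex of $\widehat{\widehat{\sigma}}$ --- this holds because the first vertex of an RMN ordering is the rightmost vertex of $\widehat{\sigma}$, whereas $y$ occupies position~2 of $\widehat{\sigma}$ and $p\neq y$ forces $|V(P)|\geq 3$; and (ii) your ``residual configuration'' is really all of Case~B with $zx\notin E$ (one only needs $p<_{\sigma'}z<_{\sigma'}x$ for the umbrella, not that $z$ lies between $y$ and $x$), which your own umbrella argument in fact already covers.
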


\begin{proof}
First note that, if $\sigma ^{\prime }$ has at least two vertices, $x$ is
not the first vertex of $\widehat{\widehat{\sigma }}$, since~$\widehat{%
\widehat{\sigma }}=$RMN$(\widehat{\sigma })$ and $x$ is the leftmost vertex
of $\widehat{\sigma }$. Suppose that $x$ is not the last vertex of $\widehat{%
\widehat{\sigma }}$, i.e.~$x$ is an intermediate vertex. Let $a$ and $b$ be
the previous and the next vertices of~$x$ in~$\widehat{\widehat{\sigma }}$,
respectively. Then, $a<_{\widehat{\widehat{\sigma }}}b$. If $ab\in E$, then $%
b$ is the next vertex of $a$ in $\widehat{\widehat{\sigma }}$ instead of~$x$, 
since~$x<_{\widehat{\sigma }}a$, which is a contradiction. Therefore~$%
ab\notin E$, and thus $b<_{\widehat{\sigma }}a$ by Lemma~\ref{auxil-2},
since~$a<_{\widehat{\widehat{\sigma }}}b$. Furthermore $a<_{\sigma ^{\prime
}}b$ by Lemma~\ref{auxil-2}, since $b<_{\widehat{\sigma }}a$ and $ab\notin E$%
, and thus~$a<_{\sigma ^{\prime }}b<_{\sigma ^{\prime }}x$, since $x$ is the
rightmost vertex in $\sigma ^{\prime }$. That is, $x<_{\widehat{\sigma }}b<_{%
\widehat{\sigma }}a$ and $a<_{\sigma ^{\prime }}b<_{\sigma ^{\prime }}x$,
where $xb,xa\in E$ and $ba\notin E$, while $x$ is the next vertex of $a$ in $%
\widehat{\widehat{\sigma }}$, which is a contradiction by Lemma~\ref%
{typical-order}. Therefore,~$x$ is the last vertex of $\widehat{\widehat{%
\sigma }}$.

Note now that $y$ is the second leftmost vertex in $\widehat{\sigma }$,
since $x$ is the rightmost vertex of $\sigma ^{\prime }$ and~$\widehat{%
\sigma }=$LDFS$^{+}(\sigma ^{\prime })$. Suppose that $y$ is not the
previous vertex of $x$ in $\widehat{\widehat{\sigma }}$ and let $a\neq y$ be
the previous vertex of $x$ in $\widehat{\widehat{\sigma }}$. Then, $x<_{%
\widehat{\sigma }}y<_{\widehat{\sigma }}a$ and $xy,xa\in E$. Furthermore, $y$
has been visited before $a$ in $\widehat{\widehat{\sigma }}$, i.e.~$y<_{%
\widehat{\widehat{\sigma }}}a$, since $x$ is the last vertex of $\widehat{%
\widehat{\sigma }}$. Suppose that $ya\notin E$. Then, since~$y<_{\widehat{%
\sigma }}a$, it follows by Lemma~\ref{auxil-2} that $a<_{\widehat{\widehat{%
\sigma }}}y$, which is a contradiction, since $y<_{\widehat{\widehat{\sigma }%
}}a$. Therefore $ya\in E$. Thus, since $y<_{\widehat{\sigma }}a$ and~$y<_{%
\widehat{\widehat{\sigma }}}a$, Lemma~\ref{auxil-3} implies that $y$ is not
the first vertex of $\widehat{\widehat{\sigma }}$ and that $y<_{\widehat{%
\sigma }}a<_{\widehat{\sigma }}z$, $yz\in E$, and $az\notin E$ for the
previous vertex $z$ of $y$ in~$\widehat{\widehat{\sigma }}$. Furthermore $%
z<_{\sigma ^{\prime }}a$ by Lemma~\ref{auxil-2}, since $a<_{\widehat{\sigma }%
}z$ and $az\notin E$. On the other hand $a<_{\sigma ^{\prime }}y$, 
since~$xa\in E$ and $y$ is the rightmost neighbor of $x$ in $\sigma ^{\prime }$.
That is, $y<_{\widehat{\sigma }}a<_{\widehat{\sigma }}z$ and $z<_{\sigma
^{\prime }}a<_{\sigma ^{\prime }}y$, where~$ya,yz\in E$ and~$az\notin E$,
while~$y$ is the next vertex of $z$ in $\widehat{\widehat{\sigma }}$, which
is a contradiction by~Lemma~\ref{typical-order}. Therefore,~$y$ is the
previous vertex of $x$ in $\widehat{\widehat{\sigma }}$.
\end{proof}

\medskip

The next corollary follows easily by Definition~\ref{normal-def}(a) and
Lemma~\ref{first-two}.

\begin{corollary}
\label{typical-cor}Let $G=(V,E)$ be a cocomparability graph, $\sigma $ be 
an LDFS umbrella-free ordering of $G$, and $P$ be a maximal path of $G$. 
Then there exists a typical path $P^{\prime }$ of $G$, such that~${V(P^{\prime})=V(P)}$.
\end{corollary}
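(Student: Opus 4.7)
The plan is to combine Lemma~\ref{first-two} with the result quoted inside the proof of Lemma~\ref{typical-order}, namely that whenever $H$ is a Hamiltonian cocomparability graph and $\pi$ is an LDFS umbrella-free ordering of $H$, then RMN$(\pi)$ produces a Hamiltonian path of $H$~\cite{Corneil-MPC}. The typical path $P'$ will be obtained as the \emph{reversal} of the Hamiltonian path of $G[V(P)]$ associated with $\widehat{\widehat{\sigma}}$, where $\widehat\sigma$ and $\widehat{\widehat\sigma}$ are as in Notation~\ref{maximal-path-orderings-notation}.

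First I would take $\sigma' = \sigma|_{V(P)}$ and observe that, as an induced subordering of the umbrella-free $\sigma$, it is itself umbrella-free. Then I would apply $\widehat\sigma=\text{LDFS}^{+}(\sigma')$ and $\widehat{\widehat\sigma}=\text{RMN}(\widehat\sigma)$ to the induced cocomparability graph $G[V(P)]$. Since the path $P$ is trivially a Hamiltonian path of $G[V(P)]$, the graph $G[V(P)]$ is Hamiltonian. By~\cite{Corneil-MPC}, $\widehat\sigma$ is an LDFS umbrella-free ordering of $G[V(P)]$, so the quoted theorem guarantees that the ordering $\widehat{\widehat\sigma}=(\widehat{\widehat\sigma}(1),\ldots,\widehat{\widehat\sigma}(k))$ (where $k=|V(P)|$) corresponds to a Hamiltonian path $Q$ of $G[V(P)]$ traversed in that order.

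Next I would invoke Lemma~\ref{first-two} to pin down the two endpoints of interest: the last vertex of $\widehat{\widehat\sigma}$ is the rightmost vertex $x$ of $\sigma'$, which coincides with the rightmost vertex of $V(P)$ in $\sigma$, and its predecessor is the rightmost neighbor $y$ of $x$ in $\sigma'$, which coincides with the rightmost vertex of $N(x)\cap V(P)$ in $\sigma$. Finally I would set $P'$ to be the reversal of $Q$, so that $P'=(x,y,\widehat{\widehat\sigma}(k-2),\ldots,\widehat{\widehat\sigma}(1))$; clearly $V(P')=V(Q)=V(P)$ and $P'$ is a path of $G$, and by the identification of $x$ and $y$ above it meets both conditions of Definition~\ref{normal-def}(a), so $P'$ is typical.

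There is essentially no obstacle once Lemma~\ref{first-two} is available; the only point that needs a sentence of care is checking that the hypotheses of the theorem of~\cite{Corneil-MPC} are met (namely, that the subordering $\sigma'$ inherits the umbrella-free property and that $G[V(P)]$ is Hamiltonian), so that the reversal of $\widehat{\widehat\sigma}$ really does give a path of $G[V(P)]$ with $V(P')=V(P)$.
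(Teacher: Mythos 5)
Your proof is correct and follows essentially the same route as the paper: restrict $\sigma$ to $V(P)$, apply LDFS$^{+}$ and then RMN, use the result of~\cite{Corneil-MPC} to get a single (Hamiltonian) path $Q$ on $V(P)$, identify its last two vertices via Lemma~\ref{first-two}, and reverse $Q$ to obtain the typical path. The only cosmetic difference is that the paper quotes the~\cite{Corneil-MPC} result in its ``minimum path cover'' form and deduces a single path from Hamiltonicity, whereas you quote the equivalent ``Hamiltonian path'' form directly.
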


\begin{proof}
Consider the restriction ${\sigma^{\prime }=\sigma |_{V(P)}}$ of $\sigma$ on the vertices of $P$; 
note that $\sigma^{\prime}$ is an induced subordering of $\sigma$. 
Furthermore, consider the orderings~$\widehat{\sigma }=$LDFS$^{+}(\sigma ^{\prime })$ 
and~$\widehat{\widehat{\sigma }}=$RMN$(\widehat{\sigma})$ (cf.~Notation~\ref{maximal-path-orderings-notation}). 
Note that, since $\sigma^{\prime}$ is an ordering of the vertices of $V(P)$, 
the ordering $\widehat{\widehat{\sigma }}$ defines a minimum path cover of $G[V(P)]$~\cite{Corneil-MPC}. 
Therefore, since $G[V(P)]$ has $P$ as a Hamiltonian path, it follows that 
the ordering $\widehat{\widehat{\sigma }}$ defines a single path $Q$ on the vertices of $V(P)$ 
(note that this path~$Q$ may be~$P$ itself or a different path on the same vertices). 
Let now $x$ be the rightmost vertex in $\sigma^{\prime}$ and 
$y$ be the rightmost neighbor of $x$ in $\sigma^{\prime}$. 
Then, since $\widehat{\widehat{\sigma }}$ defines the path $Q$, 
Lemma~\ref{first-two} implies that~$x$ is the last vertex of $Q$ 
and $y$ is the previous vertex of $x$ in $Q$. 
Therefore, the \emph{reverse} path $P^{\prime}$ of~$Q$ is a typical path of $G$ with $V(P^{\prime})=V(P)$.
\end{proof}

\medskip

We are now ready to present the main theorem of this section.

\begin{theorem}
\label{normal-thm}Let $G=(V,E)$ be a cocomparability graph, $\sigma $ be an
LDFS umbrella-free ordering of $G$, and $P$ be a maximal path of $G$. Then
there exists a normal path $P^{\prime }$ of $G$, such that~${V(P^{\prime})=V(P)}$.
\end{theorem}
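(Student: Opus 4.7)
The plan is to establish Theorem~\ref{normal-thm} by induction on $k=|V(P)|$, using Corollary~\ref{typical-cor} as the engine that fixes the first two positions of the normal path at each inductive level.

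For the base case $k\leq 2$, the typical path furnished by Corollary~\ref{typical-cor} is already normal, since the extra condition of Definition~\ref{normal-def}(b) relative to Definition~\ref{normal-def}(a) only constrains positions $i\geq 3$, and is therefore vacuous here.

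For the inductive step $k\geq 3$, I would first apply Corollary~\ref{typical-cor} to obtain a typical path $Q=(v_{1},v_{2},\ldots,v_{k})$ of $G$ with $V(Q)=V(P)$. By typicality, $v_{1}$ is the rightmost vertex of $V(P)$ in $\sigma$ and $v_{2}$ is the rightmost vertex of $N(v_{1})\cap V(P)$ in $\sigma$, so the normal conditions are already met at positions $1$ and $2$. Now the subpath $R=(v_{2},v_{3},\ldots,v_{k})$ is a Hamiltonian, hence maximal, path of the induced cocomparability subgraph $G'=G[V(P)\setminus\{v_{1}\}]$. I would produce an LDFS umbrella-free ordering $\tau$ of $G'$ by applying $\text{LDFS}^{+}$ to the umbrella-free restriction $\sigma|_{V(G')}$ (invoking~\cite{Corneil-MPC} to ensure that the output is indeed LDFS and umbrella-free), and then invoke the inductive hypothesis on the triple $(G',\tau,R)$ to obtain a normal path $R'$ of $G'$ with $V(R')=V(G')$. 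Prepending $v_{1}$ to $R'$ then yields the candidate normal path $P'$ of $G$ with $V(P')=V(P)$.

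The main obstacle is to show that this candidate $P'$ is normal with respect to the \emph{original} ordering $\sigma$ and not merely with respect to the auxiliary ordering $\tau$. Concretely, the first vertex of $R'$, which by Definition~\ref{normal-def}(a) applied to $G'$ and $\tau$ is the rightmost vertex of $V(G')$ in $\tau$, must coincide with the vertex $v_{2}$ fixed by the outer typicality, so that the edge $v_{1}v_{2}$ exists and $P'$ is a genuine path; and for every subsequent position $i\geq 3$, the ``rightmost-in-$\tau$'' neighbor of $v_{i-1}$ among the unvisited vertices must coincide with the ``rightmost-in-$\sigma$'' neighbor. Reconciling $\tau$ with $\sigma|_{V(G')}$ is the heart of the argument: I expect it to follow from Lemma~\ref{auxil-2} together with structural analogues of Lemma~\ref{first-two} and Lemma~\ref{typical-order} applied inside $G'$, which together pin down the ``next-vertex'' behaviour of $\text{LDFS}^{+}$ and $\text{RMN}$ tightly enough that the two rightmost-neighbour rules agree at every step; maximality of $P$ in $G$ rules out the pathological rearrangements that could otherwise appear. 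Once this compatibility is in hand the induction closes, yielding the desired normal path $P'$ with $V(P')=V(P)$.
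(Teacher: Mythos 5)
Your base case is fine and Corollary~\ref{typical-cor} does pin down $v_{1}$ and $v_{2}$ correctly, but the inductive step has a genuine gap that the sketched ``reconciliation'' cannot close. First, the path $R'$ handed to you by the inductive hypothesis is normal with respect to $\tau=$LDFS$^{+}(\sigma|_{V(G')})$, and this is a genuinely different condition from normality with respect to $\sigma$: by Lemma~\ref{auxil-2}, LDFS$^{+}$ \emph{reverses} the relative order of every non-adjacent pair, so ``rightmost in $\tau$'' and ``rightmost in $\sigma$'' disagree already at the choice of the first vertex of $R'$. Second, if you instead try to keep the orderings aligned by recursing with $\sigma|_{V(G')}$ itself, the inductive hypothesis no longer applies, because an induced subordering of an LDFS ordering need not be an LDFS ordering (the $d$-vertices witnessing good triples may have been deleted); repairing this forces you to add such vertices back in --- precisely the LDFS closure of Definition~\ref{LDFS-closure} --- and the resulting vertex set can exceed $V(P)\setminus\{v_{1}\}$ in size, which destroys your induction on $k=|V(P)|$. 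Third, even granting a normal path on $V(P)\setminus\{v_{1}\}$ with respect to $\sigma$, its first vertex is the rightmost vertex of $V(P)\setminus\{v_{1}\}$ in $\sigma$, which need not be $v_{2}$ and need not even be adjacent to $v_{1}$: take $\sigma=(a,b,c,d)$ with edges $ab,ac,bc,bd$; this is an LDFS umbrella-free ordering, $(d,b,a,c)$ is a Hamiltonian (hence maximal) path, $v_{1}=d$, $v_{2}=b$, but the rightmost vertex of $\{a,b,c\}$ is $c$ and $cd\notin E$, so prepending $d$ to a normal path on $\{a,b,c\}$ does not even produce a path. Lemmas~\ref{first-two} and~\ref{typical-order} only control the first two positions of a path at any one level, so their ``analogues inside $G'$'' cannot supply the missing alignment.

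The paper's proof is built around exactly these three difficulties, and differs from yours in the two places that matter. It runs a minimal-counterexample argument over $|V(G)|$ rather than an induction on $|V(P)|$, and it peels off the \emph{longest normal prefix} $P_{1}=(v_{1},\ldots,v_{i})$ with $i\geq 2$ rather than the single vertex $v_{1}$; Lemma~\ref{auxil} applied to this prefix shows that $v_{i}$ is adjacent to every vertex of $V(P_{2})$ lying to its right in $\sigma$, which is what guarantees that the first vertex of the recursively obtained normal path on $V(P_{2})$ is a neighbour of $v_{i}$, so the gluing succeeds. The recursion is applied to the graph induced by an LDFS closure $\sigma''$ of $\sigma|_{V(P_{2})}$ (or of $\sigma|_{V(P_{2})\cup\{v_{i}\}}$ in Case~2): since $\sigma''$ is an \emph{induced subordering} of $\sigma$, normality downstairs automatically means normality upstairs, and since $V(P_{1})\cap V(\sigma'')=\emptyset$ the closure graph has strictly fewer vertices than $G$, so the minimal-counterexample argument terminates. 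To salvage your write-up you would need to replace both the one-vertex peeling and the LDFS$^{+}$ re-ordering with these two devices, at which point you have essentially reconstructed the paper's proof.
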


\begin{proof}
Let the maximal path $P$ be denoted $(v_{1},v_{2},\ldots ,v_{k})$.%
If $k\leq 2$, the lemma clearly holds. Suppose in the sequel that $k\geq 3$
and that there exists no normal path $P^{\prime }$ of $G$, such 
that~${V(P^{\prime})=V(P)}$. We may assume without loss of generality that $G$ has
the smallest number of vertices among all cocomparability graphs that have
such a maximal path $P$. Furthermore, we may assume by Corollary~\ref%
{typical-cor} that $P$ is typical, i.e.~that $v_{1}$ is the rightmost vertex
of $V(P)$ in~$\sigma $ and that $v_{2}$ is the rightmost vertex of $%
N(v_{1})\cap \{v_{2},v_{3},\ldots ,v_{k}\}$ in~$\sigma $.

Let $i\in \{2,3,\ldots ,k-1\}$ be the greatest index, such that $v_{j}$ is
the rightmost vertex of $N(v_{j-1})\cap \{v_{j},v_{j+1},\ldots ,v_{k}\}$ in~$%
\sigma $ for every $j=2,\ldots ,i$. Such an index $i$ exists by the
assumption that there exists no normal path $P^{\prime }$ of $G$, for which $%
V(P^{\prime })=V(P)$. Let $P_{1}=(v_{1},v_{2},\ldots ,v_{i})$ and $%
P_{2}=(v_{i+1},v_{i+2},\ldots ,v_{k})$ be the subpaths of $P$ until the
vertex $v_{i}$ and after the vertex $v_{i}$, respectively. Then, in
particular, $P_{1}$ is normal by the assumption on $i$, i.e.~$P_{1}$ has the
first $i$ vertices of an RMN when applied on the restriction $\sigma
|_{V(P)} $ of the ordering $\sigma $ on the vertices of~$P$. We will
construct a path $P^{\ast }=(v_{1}^{\ast },v_{2}^{\ast },\ldots ,v_{k}^{\ast
})$, such that $V(P^{\ast })=V(P)$, $v_{1}^{\ast }$ is the rightmost vertex
of $V(P^{\ast })$ in $\sigma $, and $v_{\ell }^{\ast }$ is the rightmost
vertex of $N(v_{\ell -1}^{\ast })\cap \{v_{\ell }^{\ast },v_{\ell +1}^{\ast
},\ldots ,v_{k}^{\ast }\}$ in~$\sigma $ for every $\ell =2,\ldots ,i+1$,
thus arriving to a contradiction by the assumption on the index~$i$.

Consider a vertex $v_{\ell }\in \{v_{i+1},v_{i+2},\ldots ,v_{k}\}$, such
that $v_{i}<_{\sigma }v_{\ell }$. Then, $v_{i}<_{\sigma }v_{\ell }<_{\sigma
}v_{1}$, since~$P$ is typical. We will prove that $v_{i}v_{\ell }\in E$.
Suppose otherwise that $v_{i}v_{\ell }\notin E$. Then, since $v_{\ell
}\notin V(P_{1})$, it follows by Lemma~\ref{auxil} that there exist two
consecutive vertices $v_{j-1}$ and $v_{j}$ in $P_{1}$, where $2\leq j\leq i$%
, such that $v_{j-1}v_{\ell }\in E$ and $v_{j}<_{\pi }v_{\ell }$. Thus, $%
v_{j}$ is not the rightmost vertex of $N(v_{j-1})\cap \{v_{j},v_{j+1},\ldots
,v_{k}\}$ in~$\sigma $, which is a contradiction. Therefore, $v_{i}v_{\ell
}\in E$ for every $v_{\ell }\in \{v_{i+1},v_{i+2},\ldots ,v_{k}\}$, such
that $v_{i}<_{\sigma }v_{\ell }$.

In the sequel let $v_{j}$ be the rightmost vertex of $N(v_{i})\cap
\{v_{i+1},v_{i+2},\ldots ,v_{k}\}$ in~$\sigma $, where $j>i+1$ by the
assumption on the index $i$. Now we distinguish the cases where $%
v_{i}<_{\sigma }v_{j}$ and $v_{j}<_{\sigma }v_{i}$.

\medskip

\emph{Case 1.} $v_{i}<_{\sigma }v_{j}$. Suppose that there exists a vertex $%
v_{\ell }\in \{v_{i+1},v_{i+2},\ldots ,v_{k}\}$, such that $v_{j}<_{\sigma
}v_{\ell }$. Then, as we proved above, $v_{i}v_{\ell }\in E$, which is a
contradiction, since $v_{j}$ is the rightmost vertex of $N(v_{i})\cap
\{v_{i+1},v_{i+2},\ldots ,v_{k}\}$ in $\sigma $ and $v_{j}<_{\sigma }v_{\ell
}$. Thus, $v_{j}$ is the rightmost vertex of $\{v_{i+1},v_{i+2},\ldots
,v_{k}\}$ in~$\sigma $. Let $\sigma ^{\prime }$ be the induced subordering
of~$\sigma $ on the vertices of $V(P_{2})=\{v_{i+1},v_{i+2},\ldots ,v_{k}\}$%
, and $\sigma ^{\prime \prime }$ be an LDFS closure of~$\sigma ^{\prime }$
(within $\sigma $). Then, by definition $V(P_{1})\cap V(\sigma ^{\prime
})=\emptyset $. Furthermore, $v_{j}$ is the rightmost vertex in~$\sigma
^{\prime }$, and thus $v_{j}$ remains the rightmost vertex in $\sigma
^{\prime \prime }$ by Corollary~\ref{rightmost-in-closure}.

First we will prove that ${V(P_{1})\cap V(\sigma ^{\prime \prime })=\emptyset}$. 
Suppose otherwise that~${V(P_{1})\cap V(\sigma ^{\prime \prime })\neq \emptyset}$, 
and let $v$ be the rightmost vertex of $V(P_{1})\cap V(\sigma
^{\prime \prime })$ in $\sigma $. Then $v\in V(\sigma ^{\prime \prime
}\setminus \sigma ^{\prime })$, since~${V(P_{1})\cap V(\sigma ^{\prime
})=\emptyset}$. Thus, there exists by Lemma~\ref{infinite-sequence-closure}
at least one vertex $v^{\prime }$ in $\sigma ^{\prime }$, such that $%
v<_{\sigma }v^{\prime }$ and~${vv^{\prime }\notin E}$. 
Then,~${v^{\prime }\in V(P_{2})\subset V(P)}$ by definition of the ordering $\sigma ^{\prime }$,
i.e.~$v^{\prime }$ is a vertex of $\sigma |_{V(P)}$. Furthermore, since $%
v<_{\sigma }v^{\prime }$ and $vv^{\prime }\notin E$, Lemma~\ref{auxil-2}
implies that $v^{\prime }$ is visited before $v$ in $P_{1}$ (that is, by
applying RMN on $\sigma |_{V(P)}$), i.e.~$v^{\prime }\in V(P_{1})$, which is
a contradiction, since $v^{\prime }\in V(P_{2})$. 
Thus,~$V(P_{1})\cap V(\sigma ^{\prime \prime })=\emptyset $.

Now we will prove that the subpath $P_{2}=(v_{i+1},v_{i+2},\ldots ,v_{k})$
of $P$ is maximal in $G|_{\sigma ^{\prime \prime }}$. Indeed, suppose
otherwise that $P_{2}$ is not maximal in $G|_{\sigma ^{\prime \prime }}$,
i.e.~there exists a path $P_{2}^{\prime }$ of $G|_{\sigma ^{\prime \prime }}$%
, such that $V(P_{2})\subset V(P_{2}^{\prime })$. Thus, since $G|_{\sigma
^{\prime \prime }}$ has strictly fewer vertices than $G$, there exists (by
the assumption on $G$) a normal path $P_{2}^{\prime \prime }$ of $G|_{\sigma
^{\prime \prime }}$, such that $V(P_{2}^{\prime \prime })=V(P_{2}^{\prime })$%
. Therefore, in particular, $P_{2}^{\prime \prime }$ has strictly more
vertices than $P_{2}$ and $v_{j}$ is the first vertex of $P_{2}^{\prime
\prime }$. Thus, since $v_{i}v_{j}\in E$, the path $(v_{1},v_{2},\ldots
,v_{i},P_{2}^{\prime \prime })$ of $G$ has strictly more vertices than $P$,
which is a contradiction to the assumption that $P$ is maximal. Therefore,
the subpath $P_{2}$ of $P$ is maximal in $G|_{\sigma ^{\prime \prime }}$,
and thus there exists a normal path $Q$ of $G|_{\sigma ^{\prime \prime }}$,
such that $V(Q)=V(P_{2})$. Then, in particular, $v_{j}$ is the first vertex
of $Q$, and thus%
\begin{equation}
P^{\ast }=(v_{1}^{\ast },v_{2}^{\ast },\ldots ,v_{k}^{\ast
})=(v_{1},v_{2},\ldots ,v_{i},Q)
\end{equation}%
is as requested.

\medskip

\emph{Case 2.} $v_{j}<_{\sigma }v_{i}$. Consider an arbitrary vertex $%
v_{\ell }\in \{v_{i+1},v_{i+2},\ldots ,v_{k}\}$ and suppose that $%
v_{i}<_{\sigma }v_{\ell }$. Then, $v_{\ell }\neq v_{j}$, since $%
v_{j}<_{\sigma }v_{i}$. Furthermore, as we proved above, $v_{i}v_{\ell }\in
E $, which is a contradiction, since $v_{j}$ is the rightmost vertex of $%
N(v_{i})\cap \{v_{i+1},v_{i+2},\ldots ,v_{k}\}$ in~$\sigma $ and $%
v_{j}<_{\sigma }v_{i}<_{\sigma }v_{\ell }$. Therefore, $v_{\ell }<_{\sigma
}v_{i}$ for every $v_{\ell }\in \{v_{i+1},v_{i+2},\ldots ,v_{k}\}$, i.e.~$%
v_{i}$ is the rightmost vertex of $V(P_{2})\cup
\{v_{i}\}=\{v_{i},v_{i+1},\ldots ,v_{k}\}$ in $\sigma $. Consider the
induced subordering $\sigma ^{\prime }$ of~$\sigma $ on the vertices $%
V(P_{2})\cup \{v_{i}\}$ and an LDFS closure $\sigma ^{\prime \prime }$ of~$%
\sigma ^{\prime }$ (within $\sigma $). Then, similarly to Case 1, the
subpath $(v_{i},P_{2})=(v_{i},v_{i+1},\ldots ,v_{k})$ of $P$ is a maximal
path of $G|_{\sigma ^{\prime \prime }}$, and thus there exists a normal path 
$Q$ of $G|_{\sigma ^{\prime \prime }}$, such that $V(Q)=\{v_{i},v_{i+1},%
\ldots ,v_{k}\}$. Then, in particular, $v_{i}$ is the first vertex of $Q$,
and thus%
\begin{equation}
P^{\ast }=(v_{1}^{\ast },v_{2}^{\ast },\ldots ,v_{k}^{\ast
})=(v_{1},v_{2},\ldots ,v_{i-1},Q)
\end{equation}%
is as requested. This completes the proof of the lemma.
\end{proof}

\section{The longest path problem on cocomparability graphs\label%
{sec:longest}}

In this section we present the first polynomial algorithm that computes a
longest path of a cocomparability graph $G$. This dynamic programming
algorithm is based on Theorem~\ref{normal-thm}; in particular, this
algorithm computes a \emph{longest normal} path of $G$. For the rest of this
section we consider an LDFS umbrella-free ordering $\sigma $ of a given
cocomparability graph $G=(V,E)$, which can be obtained by executing an LDFS$%
^{+}$ on an arbitrary umbrella-free ordering $\pi $ of $G$~\cite{Corneil-MPC}. 
We consider that the vertices of $V$, where $|V|=n$, are numbered in $%
\sigma $ increasingly from left to right, i.e.~$\sigma =(u_{1},u_{2},\ldots
,u_{n})$. Furthermore, for simplicity of the presentation, we add to $\sigma 
$ a dummy isolated vertex $u_{n+1}$ to the right of all other vertices of $V$, 
i.e.~we consider without loss of generality that $\sigma
=(u_{1},u_{2},\ldots ,u_{n},u_{n+1})$. It is easy to see that $\sigma $
remains an LDFS umbrella-free ordering after the addition of the dummy
vertex $u_{n+1}$.

\begin{definition}
\label{G(i,j)}Let ${G=(V,E)}$ be a cocomparability graph with ${|V|=n}$ and
let ${\sigma =(u_{1},u_{2},\ldots ,u_{n},u_{n+1})}$ be an LDFS umbrella-free
ordering of $V\cup \{u_{n+1}\}$, where $u_{n+1}$ is a dummy isolated vertex.
For every pair of indices $i,j\in \{1,2,\ldots ,n\}$, \vspace{-0.2cm}
\begin{itemize}
\item if $i>j$, then $G(i,j)=\emptyset $, \vspace{-0.2cm}
\item if ${i\leq j}$, then $G(i,j)$ is the subgraph $G[S]$ of $G$ induced by
the vertex set ${S=\{u_{i},u_{i+1},\ldots ,u_{j}\}\setminus N(u_{j+1})}$.
\end{itemize}
\end{definition}

It is easy to see by Definition~\ref{G(i,j)} that for every pair of indices 
$i,j\in \{1,2,\ldots ,n\}$, the vertices~$u_{i}$ and~$u_{j}$ may or may not
belong to $G(i,j)$, since they may or may not be adjacent to~$u_{j+1}$ in~$G$. 
Furthermore note that $G(1,n)=G$ and that $G(i,n)=G[{\{u_{i},u_{i+1},\ldots ,u_{n}\}}]$ 
for every~${i\in \{1,2,\ldots ,n\}}$, since $u_{n+1}$ is an isolated vertex.

As an example of Definition~\ref{G(i,j)}, the subgraph $G(3,8)$ of the cocomparability graph $G$ 
of~Figure~\ref{sigma-fig} is illustrated in Figure~\ref{Gij-subgraph-fig}. 
In this figure the dummy isolated vertex $u_{10}$ is also depicted, 
while the vertices $V(G(3,8))=\{u_{3},u_{4},u_{5},u_{6},u_{8}\}$ of $G(3,8)$, 
as well as the edges of~$G(3,8)$, are drawn darker than the others for better visibility. 
Furthermore, note that the path $P=(u_{8},u_{5},u_{6},u_{3},u_{4})$ is a normal path of~$G(3,8)$.

\begin{figure}[h!]
\centering
\includegraphics[scale=0.9]{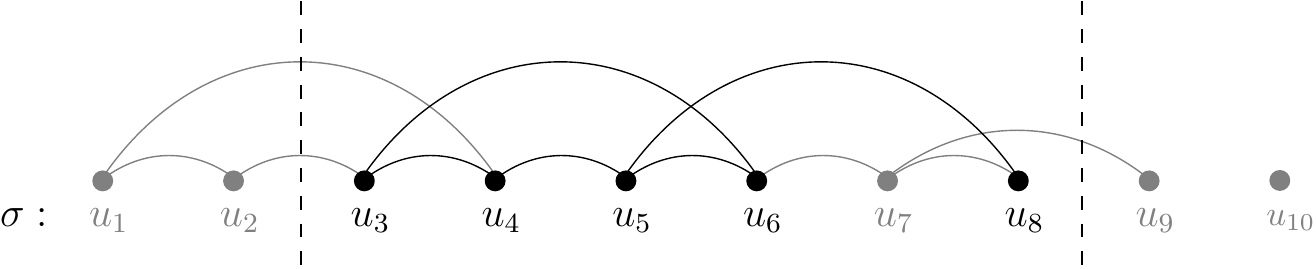} 
\caption{The subgraph $G(3,8)$ of the cocomparability graph $G$ of Figure~\ref{sigma-fig}.}
\label{Gij-subgraph-fig}
\end{figure}

\begin{observation}
\label{G(i,j)-u_i}For every pair of indices $i,j\in \{1,2,\ldots ,n\}$, $%
G(i+1,j)=G(i,j)\setminus \{u_{i}\}$.
\end{observation}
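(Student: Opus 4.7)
The plan is to verify this equality directly from Definition~\ref{G(i,j)} by a short case analysis on the relation between $i$ and $j$. This is a bookkeeping observation rather than a structural claim, so no deeper properties of cocomparability graphs or of LDFS orderings are needed; the entire argument is set-theoretic.

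First I would dispose of the degenerate ranges. If $i > j$, then by definition $G(i,j) = \emptyset$, so $G(i,j) \setminus \{u_i\} = \emptyset$; on the other hand $i+1 > i > j$, hence also $G(i+1,j) = \emptyset$, and equality holds trivially. If $i = j$, then $G(i+1,j) = G(j+1,j) = \emptyset$ by the first case of the definition, while $G(i,j)$ is the induced subgraph on $\{u_i\} \setminus N(u_{j+1})$, which contains at most the single vertex $u_i$; removing $u_i$ again yields $\emptyset$, so both sides agree.

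The main case is $i < j$, where both $G(i,j)$ and $G(i+1,j)$ fall under the second bullet of Definition~\ref{G(i,j)}. Here I would just unfold the definitions: the vertex set of $G(i,j)$ is $S = \{u_i, u_{i+1}, \ldots, u_j\} \setminus N(u_{j+1})$, and the vertex set of $G(i+1,j)$ is $S' = \{u_{i+1}, \ldots, u_j\} \setminus N(u_{j+1})$. Since $u_i$ is distinct from $u_{i+1}, \ldots, u_j$, one sees immediately that $S' = S \setminus \{u_i\}$. Because $G(i,j)$ and $G(i+1,j)$ are both induced subgraphs of $G$ on these respective vertex sets, removing $u_i$ (together with all its incident edges) from $G(i,j)$ gives exactly $G[S'] = G(i+1,j)$.

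There is no real obstacle here: the claim is essentially a restatement of the definition after observing that $u_i \notin \{u_{i+1}, \ldots, u_j\}$. The only subtlety is the need to include the degenerate cases $i > j$ and $i = j$ (for which $G(i+1,j)$ falls into the empty-range branch of the definition), and to notice that the set-minus with $\{u_i\}$ on the right-hand side is harmless in those cases.
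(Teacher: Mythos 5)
Your proposal is correct and matches the paper's intent: the paper states this as an unproved Observation precisely because it follows by unfolding Definition~\ref{G(i,j)} exactly as you do, with the vertex sets $\{u_i,\ldots,u_j\}\setminus N(u_{j+1})$ and $\{u_{i+1},\ldots,u_j\}\setminus N(u_{j+1})$ differing by at most $u_i$. Your handling of the degenerate ranges $i>j$ and $i=j$ is a harmless extra precaution.
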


\begin{observation}
\label{subpath-normal-obs}Let ${P=(P_{1},u_{i})}$ be a normal path of $G(i,j)$, 
for some pair of indices ${i,j\in \{1,2,\ldots ,n\}}$. Then $P_{1}$ is a
normal path of both $G(i+1,j)$ and $G(i,j)$.
\end{observation}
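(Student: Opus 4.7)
The plan is to verify the normal-path conditions for $P_1$ directly from those of $P$, exploiting the fact that $u_i$ sits at the extreme left in $\sigma$ among all candidates under consideration. The single structural observation driving the entire argument is that since $V(G(i,j))\subseteq\{u_i,u_{i+1},\ldots,u_j\}$, the vertex $u_i$ has the smallest $\sigma$-index in $V(G(i,j))$, and therefore also in $V(P)$. Consequently, removing $u_i$ from any subset $S\subseteq V(P)$ cannot change the $\sigma$-rightmost element of $S$, provided $S$ contains at least one element other than $u_i$.

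Before invoking this, I would briefly note that $P_1$ is indeed a path of both $G(i,j)$ and $G(i+1,j)$: its vertex set is $V(P)\setminus\{u_i\}$, which is contained in $V(G(i+1,j))=V(G(i,j))\setminus\{u_i\}$ by Observation~\ref{G(i,j)-u_i}. Moreover, since Definition~\ref{normal-def} refers only to the fixed ordering $\sigma$ and to $G$-adjacencies among the vertices of the path itself, the property of being normal for $P_1$ is the same whether $P_1$ is viewed inside $G(i,j)$ or inside $G(i+1,j)$. So it suffices to verify the normal-path conditions a single time.

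Writing $P_1=(v_1,v_2,\ldots,v_{k-1})$, the conditions to check are: (i) $v_1$ is the $\sigma$-rightmost vertex of $V(P_1)$; and (ii) for every $\ell\in\{2,\ldots,k-1\}$, $v_\ell$ is the $\sigma$-rightmost vertex of $N(v_{\ell-1})\cap\{v_\ell,v_{\ell+1},\ldots,v_{k-1}\}$. For (i), $v_1$ is rightmost in the larger set $V(P)$ by the typicality of $P$, and dropping the leftmost vertex $u_i$ leaves $v_1$ still rightmost. For (ii), the set in question contains $v_\ell$ itself, since $v_{\ell-1}v_\ell\in E$ by $P$ being a path, so it is non-empty and contains an element distinct from $u_i$; therefore its rightmost element coincides with the rightmost element of the augmented set $N(v_{\ell-1})\cap\{v_\ell,\ldots,v_{k-1},u_i\}$, which is precisely $v_\ell$ by the normality of $P$.

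I do not expect a serious obstacle here: the whole content is the leftmost-position remark about $u_i$, and both conditions reduce to the observation that the rightmost element of a finite set is unaffected by removing its leftmost element. The only small point to handle cleanly is the degenerate case where $P_1$ is empty (that is, $P=(u_i)$), in which case the claim is vacuous and can be dispatched in one line.
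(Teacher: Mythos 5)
Your proof is correct and is exactly the direct definition-check the paper intends: the statement is given as an Observation with no written proof precisely because, as you show, it reduces to the fact that $u_i$ is the $\sigma$-leftmost vertex of $V(G(i,j))$, so deleting it from the tail sets $N(v_{\ell-1})\cap\{v_\ell,\ldots,v_k\}$ never changes their rightmost element. Your additional remarks — that normality is intrinsic to $\sigma$ and the path's own adjacencies (hence identical in $G(i,j)$ and $G(i+1,j)$), and the vacuous case $P=(u_i)$ — are the right points to flag.
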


\begin{observation}
\label{superpath-normal-obs}Let $P_{1}=(P_{0},u_{x})$ be a normal path of $%
G(i+1,j)$, for some pair of indices $i,j\in \{1,2,\ldots ,n\}$, and let $%
u_{i}\in V(G(i,j))$ and $u_{x}\in N(u_{i})$. Then $P=(P_{1},u_{i})$ is a
normal path of $G(i,j)$.
\end{observation}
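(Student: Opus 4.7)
\textbf{Proof plan for Observation~\ref{superpath-normal-obs}.}

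The plan is to verify the three conditions of Definition~\ref{normal-def} for the path $P=(P_{1},u_{i})$ in the graph $G(i,j)$, exploiting the single key structural fact that $u_{i}$ is the \emph{leftmost} vertex of $V(P)$ in $\sigma$, since every vertex of $V(P_{1})\subseteq V(G(i+1,j))\subseteq \{u_{i+1},u_{i+2},\ldots,u_{j}\}$ has index strictly greater than $i$.

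First, I would check that $P$ is indeed a path of $G(i,j)$. By Observation~\ref{G(i,j)-u_i}, $V(G(i+1,j))\subseteq V(G(i,j))$, so the vertices of $P_{1}$ lie in $G(i,j)$; the hypothesis gives $u_{i}\in V(G(i,j))$. Since both $G(i+1,j)$ and $G(i,j)$ are induced subgraphs of $G$, the edges of $P_{1}$ remain edges of $G(i,j)$, and the additional edge $u_{x}u_{i}$ belongs to $E$ by hypothesis, hence to $E(G(i,j))$.

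Next, I would verify typicality. Writing $P_{1}=(w_{1},w_{2},\ldots,w_{r})$ with $w_{r}=u_{x}$, the path $P=(w_{1},\ldots,w_{r},u_{i})$ has $w_{1}$ as a candidate first vertex. Since $u_{i}<_{\sigma}w$ for every $w\in V(P_{1})$, the rightmost vertex of $V(P)$ in $\sigma$ coincides with the rightmost vertex of $V(P_{1})$, which is $w_{1}$ by normality of $P_{1}$. Similarly, since $u_{i}$ is leftmost in $V(P)$, the rightmost vertex of $N(w_{1})\cap (V(P)\setminus\{w_{1}\})$ is the same as the rightmost vertex of $N(w_{1})\cap (V(P_{1})\setminus\{w_{1}\})$, namely $w_{2}$. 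So $P$ is typical.

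Finally, I would verify the normality condition at each step. For $\ell\in\{2,\ldots,r\}$, the normality of $P_{1}$ in $G(i+1,j)$ states that $w_{\ell}$ is the rightmost element of $N(w_{\ell-1})\cap\{w_{\ell},\ldots,w_{r}\}$ in $\sigma$; augmenting this set with $u_{i}$ cannot change the rightmost element because $u_{i}<_{\sigma}w_{\ell}$. For the last index, corresponding to vertex $u_{i}$, the condition reduces to requiring $u_{i}$ to be the rightmost vertex of $N(w_{r})\cap\{u_{i}\}$, which holds trivially since $u_{x}u_{i}\in E$ gives $u_{i}\in N(w_{r})$. Thus $P$ satisfies all clauses of Definition~\ref{normal-def} and is a normal path of $G(i,j)$. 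No serious obstacle is expected; the whole argument is a bookkeeping verification enabled by the observation that $u_{i}$ is leftmost in $V(P)$.
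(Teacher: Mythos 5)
Your proof is correct and is exactly the routine verification the paper intends: the statement is presented as an Observation with no written proof, and the key point you isolate -- that $u_{i}$ is strictly to the left of every vertex of $V(G(i+1,j))$ in $\sigma$, so appending it cannot disturb any of the ``rightmost'' conditions of Definition~\ref{normal-def} -- is the whole content. No gaps.
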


\begin{lemma}
\label{u_i-G(i+1,x-1)}Let ${G=(V,E)}$ be a cocomparability graph and $\sigma
=(u_{1},u_{2},\ldots ,u_{n},u_{n+1})$ be an LDFS umbrella-free ordering of $%
V\cup \{u_{n+1}\}$, where $u_{n+1}$ is a dummy isolated vertex. 
Suppose that~${u_{i}<_{\sigma}u_{x}}$ and $u_{x}\in N(u_{i})$. 
Then $u_{k}\in N(u_{i})$ for every $u_{k}\in V(G(i+1,x-1))$.
\end{lemma}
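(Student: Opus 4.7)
The plan is to give a direct one-step argument using the umbrella-free property of $\sigma$ (Lemma~\ref{cocomp-order}). First I would unpack the hypothesis $u_k \in V(G(i+1,x-1))$ using Definition~\ref{G(i,j)}: this means $i+1 \le k \le x-1$ and $u_k \notin N(u_{x})$, i.e.\ $u_i <_\sigma u_k <_\sigma u_x$ and $u_k u_x \notin E$.

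Then I would combine this with the given assumptions that $u_i u_x \in E$ and $u_i <_\sigma u_x$. Since $\sigma$ is umbrella-free, the triple $(u_i, u_k, u_x)$ with $u_i u_x \in E$ forces at least one of $u_i u_k$, $u_k u_x$ to be an edge. Because $u_k u_x \notin E$ by the previous step, the remaining possibility is $u_i u_k \in E$, i.e.\ $u_k \in N(u_i)$, which is exactly what we need.

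There is essentially no obstacle: the lemma is a direct specialization of the umbrella-free characterization applied to the specific construction of $G(i+1,x-1)$, and no case analysis, induction, or appeal to the LDFS property is required. The only subtlety worth writing down is noting explicitly that the definition of $G(i+1,x-1)$ excludes precisely the neighbors of $u_x$ from the interval $\{u_{i+1},\ldots,u_{x-1}\}$, which is what makes the umbrella-free conclusion apply non-vacuously.
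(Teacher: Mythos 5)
Your proof is correct and is essentially identical to the paper's own argument: both unpack Definition~\ref{G(i,j)} to get $u_i <_\sigma u_k <_\sigma u_x$ with $u_k u_x \notin E$, and then apply the umbrella-free property to the triple $(u_i,u_k,u_x)$ to force $u_i u_k \in E$. No differences worth noting.
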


\begin{proof}
Let $u_{k}\in V(G(i+1,x-1))$. Then $u_{i}<_{\sigma }u_{k}<_{\sigma }u_{x}$
and $u_{k}\notin N(u_{x})$ by Definition~\ref{G(i,j)}. Therefore, since $%
\sigma $ is an umbrella-free ordering and $u_{x}\in N(u_{i})$ by assumption,
it follows that~$u_{k}\in N(u_{i})$.
\end{proof}

\begin{lemma}
\label{G(i+1,x-1)}Let ${G=(V,E)}$ be a cocomparability graph and $\sigma
=(u_{1},u_{2},\ldots ,u_{n},u_{n+1})$ be an LDFS umbrella-free ordering of $%
V\cup \{u_{n+1}\}$, where $u_{n+1}$ is a dummy isolated vertex. Then ${%
V(G(i+1,x-1))\subseteq V(G(i,j))}$ for every $u_{x}\in V(G(i+1,j))$.
\end{lemma}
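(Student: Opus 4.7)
The plan is to fix an arbitrary $u_{k}\in V(G(i+1,x-1))$ and verify directly that $u_{k}$ satisfies the two defining conditions of $V(G(i,j))$ from Definition~\ref{G(i,j)}. So the proof is essentially a routine unpacking of the definitions, combined with one appeal to the umbrella-free property of $\sigma$.

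First I would unpack the hypotheses. From $u_{k}\in V(G(i+1,x-1))$ we get $i+1\leq k\leq x-1$ and $u_{k}\notin N(u_{x})$. From $u_{x}\in V(G(i+1,j))$ we get $i+1\leq x\leq j$ and $u_{x}\notin N(u_{j+1})$. To conclude $u_{k}\in V(G(i,j))$ I need to check that (a) $i\leq k\leq j$, and (b) $u_{k}\notin N(u_{j+1})$. Part (a) is immediate from the chain $i<i+1\leq k\leq x-1\leq j-1<j$.

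The only substantive point is (b), which I would handle by contradiction. Suppose $u_{k}u_{j+1}\in E$. Since $k<x\leq j<j+1$, we have $u_{k}<_{\sigma}u_{x}<_{\sigma}u_{j+1}$, so the triple $(u_{k},u_{x},u_{j+1})$ sits in $\sigma$ in this left-to-right order with $u_{k}u_{j+1}\in E$. Because $\sigma$ is an umbrella-free (CO-)ordering by Lemma~\ref{cocomp-order}, at least one of $u_{k}u_{x}\in E$ or $u_{x}u_{j+1}\in E$ must hold. The first possibility contradicts $u_{k}\notin N(u_{x})$ (coming from $u_{k}\in V(G(i+1,x-1))$), and the second contradicts $u_{x}\notin N(u_{j+1})$ (coming from $u_{x}\in V(G(i+1,j))$). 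Hence $u_{k}u_{j+1}\notin E$, proving~(b).

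There is no real obstacle here: once the two set definitions are written out side by side, the conclusion is forced by a single umbrella-free argument on the triple $u_{k}<_{\sigma}u_{x}<_{\sigma}u_{j+1}$. In fact, the statement can be viewed as an analogue of Lemma~\ref{u_i-G(i+1,x-1)} in which $u_{i}$ is replaced by $u_{j+1}$ and the roles of the two endpoints of the umbrella are swapped; the underlying mechanism is exactly the same.
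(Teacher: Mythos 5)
Your proof is correct and matches the paper's argument essentially verbatim: both unpack Definition~\ref{G(i,j)} to get $u_{k}\notin N(u_{x})$ and $u_{x}\notin N(u_{j+1})$, and then rule out $u_{k}u_{j+1}\in E$ because the triple $u_{k}<_{\sigma}u_{x}<_{\sigma}u_{j+1}$ would form an umbrella in $\sigma$. The only difference is cosmetic: you phrase the contradiction via the CO-ordering characterization of Lemma~\ref{cocomp-order} and spell out the index bookkeeping, whereas the paper simply says the three vertices ``build an umbrella.''
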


\begin{proof}
Consider a vertex $u_{y}\in V(G(i+1,x-1))$. Then, since also $u_{x}\in
V(G(i+1,j))$, it follows by Definition~\ref{G(i,j)} that $u_{y}\notin
N(u_{x})$ and $u_{x}\notin N(u_{j+1})$. Suppose that $u_{y}\in N(u_{j+1})$.
Then, since $u_{y}<_{\sigma }u_{x}<_{\sigma }u_{j+1}$, the vertices $%
u_{y},u_{x},u_{j+1}$ build an umbrella in $\sigma $, which is a
contradiction. Therefore $u_{y}\notin N(u_{j+1})$, and thus $u_{y}\in
V(G(i,j))$ by Definition~\ref{G(i,j)}.
\end{proof}

\medskip

In the following we state two lemmas that are crucial for the proof of the
main Theorem~\ref{correctness} of this section.

\begin{lemma}
\label{P=(P1,ui,P2)-normal}Let ${G=(V,E)}$ be a cocomparability graph and ${%
\sigma =(u_{1},u_{2},\ldots ,u_{n},u_{n+1})}$ be an LDFS umbrella-free
ordering of ${V\cup \{u_{n+1}\}}$, where $u_{n+1}$ is a dummy isolated
vertex. Let~${u_{i}\in V(G(i,j))}$, ${u_{x}\in V(G(i+1,j))}$, ${u_{y}\in
V(G(i+1,x-1))}$, and ${u_{x}\in N(u_{i})}$. Furthermore, let $P_{1}$ be a
normal path of~$G(i+1,j)$ with $u_{x}$ as its last vertex and $P_{2}$ be a
normal path of~$G(i+1,x-1)$ with $u_{y}$ as its last vertex. Then ${%
P=(P_{1},u_{i},P_{2})}$ is a normal path of~$G(i,j)$ with~$u_{y}$ as its
last vertex.
\end{lemma}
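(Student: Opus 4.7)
The plan is to verify, in turn, the four requirements for $P$ to be a normal path of $G(i,j)$ with last vertex $u_y$: vertex containment, distinctness, validity as a path, and the rightmost-neighbor condition of Definition~\ref{normal-def} at every position. Write $P_1=(w_1,\ldots,w_s)$ with $w_s=u_x$ and $P_2=(z_1,\ldots,z_t)$ with $z_t=u_y$.

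Containment is routine: $u_i\in V(G(i,j))$ by hypothesis, $V(P_1)\subseteq V(G(i+1,j))\subseteq V(G(i,j))$ by Observation~\ref{G(i,j)-u_i}, and $V(P_2)\subseteq V(G(i+1,x-1))\subseteq V(G(i,j))$ by Lemma~\ref{G(i+1,x-1)}. The two new edges are $w_s u_i=u_x u_i\in E$ by assumption and $u_i z_1\in E$ by Lemma~\ref{u_i-G(i+1,x-1)}, which in fact gives $V(P_2)\subseteq N(u_i)$---a fact I will reuse below.

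The main structural step---and the hardest part---is the claim $(\ast)$ that every $w\in V(P_1)$ satisfies either $w\geq_\sigma u_x$ or $w\in N(u_x)$. I prove $(\ast)$ by forward induction along $P_1$. The base case $w_1$ is clear since $w_1$ is the rightmost vertex of $V(P_1)$ and $u_x\in V(P_1)$. For the inductive step, assume $w_1,\ldots,w_{r-1}$ satisfy $(\ast)$ and suppose for contradiction that $w_r<_\sigma u_x$ and $w_r u_x\notin E$ (so $r\geq 2$). Normality of $P_1$ forces $u_x\notin N(w_{r-1})$, since otherwise $w_r$---the rightmost neighbor of $w_{r-1}$ in $\{w_r,\ldots,w_s\}$---would satisfy $w_r\geq_\sigma u_x$. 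Then $(\ast)$ applied to $w_{r-1}$ forces $w_{r-1}>_\sigma u_x$, and the vertices $w_r<_\sigma u_x<_\sigma w_{r-1}$ form an umbrella in $\sigma$ (edge $w_r w_{r-1}\in E$, non-edges $w_r u_x$ and $u_x w_{r-1}$), contradicting the umbrella-freeness of $\sigma$. Since $V(P_2)\subseteq V(G(i+1,x-1))$ consists of vertices strictly to the left of $u_x$ that avoid $N(u_x)$, property $(\ast)$ gives immediately $V(P_1)\cap V(P_2)=\emptyset$, which together with $u_i<_\sigma u_{i+1}$ settles vertex distinctness.

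For the rightmost-neighbor condition: $w_1$ is rightmost in $V(P)$ because $w_1\geq_\sigma u_x>_\sigma z$ for every $z\in V(P_2)$ and $w_1>_\sigma u_i$. For $r\in\{2,\ldots,s\}$, $P_1$ being normal gives $w_r$ as rightmost in $N(w_{r-1})\cap\{w_r,\ldots,w_s\}$; the extra candidates cannot beat $w_r$ since $u_i<_\sigma w_r$, and if some $z\in V(P_2)\cap N(w_{r-1})$ had $z>_\sigma w_r$, applying $(\ast)$ to $w_{r-1}$ forces either $w_{r-1}\in N(u_x)$ (so $w_r\geq_\sigma u_x>_\sigma z$, contradiction) or $w_{r-1}>_\sigma u_x$ with $w_{r-1}\notin N(u_x)$ (yielding an umbrella on $z,u_x,w_{r-1}$). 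At position $s+1$, $u_i$ is the only---hence the rightmost---element of $N(u_x)\cap(\{u_i\}\cup V(P_2))$, since $V(P_2)$ avoids $N(u_x)$. At position $s+2$, $V(P_2)\subseteq N(u_i)$ together with $P_2$ being typical makes $z_1$ the rightmost neighbor of $u_i$ in $\{z_1,\ldots,z_t\}$. The remaining positions inherit the condition from $P_2$ being normal, and $P$ ends at $z_t=u_y$, as required.
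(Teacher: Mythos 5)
Your proof is correct and follows essentially the same route as the paper's: the heart of both arguments is showing that $V(P_1)$ avoids $V(G(i+1,x-1))$ (your invariant $(\ast)$ is exactly this claim, established by induction along $P_1$ rather than by the paper's ``first offending vertex'' argument, with the same case split on the predecessor and the same umbrella contradiction), followed by a position-by-position check of the rightmost-neighbor condition with the identical treatment of the two junctions $u_x\rightarrow u_i$ and $u_i\rightarrow z_1$. The only cosmetic difference is that where the paper invokes Lemma~\ref{auxil-2} together with Observation~\ref{normal-RMN} to place a vertex $u_r$ of $P_1$ with $u_r\notin N(u_x)$ to the right of $u_x$, you reuse $(\ast)$ directly, which is slightly more self-contained.
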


\begin{proof}
We will first prove that $V(P_{1})\subseteq V(G(i+1,j))\setminus
V(G(i+1,x-1))$. Suppose otherwise that $V(P_{1})\cap V(G(i+1,x-1))\neq
\emptyset $, and let $u_{k}$ be the first vertex of $P_{1}$, such that $%
u_{k}\in V(G(i+1,x-1))$. Then $u_{k}$ is not the rightmost vertex of $P_{1}$
in $\sigma $, since $u_{k}<_{\sigma }u_{x}$. Therefore, since $P_{1}$ is a
normal path by assumption, $u_{k}$ is not the first vertex of $P_{1}$, and
thus there exists a previous vertex $u_{\ell }$ of $u_{k}$ in~$P_{1}$, i.e.~$%
u_{\ell }\in N(u_{k})$. Suppose first that $u_{\ell }\in N(u_{x})$. Then,
since $u_{k}<_{\sigma }u_{x}$ and $u_{x}$ is unvisited by $P_{1}$ when $%
u_{\ell }$ is visited, it follows that $u_{k}$ is not the rightmost
unvisited vertex of $N(u_{\ell })\cap V(P_{1})$ in~$\sigma $, when~$P_{1}$
visits $u_{\ell }$. This is a contradiction by Definition~\ref{normal-def},
since $u_{k}$ is the next vertex of $u_{\ell }$ in $P_{1}$ and~$P_{1}$ is a
normal path by assumption. Suppose now that $u_{\ell }\notin N(u_{x})$. Let $%
u_{\ell }<_{\sigma }u_{x}$. Then $u_{\ell }\in V(G(i+1,x-1))$ by Definition~%
\ref{G(i,j)}. This is a contradiction to the assumption that $u_{k}$ is the
first vertex of $P_{1}$, such that $u_{k}\in V(G(i+1,x-1))$. Let $%
u_{x}<_{\sigma }u_{\ell }$, i.e.~$u_{k}<_{\sigma }u_{x}<_{\sigma }u_{\ell }$%
. Note that $u_{k}\notin N(u_{x})$ by Definition~\ref{G(i,j)}, since $%
u_{k}\in V(G(i+1,x-1))$. Thus the vertices $u_{k},u_{x},u_{\ell }$ build an
umbrella in $\sigma $, since $u_{\ell }\in N(u_{k})$, $u_{k}\notin N(u_{x})$%
, and $u_{\ell }\notin N(u_{x})$, which is a contradiction. Therefore $%
V(P_{1})\cap V(G(i+1,x-1))=\emptyset $, i.e.~$V(P_{1})\subseteq
V(G(i+1,j))\setminus V(G(i+1,x-1))$.

Since $V(P_{1})\subseteq V(G(i+1,j))\setminus V(G(i+1,x-1))$ by the previous
paragraph and ${V(P_{2})\subseteq V(G(i+1,x-1))}$ by assumption, it follows
that ${V(P_{1})\cap V(P_{2})=\emptyset}$. Recall now that ${u_{k}\in N(u_{i})%
}$ for every ${u_{k}\in V(P_{2})\subseteq V(G(i+1,x-1))}$ by Lemma~\ref%
{u_i-G(i+1,x-1)}. Furthermore, recall that~${V(P_{1})\subseteq
V(G(i+1,j))\subseteq V(G(i,j))}$ by Observation~\ref{G(i,j)-u_i} and that~${%
V(P_{2})\subseteq V(G(i+1,x-1))\subseteq V(G(i,j))}$ by Lemma~\ref%
{G(i+1,x-1)}. Therefore, since $u_{i}\in V(G(i,j))$ and $u_{x}\in N(u_{i})$
by assumption, it follows that $P=(P_{1},u_{i},P_{2})$ is a path of $G(i,j)$%
. Moreover~$u_{y}$ is the last vertex of $P$, since $u_{y}$ is the last
vertex of $P_{2}$ by assumption.

In the following we prove that $P$ is normal. To this end, first let $\sigma
_{1}=\sigma |_{P_{1}}$ be the restriction of the ordering $\sigma $ on the
vertices of the path $P_{1}$ and let $\sigma _{1}^{\prime }=$RMN$(\sigma
_{1})$. Then the ordering of the vertices of $V(P_{1})$ in $P_{1}$ coincides
with the ordering $\sigma _{1}^{\prime }$ by Observation~\ref{normal-RMN}.
Note that $\sigma _{1}$ is an umbrella-free ordering, as a restriction of
the umbrella-free ordering $\sigma $.

Note now that the first vertex $u_{\ell }$ of $P$ is also the first vertex
of $P_{1}$, since ${P=(P_{1},u_{i},P_{2})}$. Moreover, $u_{\ell }$ is the
rightmost vertex of $P_{1}$ in $\sigma $, since $P_{1}$ is normal by
assumption. Furthermore, note that $u_{k}<_{\sigma }u_{x}\leq _{\sigma
}u_{\ell }$ for every $u_{k}\in V(P_{2})\cup \{u_{i}\}$. Therefore, $u_{\ell
}$ is also the rightmost vertex of $P$ in $\sigma $. Let $u_{r}$ and $%
u_{r^{\prime }}$ be two consecutive vertices of $P_{1}$, i.e.~$u_{r^{\prime
}}$ is the rightmost unvisited vertex of $N(u_{r})\cap V(P_{1})$ in~$\sigma $%
, when $P_{1}$ visits $u_{r}$. We will prove that $u_{r^{\prime }}$ is also
the rightmost unvisited vertex of $N(u_{r})\cap V(P)$ in~$\sigma $, when $P$
visits $u_{r}$. Suppose otherwise that $u_{k}\neq u_{r^{\prime }}$ is the
rightmost unvisited vertex of $N(u_{r})\cap V(P)$ in $\sigma $, when $P$
visits $u_{r}$. Then in particular $u_{r^{\prime }}<_{\sigma }u_{k}$ and $%
u_{k}\in N(u_{r})$. If $u_{k}\in V(P_{1})$, then $u_{k}$ would be also the
rightmost unvisited vertex of $N(u_{r})\cap V(P_{1})$ in $\sigma $, when $%
P_{1}$ visits $u_{r}$, which is a contradiction.

Therefore $u_{k}\in V(P_{2})\cup \{u_{i}\}\subseteq \{u_{i},u_{i+1},\ldots
,u_{x-1}\}$, and thus in particular $u_{k}<_{\sigma }u_{x}$. 
Suppose that $u_{r}\in N(u_{x})$. Then, since $u_{k}<_{\sigma }u_{x}$ and $u_{x}$ is
unvisited when $P$ visits $u_{r}$, it follows that $u_{k}$ is not the
rightmost unvisited vertex of $N(u_{r})\cap V(P)$ in~$\sigma $ when $P$
visits~$u_{r}$, which is a contradiction to the assumption on $u_{k}$.
Thus $u_{r}\notin N(u_{x})$. Recall that $u_{x}$ is the last
vertex of $P_{1}$ by assumption. Therefore, $u_{r}$ appears before $u_{x}$
in $P_{1}$, and thus $u_{r}<_{\sigma _{1}^{\prime }}u_{x}$ as we proved
above, where $\sigma _{1}=\sigma |_{P_{1}}$ and $\sigma _{1}^{\prime }=$RMN$%
(\sigma _{1})$. Therefore, since $u_{r}\notin N(u_{x})$ and $\sigma _{1}$ is
an umbrella-free ordering, it follows by Lemma~\ref{auxil-2} that $%
u_{x}<_{\sigma _{1}}u_{r}$, i.e.~$u_{x}<_{\sigma }u_{r}$. That is, $%
u_{k}<_{\sigma }u_{x}<_{\sigma }u_{r}$. Recall that $u_{k}\in V(P_{2})\cup
\{u_{i}\}$. First let $u_{k}\in V(P_{2})\subseteq V(G(i+1,x-1))$. Then $%
u_{k}\notin N(u_{x})$ by Definition~\ref{G(i,j)}. Therefore, since also $%
u_{r}\notin N(u_{x})$ and $u_{k}\in N(u_{r})$, the vertices $%
u_{k},u_{x},u_{r}$ build an umbrella in $\sigma $, which is a contradiction.
Now let $u_{k}=u_{i}$. Then~$u_{k}=u_{i}<_{\sigma }u_{r^{\prime }}$, since~$%
u_{r^{\prime }}\in V(P_{1})\subseteq V(G(i+1,j))$. Thus $u_{k}$ is not the
rightmost unvisited vertex of $N(u_{r})\cap V(P)$ in~$\sigma $, when $P$
visits $u_{r}$, which is a contradiction to the assumption on $u_{k}$.
Therefore, for any two consecutive vertices $u_{r},u_{r^{\prime }}$ of $%
P_{1}$, $u_{r^{\prime }}$ is the rightmost unvisited vertex of~$N(u_{r})\cap
V(P)$ in~$\sigma $, when $P$ visits~$u_{r}$.

Recall that $V(P_{2})\subseteq V(G(i+1,x-1))$ by assumption, and thus $%
u_{k}\notin N(u_{x})$ for every vertex~$u_{k}\in V(P_{2})$. Therefore, $%
u_{i} $ is the rightmost unvisited vertex of $N(u_{x})\cap V(P)$ in $\sigma $%
, when~$P$ visits $u_{x}$ (i.e.~the last vertex of $P_{1}$). Note that
exactly the vertices of $V(P_{2})$ are the unvisited vertices of $V(P)$,
when $P$ visits $u_{i}$. Moreover, recall that $P_{2}$ is a normal path and
that~$u_{k}\in N(u_{i})$ for every $u_{k}\in V(P_{2})\subseteq V(G(i+1,x-1))$
by Lemma~\ref{u_i-G(i+1,x-1)}. Therefore, the first vertex of $P_{2}$ is
also the rightmost unvisited vertex of $N(u_{i})\cap V(P)$ in $\sigma $,
when $P$ visits $u_{i}$. Consider now any pair of consecutive vertices $%
u_{r},u_{r^{\prime }}$ of $P_{2}$. Then, $u_{r^{\prime }}$ is the rightmost
unvisited vertex of $N(u_{r})\cap V(P_{2})$ in $\sigma $ (resp.~of $%
N(u_{r})\cap V(P)$ in $\sigma $), when $P_{2}$ (resp.~$P$) visits~$u_{r}$.
Therefore,~$P$ is a normal path. This completes the proof of the lemma.
\end{proof}

\begin{notation}
\label{longest-normal-with-end-vertex-notation}
Let ${G=(V,E)}$ be a cocomparability graph and ${\sigma=(u_{1},u_{2},\ldots,u_{n},u_{n+1})}$ 
be an LDFS umbrella-free ordering of~${V\cup \{u_{n+1}\}}$, where~$u_{n+1}$ is a dummy isolated vertex. 
Let~${i,j \in \{1,2,\ldots,n\}}$ be a pair of indices, let~$u_{k} \in V(G(i,j))$, and let~$P$ be a normal path of~$G(i,j)$. 
For simplicity of presentation, we will say in the following that 
``$P$ is a longest normal path of~$G(i,j)$ with~$u_{k}$ as its last vertex'' 
if $P$ has the greatest number of vertices among those normal paths of $G(i,j)$ that have $u_{k}$ as their last vertex.
\end{notation}

\begin{lemma}
\label{P1,P2-normal}Let ${G=(V,E)}$ be a cocomparability graph and ${\sigma
=(u_{1},u_{2},\ldots ,u_{n},u_{n+1})}$ be an LDFS umbrella-free ordering of $%
{V\cup \{u_{n+1}\}}$, where $u_{n+1}$ is a dummy isolated vertex. Let~$P$ be
a longest normal path of~$G(i,j)$ with $u_{y}\neq u_{i}$ as its last vertex
and let $P=(P_{1},u_{i},P_{2})$. Let~$u_{x}$ be the last vertex of $P_{1}$.
Then $P_{1}$ is a longest normal path of $G(i+1,j)$ with $u_{x}$ as its last
vertex and $P_{2}$ is a longest normal path of $G(i+1,x-1)$ with $u_{y}$ as
its last vertex.
\end{lemma}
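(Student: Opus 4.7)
The plan is to prove both assertions by an exchange argument, using Lemma~\ref{P=(P1,ui,P2)-normal} as a constructive black box. I first show that $P_1$ and $P_2$ are themselves normal paths of $G(i+1,j)$ and $G(i+1,x-1)$ with the claimed endpoints, and then argue that replacing either half by a strictly longer competitor would, via Lemma~\ref{P=(P1,ui,P2)-normal}, produce a normal path of $G(i,j)$ ending at $u_y$ that is strictly longer than $P$, contradicting its longestness.

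The claim for $P_1$ is straightforward. The containment $V(P_1)\subseteq V(G(i+1,j))$ follows because $V(P_1)$ is disjoint from $\{u_i\}\cup V(P_2)$ and is contained in $V(G(i,j))$, hence in $V(G(i,j))\setminus\{u_i\}=V(G(i+1,j))$ by Observation~\ref{G(i,j)-u_i}. Normality is inherited from $P$: the first vertex of $P_1$ coincides with that of $P$, and is therefore rightmost in $V(P)\supseteq V(P_1)$; at every subsequent step of $P_1$, the rightmost-neighbor-in-remaining rule restricts correctly, because any better candidate within $V(P_1)$ would have been a better candidate within $V(P)$, contradicting $P$'s normality.

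The claim for $P_2$ is the main technical challenge, with the crux being the containment $V(P_2)\subseteq V(G(i+1,x-1))$. I split this into two subclaims. Subclaim~(a), that $V(P_2)\cap N(u_x)=\emptyset$, follows from the normality of $P$ at the step $u_x\to u_i$: $u_i$ must be the rightmost vertex of $N(u_x)\cap(\{u_i\}\cup V(P_2))$ in $\sigma$, but every vertex of $V(P_2)$ has $\sigma$-index at least $i+1$ and so lies strictly right of $u_i$, whence no vertex of $V(P_2)$ can be a neighbor of $u_x$. Subclaim~(b), that every vertex of $V(P_2)$ has $\sigma$-index at most $x-1$, is the delicate part: supposing some $u_k\in V(P_2)$ with $u_k>_\sigma u_x$ and (by~(a)) $u_k\notin N(u_x)$, I would combine the umbrella-free property on the triple $(u_x,u_k,u_{v_1})$, where $u_{v_1}$ is the first vertex of $P$ and hence rightmost in $V(P)$, with the non-adjacencies along $P_1$ forced by $P$'s normality (for instance, $u_x$ being the rightmost unvisited neighbor of $u_{v_{m-1}}$ forces $u_{v_{m-1}}u_k\notin E$), to derive a contradiction; the LDFS property of $\sigma$, via existence of $d$-vertices for good triples, supplies the extra leverage needed when umbrella-freeness alone is insufficient. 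Once~(a) and~(b) are in hand, Lemma~\ref{u_i-G(i+1,x-1)} gives $V(P_2)\subseteq N(u_i)$, so the first vertex of $P_2$, chosen by $P$ as the rightmost element of $N(u_i)\cap V(P_2)$, is in fact rightmost in $V(P_2)$; the remaining normality conditions are directly inherited from $P$, certifying that $P_2$ is a normal path of $G(i+1,x-1)$ ending at $u_y$.

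With both halves certified, optimality is a clean exchange. If a normal path $P_1'$ of $G(i+1,j)$ ending at $u_x$ satisfied $|P_1'|>|P_1|$, then applying Lemma~\ref{P=(P1,ui,P2)-normal} to $(P_1',P_2)$ would yield a normal path $P'=(P_1',u_i,P_2)$ of $G(i,j)$ ending at $u_y$ with $|P'|>|P|$, contradicting the longestness of $P$. A symmetric argument with a hypothetical longer $P_2'$ handles the second assertion. The principal obstacle in this plan is subclaim~(b); the rest is a routine assembly.
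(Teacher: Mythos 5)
Your overall architecture matches the paper's proof exactly: decompose, certify that $P_1$ and $P_2$ are normal paths of $G(i+1,j)$ and $G(i+1,x-1)$ respectively, then run the exchange argument through Lemma~\ref{P=(P1,ui,P2)-normal}. Your treatment of $P_1$, your subclaim~(a), the deduction $V(P_2)\subseteq N(u_i)$ via Lemma~\ref{u_i-G(i+1,x-1)}, and the final optimality exchange are all correct and coincide with the paper.

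The genuine gap is exactly where you flag it: subclaim~(b), the assertion that $u_k <_{\sigma} u_x$ for every $u_k \in V(P_2)$. You do not prove it; you only sketch a plan involving umbrella-freeness on the triple $(u_x,u_k,u_{v_1})$ together with $d$-vertices of good triples, and it is not clear this plan closes. In particular, umbrella-freeness on $(u_x,u_k,u_{v_1})$ gives nothing unless $u_xu_{v_1}\in E$, which need not hold, so you would be forced into an induction back along $P_1$ locating consecutive vertices that straddle $u_k$ in $\sigma$ --- in effect re-proving a known statement from scratch. The paper closes this step in two lines without any appeal to the LDFS property: since $P$ is normal, Observation~\ref{normal-RMN} says the order of $V(P)$ along $P$ is exactly RMN$(\sigma|_{V(P)})$; as $u_k$ is visited after $u_x$ in this RMN ordering and $u_ku_x\notin E$ (your subclaim~(a)), Lemma~\ref{auxil-2} applied to the umbrella-free ordering $\sigma|_{V(P)}$ and its RMN forces $u_k <_{\sigma} u_x$ (otherwise $u_k$ would have to precede $u_x$ in the RMN ordering). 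That is the missing idea: recognize the suffix-visiting order of a normal path as an RMN ordering and invoke the already-available Lemma~\ref{auxil-2}, rather than attempting a bespoke umbrella/$d$-vertex argument. With that substitution your proof becomes complete and identical in substance to the paper's.
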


\begin{proof}
Note that $P$ has at least two vertices, since $u_{y},u_{i}\in V(P)$.
Therefore, since $u_{i}<_{\sigma }u_{k}$ for every $u_{k}\in V(P)\setminus
\{u_{i}\}$, it follows that $u_{i}$ is not the first vertex of $P$, 
and thus $P_{1}\neq \emptyset $. 
Note that $V(P_{1})\subseteq V(G(i+1,j))$, i.e.~$V(P_{1})\subseteq V(G(i,j))\setminus
\{u_{i}\}$ by Observation~\ref{G(i,j)-u_i}, since $u_{i}\notin V(P_{1})$.
Furthermore, since $P$ is a normal path by assumption and $P_{1}$ is a
subpath of $P$, it follows that~$P_{1}$ is a normal path of $G(i+1,j)$ with $%
u_{x}$ as its last vertex.

Let $\sigma ^{\prime }=\sigma |_{P}$ be the restriction of the ordering $%
\sigma $ on the vertices of the path $P$ and let~$\sigma ^{\prime \prime }=$%
RMN$(\sigma ^{\prime })$. Then, since $P$ is a normal path by assumption,
the ordering of the vertices of~$V(P)$ in $P$ coincides with the ordering $%
\sigma ^{\prime \prime }$ by Observation~\ref{normal-RMN}.

We will now prove that $V(P_{2})\subseteq V(G(i+1,x-1))$. Consider an
arbitrary vertex $u_{k}\in V(P_{2})$ and note that $u_{i}<_{\sigma }u_{k}$.
Note that both $u_{i}$ and $u_{k}$ are unvisited by $P$ when $u_{x}$ is
visited. Suppose that $u_{k}\in N(u_{x})$. Then, since $u_{i}<_{\sigma
}u_{k} $, it follows that $u_{i}$ is not the rightmost unvisited vertex of $%
N(u_{x})\cap V(P)$ in $\sigma $, when $P$ visits $u_{x}$. Thus, since $P$ is
normal by assumption, it follows that $u_{i}$ is not the next vertex of $%
u_{x}$ in $P$, which is a contradiction. Therefore $u_{k}\notin N(u_{x})$
for every $u_{k}\in V(P_{2})$. Recall by the previous paragraph that the
ordering of the vertices of $V(P)$ in $P$ coincides with the ordering $%
\sigma ^{\prime \prime }=$RMN$(\sigma ^{\prime })$, where $\sigma ^{\prime
}=\sigma |_{P}$. Therefore, since $u_{k}\in V(P_{2})$ appears after $u_{x}$
in $P$, it follows that $u_{x}<_{\sigma ^{\prime \prime }}u_{k}$. Thus,
since $u_{k}\notin N(u_{x})$, Lemma~\ref{auxil-2} implies that $%
u_{k}<_{\sigma ^{\prime }}u_{x}$, i.e.~$u_{k}<_{\sigma }u_{x}$. Summarizing, 
$u_{k}\notin N(u_{x})$ and $u_{i}<_{\sigma }u_{k}<_{\sigma }u_{x}$ for every 
$u_{k}\in V(P_{2})$, and thus $V(P_{2})\subseteq V(G(i+1,x-1))$ by
Definition~\ref{G(i,j)}.

Since ${u_{i}<_{\sigma }u_{x}}$ and ${u_{x}\in N(u_{i})}$, Lemma~\ref%
{u_i-G(i+1,x-1)} implies that ${u_{k}\in N(u_{i})}$ for every ${u_{k}\in
V(P_{2})\subseteq V(G(i+1,x-1))}$. Therefore, since $P=(P_{1},u_{i},P_{2})$
is a normal path by assumption, the first vertex of $P_{2}$ is the rightmost
vertex of $V(P_{2})$ in $\sigma $. Consider now any two consecutive vertices 
$u_{r},u_{r^{\prime }}$ of $P_{2}$. Then, since $P=(P_{1},u_{i},P_{2})$ is a
normal path, it follows that~$u_{r^{\prime }}$ is the rightmost unvisited
vertex of $N(u_{r})\cap V(P)$ (resp.~of of $N(u_{r})\cap V(P_{2})$) in $%
\sigma $, when $P$ (resp.~$P_{2}$) visits~$u_{r}$. Therefore, since also $%
u_{y}$ is the last vertex of $P$ by assumption, $P_{2}$ is a normal path of $%
V(G(i+1,x-1))$ with $u_{y}$ as its last vertex.

Suppose now that there exists a normal path $P_{1}^{\prime }$ (resp.~$%
P_{2}^{\prime }$) of $G(i+1,j)$ (resp.~of~$G(i+1,x-1)$) with $u_{x}$ 
(resp.~with~$u_{y}$) as its last vertex, such that $|P_{1}^{\prime }|>|P_{1}|$
(resp.~$|P_{2}^{\prime }|>|P_{2}|$). Then, Lemma~\ref{P=(P1,ui,P2)-normal}
implies that $P^{\prime }=(P_{1}^{\prime },u_{i},P_{2})$ (resp.~$P^{\prime
}=(P_{1},u_{i},P_{2}^{\prime })$) is a normal path of $G(i,j)$ with~$u_{y}$
as its last vertex, such that $|P^{\prime }|>|P|$. This is a contradiction
to the assumption that $P$ is a longest normal path of~$G(i,j)$ with $u_{y}$
as its last vertex. Therefore, there exists no such path $P_{1}^{\prime }$
(resp.~$P_{2}^{\prime }$), and thus $P_{1}$ (resp.~$P_{2}$) is a longest
normal path of $G(i+1,j)$ (resp.~of~$G(i+1,x-1)$) with $u_{x}$ (resp.~with $%
u_{y}$) as its last vertex. This completes the proof of the lemma.
\end{proof}

\subsection{The algorithm\label{algorithm-subsec}}

In the following we present our Algorithm~\ref{alg-lp} that computes a
longest path of a given cocomparability graph $G$. For simplicity of the
presentation of this algorithm, we make the following convention.

\vspace{-0.03cm}
\begin{notation}
\label{alg-notation}Let ${G=(V,E)}$ be a cocomparability graph and ${\sigma
=(u_{1},u_{2},\ldots ,u_{n},u_{n+1})}$ be an LDFS umbrella-free ordering of $%
{V\cup \{u_{n+1}\}}$, where $u_{n+1}$ is a dummy isolated vertex. For every
pair of indices~${i,j\in \{1,2,\ldots ,n\}}$ and for every vertex~${u_{k}\in
V(G(i,j))}$, we denote by~$P(u_{k};i,j)$ a longest normal path of $G(i,j)$
with $u_{k}$ as its last vertex and by $\ell (u_{k};i,j)$ the length $%
|P(u_{k};i,j)|$ of $P(u_{k};i,j)$, i.e.~the number of vertices of $P(u_{k};i,j)$.
\end{notation}

We first give a brief overview of Algorithm~\ref{alg-lp}. 
It takes as input a cocomparability graph~${G=(V,E)}$ and an umbrella-free ordering $\pi$ of~$V$. 
As a preprocessing step, the algorithm applies LDFS$^{+}$ (i.e.~Algorithm~\ref{ldfs+-alg}) 
to the ordering $\pi$ in order to compute an LDFS umbrella-free ordering $\sigma$ of $V$. 
In the sequel, the dynamic programming part of Algorithm~\ref{alg-lp} builds a 3-dimensional table 
where for every pair of indices $i,j \in\{1,2, \ldots, n\}$ and for every vertex $u_k \in V(G(i,j))$, 
the entry $P(u_k;i,j)$ stores the ordered vertices of a longest normal path of~$G(i,j)$
with $u_k$ as its last vertex; the length of this path (i.e.~$|P(u_k;i,j)|$) is stored in~$\ell(u_k; i,j)$.  
Thus a longest normal path of $G=G(1,n)$ will be stored in $P(u_k;1,n)$ for a~$u_k$ that 
maximizes $\ell(u_y;1,n)$ among all $u_y \in V$ (cf.~line~\ref{alg-lp-17}). 
Note that from the \emph{for}-loops in lines~\ref{alg-lp-3} and~\ref{alg-lp-4} 
of the algorithm and the obvious inductive hypothesis, it may be assumed during 
the $\{i,j\}$th iteration of the body of the dynamic programming (cf.~lines~\ref{alg-lp-7}-\ref{alg-lp-16}), 
that the values $P(u_{k^{\prime}}; i^{\prime},j^{\prime})$ and $\ell(u_{k^{\prime}}; i^{\prime},j^{\prime})$ 
have been correctly computed at previous iterations of the algorithm, for every~$i^{\prime} > i$.

On entry to the initialization phase for a particular $\{i,j\}$ (cf.~lines~\ref{alg-lp-7}-\ref{alg-lp-6}), 
we want initial paths that do not use vertex $u_i$ as an intermediate vertex.  
For a path with $u_y \in V(G(i+1,j))$ as its last vertex, such a path is stored in $P(u_y;i+1,j)$.  
For a path with $u_i$ itself as its last vertex, we are only interested in
the case where $u_i \in V(G(i,j))$ and, if so, we initialize $P(u_i;i,j)=(u_i)$.  

Then, we enter the induction step phase of the algorithm (cf.~lines~\ref{alg-lp-9}-\ref{alg-lp-16}) 
and determine how the entries of the table can be extended with the inclusion of vertex $u_i$ (in the case where~${u_i \in V(G(i,j))}$). 
First we note that, if a normal path $P$ of $G(i,j)$ that includes $u_{i}$ has at least two vertices, 
then $P$ must involve a vertex $u_x \in V(G(i+1,j))$ with $u_{x} u_{i} \in E$.  
For such a vertex $u_x$, there are two different roles that it can play in getting a possibly longer normal path to be stored in the table.  
First, adding the edge $u_{x} u_{i}$ to a longest normal path of $G(i+1,j)$ with~$u_x$ as its last vertex, 
might create a normal path with $u_i$ as its last vertex, which is longer than the one currently stored in $P(u_i;i,j)$.  
This situation is covered in lines~\ref{alg-lp-11}-\ref{alg-lp-12b}.  
The other role that vertex $u_x$ might play is to serve as the ``glue'' between a normal path $P_1$ of $G(i+1,j)$ 
with~$u_x$ as its last vertex and a normal path $P_2$ of $G(i+1,x-1)$ with some vertices~$u_{y^{\prime}}$ and~$u_y$ as its 
first and last vertex, respectively ($u_{y^{\prime}}$ and $u_y$ are not necessarily distinct).  
Note that these two paths would be ``glued'' together via the two 
edges $u_{x} u_{i}$ and $u_{i} u_{y^{\prime}}$.  This situation is covered in lines~\ref{alg-lp-13}-\ref{alg-lp-16}.

The next main theorem of this section proves that Algorithm~\ref{alg-lp}
computes in~$O(n^{4})$ time a longest path of a cocomparability graph with~$n$ vertices.

\begin{algorithm}[htb]
\caption{Computing a longest path of a cocomparability graph} \label{alg-lp}
\begin{algorithmic}[1]
\REQUIRE{A
cocomparability graph $G=(V,E)$ with $|V|=n$ and an umbrella-free ordering $\pi$ of~$V$}
\ENSURE{A longest path of $G$}

\medskip

\STATE{Run an LDFS$^{+}$ preprocessing step to $\pi$ to obtain the LDFS umbrella-free ordering $\sigma$} \label{alg-lp-1}
\STATE{Add an isolated dummy vertex $u_{n+1}$ to $\sigma$; denote $\sigma=\{u_{1},u_{2},\ldots,u_{n},u_{n+1}\}$} \label{alg-lp-2}

\medskip

\FOR{$i=n$ downto $1$} \label{alg-lp-3}
     \FOR{$j=i$ to $n$} \label{alg-lp-4}
     
\medskip

          \FOR{every $u_{y}\in V(G(i+1,j))$} \label{alg-lp-7}
               \STATE{$P(u_{y};i,j) \leftarrow P(u_{y};i+1,j)$; $\ell(u_{y};i,j) \leftarrow \ell(u_{y};i+1,j)$} \label{alg-lp-8} 
               \COMMENT{initialization}
          \ENDFOR
          
          \IF{$u_{i}\in V(G(i,j))$} \label{alg-lp-5}
               \STATE{$P(u_{i};i,j) \leftarrow (u_{i})$; $\ell(u_{i};i,j)\leftarrow 1$} \label{alg-lp-6} 
               \COMMENT{initialization}
          \ENDIF

\medskip
          
          \FOR{every $u_{x} \in V(G(i+1,j))$} \label{alg-lp-9}
          \vspace{0.1cm}
          
               \IF{$u_{i}\in V(G(i,j))$ and $u_{x}\in N(u_{i})$} \label{alg-lp-10}
               \vspace{0.1cm}
               
                    \IF{$\ell(u_{i};i,j) < \ell(u_{x};i+1,j) + 1$} \label{alg-lp-11}
                         \STATE{$P(u_{i};i,j) \leftarrow (P(u_{x};i+1,j),u_{i})$}  \label{alg-lp-12a}
                         \STATE{$\ell(u_{i};i,j)\leftarrow \ell(u_{x};i+1,j) + 1$} \label{alg-lp-12b}
                    \ENDIF
                    
               \vspace{0.1cm}
               
                    \FOR{every ${u_{y} \in V(G(i+1,x-1))}$} \label{alg-lp-13}
                         \IF{$\ell(u_{y};i,j) < \ell(u_{x};i+1,j) + \ell(u_{y};i+1,x-1) + 1$} \label{alg-lp-14}
                              \STATE{$P(u_{y};i,j) \leftarrow (P(u_{x};i+1,j),u_{i},P(u_{y};i+1,x-1))$} \label{alg-lp-15}
                              \STATE{$\ell(u_{y};i,j) \leftarrow \ell(u_{x};i+1,j) + \ell(u_{y};i+1,x-1) + 1$} \label{alg-lp-16}
                         \ENDIF
                    \ENDFOR
                    
               \ENDIF
          \ENDFOR
     \ENDFOR
\ENDFOR

\medskip

\RETURN{a path $P(u_{k};1,n)$ with $\ell(u_{k};1,n) = \max\{\ell(u_{y};1,n)\ | \ u_{y}\in V\}$} \label{alg-lp-17}
\end{algorithmic}
\end{algorithm}

\begin{theorem}
\label{correctness}
For a given cocomparability graph $G=(V,E)$ with $n$
vertices, Algorithm~\ref{alg-lp} computes a longest path $P$ of $G$ in $O(n^{4})$ time.
\end{theorem}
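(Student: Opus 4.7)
My plan is to prove correctness by reverse induction on $i$, then bound the running time. The induction hypothesis asserts that at the start of the iteration for index $i$, for every $i' > i$, every $j' \geq i'$, and every $u_{k'} \in V(G(i', j'))$, the stored value $P(u_{k'}; i', j')$ is a longest normal path of $G(i', j')$ ending at $u_{k'}$. The base case $i = n$ is trivial, since each $G(n, j)$ is either empty or equals the single vertex $u_n$, for which $P(u_n; n, j) = (u_n)$ is correctly initialized.

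For the inductive step at $(i, j)$, I would argue in two directions. First, every path the algorithm ever stores is a normal path of $G(i, j)$: the initialization in line~\ref{alg-lp-8} is valid because $V(G(i+1, j)) \subseteq V(G(i, j))$ by Observation~\ref{G(i,j)-u_i}; the update in lines~\ref{alg-lp-11}--\ref{alg-lp-12b} is valid by Observation~\ref{superpath-normal-obs}; and the concatenation in lines~\ref{alg-lp-13}--\ref{alg-lp-16} is valid by Lemma~\ref{P=(P1,ui,P2)-normal}. Second, every longest normal path $P^*$ of $G(i, j)$ ending at some $u_k$ is considered, which I would prove by a three-case analysis. \emph{Case (a):} if $u_i \notin V(P^*)$, then $P^*$ is a normal path of $G(i+1, j)$ ending at $u_k$, and line~\ref{alg-lp-8} together with the inductive hypothesis records its length. \emph{Case (b):} if $u_i = u_k$, write $P^* = (P_1, u_i)$; Observation~\ref{subpath-normal-obs} gives that $P_1$ is a normal path of $G(i+1, j)$ ending at some $u_x \in N(u_i)$, and a short exchange argument in the spirit of Lemma~\ref{P1,P2-normal} shows $P_1$ must be a longest such, so lines~\ref{alg-lp-11}--\ref{alg-lp-12b} recover the optimal length. \emph{Case (c):} if $u_i \in V(P^*) \setminus \{u_k\}$, then $u_i$ is an intermediate vertex, since the first vertex of a normal path is the rightmost in $\sigma$ while $u_i$ is the leftmost in $V(G(i, j))$; writing $P^* = (P_1, u_i, P_2)$ and applying Lemma~\ref{P1,P2-normal}, the decomposition into a longest normal path $P_1$ of $G(i+1, j)$ ending at some $u_x$ and a longest normal path $P_2$ of $G(i+1, x-1)$ ending at $u_k$ is exactly what the double loop in lines~\ref{alg-lp-13}--\ref{alg-lp-16} considers. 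Combining the cases closes the induction, and Theorem~\ref{normal-thm} then implies that line~\ref{alg-lp-17} returns a longest path of $G = G(1, n)$.

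For the running time, the LDFS$^{+}$ preprocessing step costs $O(n + m \log n)$. The outer loops contribute $O(n^2)$ iterations over $(i, j)$, and within each the inner double loop over $u_x$ and $u_y$ contributes $O(n^2)$ work, provided we store each $P(u_k; i, j)$ implicitly by back-pointers so that the assignments in lines~\ref{alg-lp-12a} and~\ref{alg-lp-15} run in constant time and each comparison in lines~\ref{alg-lp-11} and~\ref{alg-lp-14} is $O(1)$. This gives $O(n^4)$ overall, and the final longest path can be reconstructed in $O(n)$ time by chasing the back-pointers. I expect the main obstacle to be case (c), where I must invoke Lemma~\ref{P1,P2-normal} to guarantee that the decomposition of $P^*$ actually lives inside the prescribed subgraphs $G(i+1, j)$ and $G(i+1, x-1)$; this is precisely where the umbrella-free structure is essential, since the non-neighborhood-of-$u_{j+1}$ constraints in the definition of $G(i, j)$ must be carefully tracked so that the two subpaths are legitimate paths of the subgraphs whose tables have already been filled.
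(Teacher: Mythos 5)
Your proposal is correct and follows essentially the same route as the paper's proof: reverse induction on $i$ with the same invariant, the same use of Observations~\ref{G(i,j)-u_i},~\ref{subpath-normal-obs},~\ref{superpath-normal-obs} and Lemmas~\ref{P=(P1,ui,P2)-normal},~\ref{P1,P2-normal} for the soundness and completeness directions, the appeal to Theorem~\ref{normal-thm} at the end, and the same back-pointer implementation giving the $O(n^4)$ bound. Your explicit three-case split (paths avoiding $u_i$, ending at $u_i$, or using $u_i$ as an intermediate vertex) is exactly the case structure the paper's argument traverses, just stated more compactly there.
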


\begin{proof}
In the first line, Algorithm~\ref{alg-lp} applies an LDFS$^{+}$
preprocessing step to the given umbrella-free ordering $\pi $ of $V$. The
resulting LDFS ordering $\sigma $ is again umbrella-free~\cite{Corneil-MPC}.
In the second line, the algorithm adds a dummy isolated vertex $u_{n+1}$ to $%
\sigma $ to the right of all other vertices of $V$, i.e.~we consider without
loss of generality that ${\sigma =(u_{1},u_{2},\ldots ,u_{n},u_{n+1})}$.
Note that $\sigma $ remains an LDFS umbrella-free ordering, also after the
addition of $u_{n+1}$ to it. Furthermore, note that any longest path of $G$
is also maximal (cf.~Definition~\ref{maximal-path}). Therefore, in order to
compute a \emph{longest} path of $G$, it suffices by Theorem~\ref{normal-thm}
to compute a \emph{longest normal} path of $G$ (with respect to the ordering 
$\sigma $), i.e.~a longest path among the normal ones.

In lines~\ref{alg-lp-3}-\ref{alg-lp-16}, Algorithm~\ref{alg-lp} iterates
for every pair of indices~${i,j\in \{1,2,\ldots ,n\}}$ and computes a path $%
P(u_{k};i,j)$ and a value $\ell (u_{k};i,j)$ for every vertex~${u_{k}\in
V(G(i,j))}$. We will prove by induction on $i$ that $P(u_{k};i,j)$ is indeed
a longest normal path of $G(i,j)$ with $u_{k}$ as its last vertex and that $%
\ell (u_{k};i,j)=|P(u_{k};i,j)|$.

For the induction basis, let $i=n$; in this case also $j=n$ (cf.~line~\ref%
{alg-lp-4}). Furthermore~${u_{i}\notin N(u_{i+1})}$ for $i=n$, since $u_{n+1}
$ is an isolated vertex, and thus the algorithm executes line~\ref{alg-lp-6}. 
In this line, the algorithm computes the path $P(u_{n};n,n)=(u_{n})$,
which is clearly the only (and thus also the longest) normal path of $G(n,n)$
with $u_{n}$ as its last vertex. Then, since~$G(n+1,n)=\emptyset $
(cf.~Definition~\ref{G(i,j)}), lines~\ref{alg-lp-8} 
and~\ref{alg-lp-10}-\ref{alg-lp-16} are not executed at all. This proves the induction basis.

For the induction step, let ${i\leq n-1}$. Consider the iteration of the
algorithm for any ${j\in \{i,i+1,\ldots ,n\}}$. First, the algorithm
initializes in lines~\ref{alg-lp-7}-\ref{alg-lp-6} the values~$P(u_{k};i,j)$
and~$\ell (u_{k};i,j)$ for every $u_{k}\in V(G(i,j))$. Then, it updates
these values if necessary in lines~\ref{alg-lp-9}-\ref{alg-lp-16}. For every
vertex $u_{y}\in V(G(i+1,j))$, the induction hypothesis implies that~$%
P(u_{y};i+1,j)$ is a longest normal path of $G(i+1,j)$ with $u_{y}$ as its
last vertex and that~${\ell (u_{y};i+1,j)=|P(u_{y};i+1,j)|}$. Recall by
Observation~\ref{G(i,j)-u_i} that $G(i+1,j)=G(i,j)\setminus \{u_{i}\}$.
Therefore, for every $u_{y}\in V(G(i+1,j))$, the value $\ell (u_{y};i+1,j)$
is the greatest length of a normal path $P$ of $G(i,j)$ with $u_{y}$ as its
last vertex, such that $P$ does not include $u_{i}$. The algorithm
initializes in line~\ref{alg-lp-8} for every $u_{y}\in V(G(i+1,j))$ the
values $P(u_{y};i,j)$ and $\ell (u_{y};i,j)$ as~$P(u_{y};i+1,j)$ and $\ell
(u_{y};i+1,j)$, respectively. Furthermore, in the case where $u_{i}\in
V(G(i,j))$, the algorithm initializes in line~\ref{alg-lp-6} the values $%
P(u_{i};i,j)=(u_{i})$ and $\ell (u_{i};i,j)=1$. Otherwise, in the case where 
$u_{i}\notin V(G(i,j)) $, the algorithm does not execute line~\ref{alg-lp-6}, 
since the values $P(u_{i};i,j)$ and $\ell (u_{i};i,j)$ can not be defined
(cf.~Notation~\ref{alg-notation}).

Suppose that $u_{i}\in V(G(i,j))$; then the path $P(u_{i};i,j)$ is well
defined (cf.~Notation~\ref{alg-notation}). Recall by Observation~\ref%
{superpath-normal-obs} that for any normal path $P_{1}$ of $G(i+1,j)$ with a
vertex $u_{x}$ as its last vertex, such that $u_{x}\in N(u_{i})$, the path $%
(P_{1},u_{i})$ is a normal path of $G(i,j)$. Conversely, recall by
Observation~\ref{subpath-normal-obs} that the path $P(u_{i};i,j)\setminus
\{u_{i}\}$ (if not empty) is a normal path of~$G(i+1,j)$. Therefore, in
order to update the value of $P(u_{i};i,j)$, the algorithm correctly
computes in lines~\ref{alg-lp-11}-\ref{alg-lp-12b} the paths~$%
(P(u_{x};i+1,j),u_{i})$ for every $u_{x}\in V(G(i+1,j))$, such that $%
u_{x}\in N(u_{i})$, and keeps the longest of them.

Recall now that for every $u_{y}\in V(G(i+1,j))$, the value $\ell
(u_{y};i+1,j)$ is the greatest length of a normal path $P$ of $G(i,j)$ with $%
u_{y}$ as its last vertex, such that $P$ does not include $u_{i}$.
Furthermore, recall that for every $u_{y}\in V(G(i+1,j))$ the values $%
P(u_{y};i,j)$ and $\ell (u_{y};i,j)$ have been initialized in line~\ref{alg-lp-8} 
as $P(u_{y};i+1,j)$ and $\ell (u_{y};i+1,j)$, respectively. In
the case where $u_{i}\in V(G(i,j))$ (cf.~line~\ref{alg-lp-10}), the
algorithm executes lines~\ref{alg-lp-14}-\ref{alg-lp-16} for every $u_{x}\in
V(G(i+i,j))$ with $u_{x}\in N(u_{i})$ and for every $u_{y}\in V(G(i+1,x-1))$. 
For such a pair of vertices $u_{x},u_{y}$, recall by Lemma~\ref%
{P=(P1,ui,P2)-normal} that $(P(u_{x};i+1,j),u_{i},P(u_{y};i+1;x-1))$
is a normal path of~$G(i,j)$ with~$u_{y}$ as its last vertex. Conversely,
let $P$ be a normal path of~$G(i,j)$ with $u_{y}\neq u_{i}$ as its last
vertex, let $P=(P_{1},u_{i},P_{2})$, and let $u_{x}$ be the last vertex of $%
P_{1}$. Then Lemma~\ref{P1,P2-normal} implies that $P_{1}={P(u}_{x};i+1,j)$
and $P_{2}={P(u}_{y};i+1,x-1)$. Therefore, the algorithm correctly computes
during the multiple executions of lines~\ref{alg-lp-14}-\ref{alg-lp-16} the
greatest length $\ell $ of a normal path $P$ of $G(i,j)$ with $u_{y}$ as its
last vertex, such that $P$ includes $u_{i}$. If at least one of these paths
has greater length than the initial value $\ell (u_{y};i,j)$ that has been
computed in line~\ref{alg-lp-8}, the algorithm keeps in~$P(u_{y};i,j)$ the
longest among these paths. This completes the induction step.

Therefore, for every pair of indices $i,j\in \{1,2,\ldots ,n\}$ (such that ${%
G(i,j)\neq \emptyset }$) and every ${u_{k}\in V(G(i,j))}$, the algorithm
correctly computes after the execution of lines~\ref{alg-lp-1}-\ref%
{alg-lp-16} a longest normal path $P(u_{k};i,j)$ of $G(i,j)$ with $u_{k}$ as
its last vertex and its length $\ell (u_{k};i,j)=|P(u_{k};i,j)|$. Finally,
the algorithm computes and returns in line~\ref{alg-lp-17} the longest among
the paths $P(u_{y};1,n)$, where $u_{y}\in V(G(1,n))$. Since $G(1,n)=G$, the
returned path is a longest normal path of $G$, and thus also a longest path
of $G$ by Theorem~\ref{normal-thm}.

Before establishing the running time of the algorithm, we discuss some
implementation details.  First of all, to avoid the search of the table indicated
in line~\ref{alg-lp-17}, the length and location of the current longest path
would be maintained throughout the algorithm.  Secondly, we have to
state exactly what is stored in each entry of the table.  Following standard
dynamic programming techniques, we do not store the path itself, but
rather, an indication of how the path is built.  In particular, each of 
lines~\ref{alg-lp-8},~\ref{alg-lp-6},~\ref{alg-lp-12a}, and~\ref{alg-lp-15}
gives ``instructions'' on how to build the current longest path using 
information that has already been computed.  At the end of the
algorithm a simple recursive unwinding of these ``instructions''
yields a longest path in the given graph.

Regarding the running time of Algorithm~\ref{alg-lp}, we first examine the
dynamic programming part of the algorithm.  Lines~\ref{alg-lp-14}-\ref{alg-lp-16} 
lie in four loops of $O(n)$ iterations each.  
Following the implementation details described above, each step in
lines~\ref{alg-lp-14}-\ref{alg-lp-16} can be executed in constant time, yielding an $O(n^4)$
bound on the dynamic programming portion of the algorithm.  Since the other
parts of the algorithm, even if we have to confirm that we have an umbrella-free ordering
of $V$, can easily be implemented to run in $O(n^3)$ time, 
the total running time of Algorithm~\ref{alg-lp} is~$O(n^{4})$. 
This completes the proof of the theorem.
\end{proof}

\begin{remark}
\label{remark-simpler-interval-alg}
Recall by Observation~\ref{interval-umbrella-free-obs} that an
I-ordering $\sigma$ of any interval graph $G$ is also an umbrella-free
ordering.  Furthermore, it is easy to see that $\sigma$ is also an
LDFS ordering. Thus, since lines~\ref{alg-lp-2}-\ref{alg-lp-16} of Algorithm~\ref{alg-lp} 
are applied to such an ordering $\sigma$, and since interval graphs are strictly included in
cocomparability graphs~\cite{Brandstaedt99}, Theorem~\ref{correctness}
implies that Algorithm~\ref{alg-lp} (which is essentially simpler than the
algorithm presented in~\cite{longest-int-algo}) also computes with the same time
complexity a longest path of an interval graph.
\end{remark}

\vspace{-0.52cm}
\section{Conclusion and further research\label{conclusions}}

In this paper we provided the first polynomial algorithm for the longest
path problem on cocomparability graphs. This algorithm is based on a dynamic
programming approach that is applied to a Lexicographic Depth First Search
(LDFS) characterizing ordering of the vertices of cocomparability graphs.  Our results
provide hope that this general dynamic programming approach can be used
in a more general setting, leading to efficient algorithms for the longest
path problem on even greater classes of graphs. Furthermore, more
interestingly, in addition to the recent results presented in~\cite%
{Corneil-MPC}, our results also provide evidence that cocomparability graphs
present an interval graph structure when they are considered using 
an LDFS characterization ordering of their vertices, which may lead to other new and more efficient
combinatorial algorithms. Many interesting open questions are raised by the
results in this paper:

\begin{itemize}
\vspace{-0.16cm}
\item There are now two path problems where the interval graph algorithm can
be modified by the addition of an LDFS$^+$ preprocessing sweep to solve the
same problem on cocomparability graphs. Are there other such problems? (Note
that the Hamiltonian cycle algorithm for interval graphs does not seem to extend to 
cocomparability graphs in this way.)\vspace{-0.16cm}

\item More importantly, is there an underlying ``interval structure'' in
cocomparability graphs exposed by an LDFS$^+$ sweep of an umbrella-free ordering?\vspace{-0.16cm}

\item There are many applications of multi-sweeping of LBFS 
(see~\cite{CorneilLBFS09} for a recent result; for a survey see~\cite{Corneil-LBFS-survey}). 
Is anything gained by multi-sweeping LDFS?\vspace{-0.16cm}

\item Are there other applications of LDFS?\vspace{-0.16cm}

\item Can the new Hamiltonian path, minimum path cover, and longest path
algorithms for cocomparability graphs be extended to asteroidal triple-free
(AT-free) graphs, or failing that to graph classes that lie between
cocomparability graphs and AT-free graphs~\cite{AT-subfamilies01-SIDMA}? 
The complexity of all Hamiltonicity problems is still open for AT-free graphs.\vspace{-0.16cm}

\item Can LDFS be implemented to run in linear time?\vspace{-0.16cm}
\end{itemize}

{\small 
\bibliographystyle{abbrv}
\bibliography{ref-lcocomp}

\begin{thebibliography}{10}

\bibitem{Arikati90}
S.~R. Arikati and C.~P. Rangan.
\newblock Linear algorithm for optimal path cover problem on interval graphs.
\newblock {\em Information Processing Letters}, 35(3):149--153, 1990.

\bibitem{Bertossi83}
A.~A. Bertossi.
\newblock Finding {Hamiltonian} circuits in proper interval graphs.
\newblock {\em Information Processing Letters}, 17(2):97--101, 1983.

\bibitem{Brandstaedt99}
A.~Brandst\"{a}dt, V.~B. Le, and J.~P. Spinrad.
\newblock {\em Graph classes: a survey}.
\newblock Society for Industrial and Applied Mathematics (SIAM), 1999.

\bibitem{Bult02}
R.~W. Bulterman, F.~W. van~der Sommen, G.~Zwaan, T.~Verhoeff, A.~J.~M. van
  Gasteren, and W.~H.~J. Feijen.
\newblock On computing a longest path in a tree.
\newblock {\em Information Processing Letters}, 81(2):93--96, 2002.

\bibitem{Corneil-LBFS-survey}
D.~G. Corneil.
\newblock Lexicographic breadth first search – a survey.
\newblock In {\em Proceedings of the 30rd International Workshop on
  Graph-Theoretic Concepts in Computer Science (WG)}, pages 1--19, 2004.

\bibitem{Corneil-MPC}
D.~G. Corneil, B.~Dalton, and M.~Habib.
\newblock The application of {Lexicographic Depth First Search} to minimum path
  cover problems in cocomparability graphs.
\newblock In preparation.

\bibitem{AT-subfamilies01-SIDMA}
D.~G. Corneil, E.~K"{o}hler, S.~Olariu, and L.~Stewart.
\newblock Linear orderings of subfamilies of {AT}-free graphs.
\newblock {\em SIAM Journal on Discrete Mathematics}, 20(1):105--118, 2006.

\bibitem{Corneil-LDFS08}
D.~G. Corneil and R.~M. Krueger.
\newblock A unified view of graph searching.
\newblock {\em SIAM Journal on Discrete Mathematics}, 22(4):1259--1276, 2008.

\bibitem{CorneilLBFS09}
D.~G. Corneil, S.~Olariu, and L.~Stewart.
\newblock The {LBFS} structure and recognition of interval graphs.
\newblock {\em SIAM Journal of Discrete Mathematics}, 23(4):1905--1953, 2009.

\bibitem{Damasc89}
P.~Damaschke.
\newblock The {Hamiltonian} circuit problem for circle graphs is {NP}-complete.
\newblock {\em Information Processing Letters}, 32(1):1--2, 1989.

\bibitem{Damasc93}
P.~Damaschke.
\newblock Paths in interval graphs and circular arc graphs.
\newblock {\em Discrete Mathematics}, 112(1-3):49--64, 1993.

\bibitem{Damasc92}
P.~Damaschke, J.~S. Deogun, D.~Kratsch, and G.~Steiner.
\newblock Finding {Hamiltonian} paths in cocomparability graphs using the bump
  number algorithm.
\newblock {\em Order}, 8:383--391, 1992.

\bibitem{Deogun94}
J.~S. Deogun and G.~Steiner.
\newblock Polynomial algorithms for {Hamiltonian} cycle in cocomparability
  graphs.
\newblock {\em SIAM Journal on Computing}, 23(3):520--552, 1994.

\bibitem{GaJoTarj79}
M.~R. Garey, D.~S. Johnson, and R.~E. Tarjan.
\newblock The planar {Hamiltonian} circuit problem is {NP}-complete.
\newblock {\em SIAM Journal on Computing}, 5:704--714, 1976.

\bibitem{Golumbic04}
M.~C. Golumbic.
\newblock {\em Algorithmic graph theory and perfect graphs (Annals of Discrete
  Mathematics, Vol.~57)}.
\newblock North-Holland Publishing Co., 2 edition, 2004.

\bibitem{longest-int-algo}
K.~Ioannidou, G.~B. Mertzios, and S.~D. Nikolopoulos.
\newblock The longest path problem has a polynomial solution on interval
  graphs.
\newblock {\em Algorithmica}.
\newblock To appear. A technical report can be found in
  http://sunsite.informatik.rwth-aachen.de/Publications/AIB/2009/2009-11.pdf.

\bibitem{ItaiPap82}
A.~Itai, C.~H. Papadimitriou, and J.~L. Szwarcfiter.
\newblock Hamiltonian paths in grid graphs.
\newblock {\em SIAM Journal on Computing}, 11:676--686, 1982.

\bibitem{Karger97}
D.~R. Karger, R.~Motwani, and G.~D.~S. Ramkumar.
\newblock On approximating the longest path in a graph.
\newblock {\em Algorithmica}, 18:82--98, 1997.

\bibitem{Keil85}
J.~M. Keil.
\newblock Finding {Hamiltonian} circuits in interval graphs.
\newblock {\em Information Processing Letters}, 20:201--206, 1985.

\bibitem{KratschStewart93}
D.~Kratsch and L.~Stewart.
\newblock Domination on cocomparability graphs.
\newblock {\em SIAM Journal on Discrete Mathematics}, 6(3):400--417, 1993.

\bibitem{Krueger05}
R.~Krueger.
\newblock {\em Graph Searching}.
\newblock PhD thesis, Department of Computer Science, University of Toronto,
  2005.

\bibitem{Muller96}
H.~M{"u}ller.
\newblock Hamiltonian circuits in chordal bipartite graphs.
\newblock {\em Discrete Mathematics}, 156(1-3):291--298, 1996.

\bibitem{Naras89}
G.~Narasimhan.
\newblock A note on the {Hamiltonian} circuit problem on directed path graphs.
\newblock {\em Information Processing Letters}, 32(4):167--170, 1989.

\bibitem{Olariu91}
S.~Olariu.
\newblock An optimal greedy heuristic to color interval graphs.
\newblock {\em Information Processing Letters}, 37(1):21--25, 1991.

\bibitem{Rose-LBFS}
D.~J. Rose, R.~E. Tarjan, and G.~S. Lueker.
\newblock Algorithmic aspects of vertex elimination on graphs.
\newblock {\em SIAM Journal on Computing}, 5(2):266--283, 1976.

\bibitem{Spinrad-submitted}
J.~Spinrad.
\newblock Efficient implementation of lexicographic depth first search.
\newblock Submitted for publication.

\bibitem{TakaharaUehara08}
Y.~Takahara, S.~Teramoto, and R.~Uehara.
\newblock Longest path problems on ptolemaic graphs.
\newblock {\em IEICE - IEICE - Transactions on Information and Systems},
  E91-D(2):170--177, 2008.

\bibitem{Uehara04}
R.~Uehara and Y.~Uno.
\newblock Efficient algorithms for the longest path problem.
\newblock In {\em 15th Annual International Symposium on Algorithms and
  Computation (ISAAC)}, pages 871–--883, 2004.

\bibitem{UeharaValiente07}
R.~Uehara and G.~Valiente.
\newblock Linear structure of bipartite permutation graphs and the longest path
  problem.
\newblock {\em Information Processing Letters}, 103(2):71--77, 2007.

\end{thebibliography}
}

\end{document}